\newcommand{\ifarver}[2]{\ifdefined\arxivversion
    #1\else
    #2\fi
}
\def\arxivversion
\newcommand{\revise}[1]{#1}
\pgfplotsset{compat=1.17}
\newcounter{myres}
\newtheorem{definition}[myres]{Definition}
\newtheorem{lemma}[myres]{Lemma}
\newtheorem{proposition}[myres]{Proposition}
\newtheorem{corollary}[myres]{Corollary}
\newtheorem{theorem}[myres]{Theorem}
\newtheorem{remark}[myres]{Remark}
\newtheorem{example}[myres]{Example}
\newcommand{\naturals}{\mathbb{N}}
\newcommand{\rejset}{\mathbf{R}}
\newcommand{\rejcol}{\boldsymbol{\mathcal{R}}}
\newcommand{\evalue}{\mathbf{e}}
\newcommand{\pvalue}{\mathbf{p}}
\newcommand{\ecol}{\mathbf{E}}
\newcommand{\FDP}{\textnormal{FDP}}
\newcommand{\FDR}{\textnormal{FDR}}
\newcommand{\FWER}{\textnormal{FWER}}
\newcommand{\Kn}{\textnormal{Kn}}
\newcommand{\cKn}{\overline{\textnormal{Kn}}}
\newcommand{\BH}{\textnormal{BH}}
\newcommand{\MABH}{\textnormal{MABH}}
\newcommand{\cMABH}{\overline{\textnormal{MABH}}}
\newcommand{\cadaBH}{\overline{\textnormal{adaBH}}}
\newcommand{\adaBH}{\textnormal{adaBH}}
\newcommand{\eBH}{\textnormal{eBH}}
\newcommand{\CeBH}{{\overline{\textnormal{eBH}}}}
\newcommand{\textCeBH}{{\overline{\text{eBH}}}}
\newcommand{\BY}{\textnormal{BY}}
\newcommand{\cBY}{{\overline{\textnormal{BY}}}}
\newcommand{\textcBY}{{\overline{\text{BY}}}}
\newcommand{\Su}{\textnormal{Su}}
\newcommand{\cSu}{{\overline{\textnormal{Su}}}}
\newcommand{\textcSu}{{\overline{\text{Su}}}}
\newcommand{\cBH}{{\overline{\textnormal{BH}}}}
\newcommand{\cKnockoffs}{{\overline{\textnormal{Knockoffs}}}}
\newcommand{\SC}{\textnormal{SC}}
\newcommand{\ER}{\textnormal{ER}}
\newcommand{\ind}{\mathbf{1}}
\newcommand{\F}{\textnormal{f}}
\newcommand{\G}{\textnormal{g}}
\newcommand{\expect}{\mathbb{E}}
\newcommand{\prob}{\mathbb{P}}
\newcommand{\Fcal}{\mathcal{F}}
\newcommand{\Rcal}{\mathcal{R}}
\newcommand{\pdist}{\mathrm{P}}
\title{
Bringing Closure to False Discovery Rate Control: A General Principle for Multiple Testing
}
\author{
Ziyu Xu
\thanks{Department of Statistics and Data Science, Carnegie Mellon University, USA. \texttt{xzy@cmu.edu}.}
\and
Aldo Solari
\thanks{Department of Economics, Ca' Foscari University of Venice, Italy. \texttt{aldo.solari@unive.it}.}\
\and
Lasse Fischer
\thanks{Competence Center for Clinical Trials Bremen, University of Bremen, Germany. \texttt{fischer1@uni-bremen.de}.}
\and
Rianne de Heide
\thanks{Department of Applied Mathematics, University of Twente, and Centrum Wiskunde \& Informatica, Amsterdam, The Netherlands \texttt{r.deheide@utwente.nl}.}
\and
Aaditya Ramdas
\thanks{Departments of Statistics and Data Science, and Machine Learning, Carnegie Mellon University, USA. \texttt{aramdas@cmu.edu}.}
\and
Jelle Goeman
\thanks{Department of Biomedical Data Sciences, Leiden University Medical Center, the Netherlands. \texttt{j.j.goeman@lumc.nl}.}
}
\date{\today}
\begin{document}
\footnotetext{This paper merges and subsumes the two parallel works of \citet{xu_bringing_closure_2025} and \citet{goeman_e-partitioning_principle_2025}.\newline}

\maketitle
\begin{abstract}
We present a novel principle that is necessary and sufficient for multiple testing methods controlling an expected loss. This principle asserts that every such method is a special case of a general closed testing procedure based on e-values.
It generalizes the Closure Principle, known to underlie all methods controlling familywise error and tail probabilities of false discovery proportions, to a large class of error rates --- in particular to the false discovery rate (FDR).
By writing existing methods as special cases of this procedure, we can achieve uniform improvements, as we demonstrate for the e-Benjamini-Hochberg and Benjamini-Yekutieli procedures and the self-consistent method of Su (2018).
We also show that methods derived using our novel e-Closure Principle generally control their error rate not just for one rejection set but simultaneously over many, allowing researchers post hoc flexibility.

\revise{Moreover, because all multiple testing methods for the expected loss error metrics covered by our framework are derived from the same procedure, researchers may even choose the error metric post hoc.}
Under certain conditions, this flexibility even extends to post hoc choice of the nominal error rate.
 \end{abstract}
 
\tableofcontents

\section{Introduction}

For family-wise error rate (FWER) control, it has long been known \citep{sonnemann1982allgemeine, sonnemann_general_solutions_2008} that there is a single universal principle that is necessary and sufficient for the construction of valid methods. This Closure Principle says that any method that controls FWER is a special case of a closed testing procedure \citep{marcus1976closed}. The Closure Principle was later challenged by the seemingly more powerful Partitioning Principle \citep{finner2002partitioning}, but \citet{goeman2021only} have shown that the two principles are equivalent. \citet{genovese2006exceedance} and \citet{goeman2011multiple} have extended closed testing to control of false discovery proportions (FDPs), and \citet{goeman2021only} showed that all methods controlling a quantile of the distribution of FDP are either equivalent to a closed testing procedure or are dominated by one, extending the Closure Principle to all methods controlling FDP.

The Closure Principle is useful for method development in FWER and FDP in several ways. In the first place, it reduces the complex task of constructing a multiple testing method to the simpler task of choosing hypothesis tests for intersection hypotheses. After making this choice, method construction reduces to a discrete optimization problem. Moreover, the Closure Principle helps to handle complex situations such as restricted combinations, i.e., logical implications between hypotheses \citep{shaffer1986modified, goeman2021only}. Finally, methods constructed using closed testing often allow for some user flexibility, permitting researchers to modify some aspects of the multiple testing procedure post hoc without compromising error control \citep{goeman2011multiple}.

However, the known universality of the Closure Principle only covers methods controlling a tail probability of a function of the number or proportion of errors made \citep{goeman2021only}. Methods controlling expectations of such functions, such as per-family error rate (PFER) and, most importantly, false discovery rate \citep[FDR;][]{benjamini_controlling_false_1995} are outside the scope of the Closure Principle. We fill this gap by extending the Closure Principle to all expectation-based error rates, including FDR control.

For the construction of FDR control methods, \citet{blanchard_two_simple_2008} have formulated two quite general sufficient conditions, self-consistency and dependence control, under which, if both hold, FDR control is guaranteed. The self-consistency condition simplifies the proof of well-known FDR controlling procedures such as BH \citep{benjamini_controlling_false_1995} and BY \citep{benjamini_control_false_2001} and has been seminal for the creation of many others.
An important recent example is the eBH procedure \citep{wang_false_discovery_2022}, which controls FDR on the basis of per-hypothesis e-values rather than p-values.
The concept of the e-value, a random variable with expectation at most 1 under the null hypothesis, works well with FDR since both concepts are expectation-based. Several other authors \citep{ignatiadis2024values, lee_boosting_e-bh_2024a, ren2024derandomised, ignatiadis_asymptotic_compound_2025, li_note_e-values_2025} have pointed out useful connections between FDR and e-values, and we elaborate on the connections of this work with recent progress in this area in \Cref{sec:related-work}.

However, self-consistency is sufficient but not necessary for FDR control, and there is some indication that it is not optimal. When symmetric, self-consistent methods control FDR at the more stringent level $\pi_0\alpha$ rather than at $\alpha$, but \citet{solari2017minimally} have shown that any method controlling FDR at $\pi_0\alpha$ can be uniformly improved by a method that controls FDR at $\alpha$, and which itself is not self-consistent.
Some methods constructed using self-consistency, e.g., eBH and BY, have a reputation for low power \citep{lee_boosting_e-bh_2024a, xu_more_powerful_2023}.
Furthermore, methods based on self-consistency
\citep{lei2018adapt, lei2021general, katsevich2020simultaneous, katsevich2023filtering} generally lack the full post hoc user flexibility offered by closed testing \citep{goeman2011multiple}. However, \cite{fischer2024online}, using earlier results by \citet{su2018fdr}, showed how post hoc selection can be achieved with FDR control using self-consistency, although at a cost to power.

In this paper, we present a generalization of the Closure Principle, the e-Closure Principle, that covers all expectation-based error rates as well as tail-probability-based ones. In particular, we show that \revise{the e-Closure Principle} is both necessary and sufficient for FDR control. Like the Closure Principle, its generalization facilitates the task of constructing a multiple testing procedure by reducing it to the simpler task of choosing appropriate local e-values for all intersection hypotheses. Once these e-values are chosen, the multiple testing procedure reduces to a discrete optimization problem. Like the Closure Principle for FWER and FDP, the generalized e-Closure Principle offers some post hoc user flexibility for FDR control methods in a natural way. It can be used to formulate uniform improvements of existing methods, including the eBH and BY procedures and the self-consistent procedure of \citet{su2018fdr}. We show the power increase on a suite of real datasets that results from our improvement of the BY procedure, which we call the $\cBY$ (closed BY) procedure, in \Cref{tab:by_real_data_discoveries}.
\ifarver{\footnote{Code accompanying this paper can be found at \url{https://github.com/neilzxu/eclosure}.}}{}
Our improvements are strict in the sense that they will always make as many discoveries as the original procedure, and often more, as is the case in the aforementioned table. We also provide a comprehensive table with results on other methods (such as BH) in \Cref{sec: real data}.
Finally, because the e-Closure Principle is a single principle underlying all error rates, researchers may even choose the error rate post hoc, switching, for example, from FDR control to FWER control if the signal is larger than expected.

\revise{In earlier versions of this paper, we highlighted as an important open problem whether and how BH, by far the most popular FDR control method, can be uniformly improved using the e-Closure Principle. This has now been answered by \citet{goeman_uniform_improvement_2026}, who proposes a closed BH procedure that uniformly improves BH under PRDS and under a weaker sufficient condition. Thus, we are optimistic that many more improvements to existing procedures and new procedures remain to be developed using the frameworks formulated in this work.}

The outline of the paper is as follows. We present a general definition of multiple testing procedures that control an expected loss in \Cref{sec: MT}. Our definition explicitly allows for simultaneity and we make the link between simultaneity and post hoc control explicit. Next, we present the general e-Closure Principle and several of its properties in \Cref{sec: principle}, including how it recovers the classical Closure Principle. We then apply the e-Closure Principle to several well-known multiple testing methods for e-values (\Cref{sec: e-combining}) and p-values (\Cref{sec: p-combining}) to show improvements in both power and simultaneity. 
\ifarver
{
\Cref{sec: bh and variants} presents the e-Closure formulation for the Benjamini-Hochberg procedure and its adaptive variants, \revise{showing that the particular e-collections considered there do not improve their power and provide no nontrivial simultaneity except when BH rejects all hypotheses.}
}{}
In \Cref{sec: flexible} we explore the various types of post hoc flexibility enabled by e-Closure including generalizations of existing multiple testing procedures to their simultaneous versions. We discuss how logical relationships between hypotheses can be used to increase power \citep{shaffer1986modified} in \Cref{sec: Shaffer}. We describe computational shortcuts for the methods we have formulated throughout the paper in \Cref{sec: computation} and highlight some real data applications of our new methods in \Cref{sec: real data}. \revise{The novel methods have been implemented in the \texttt{eClosure} and \texttt{eclosure} packages in R and Python, respectively.}

\begin{table}[htbp]\ifarver{}{\fontsize{10}{12}\selectfont}

\caption{
    Number of discoveries made by the \revise{\citet{benjamini_control_false_2001}} procedure and our new procedure $\cBY$ (closed BY) across multiple simulated and real datasets and FDR thresholds $\alpha$. A bold number indicates that $\cBY$ makes more discoveries than BY.}
\label{tab:by_real_data_discoveries}
    \centering
\begin{tabular}{lcccccc}
\toprule
Dataset & \makecell{\# of hypotheses} & \multicolumn{2}{c}{$\alpha = 0.05$} & \multicolumn{2}{c}{$\alpha = 0.1$} & Reference \\ 
\cmidrule(lr){3-4} \cmidrule(lr){5-6}
 &  & BY & $\cBY$ & BY & $\cBY$ &  \\
\midrule
APSAC & 15 & 3 & 3 & 3 & \textbf{5} & \citet{benjamini_controlling_false_1995} \\
NAEP & 34 & 6 & \textbf{8} & 8 & \textbf{11} & \citet{benjamini2000adaptive} \\
PADJUST & 50 & 12 & \textbf{15} & 17 & \textbf{20} & Ex. in \texttt{p.adjust} R function \\
PVALUES & 4,289 & 129 & \textbf{145} & 225 & \textbf{275} & Data in \texttt{fdrtool} R package \\
VANDEVIJVER & 4,919 & 614 & \textbf{677} & 779 & \textbf{866} & \citet{goeman2014multiple} \\
GOLUB & 7,128 & 617 & \textbf{648} & 743 & \textbf{799} & \citet{efron_computer_age_2021} \\
\bottomrule
\end{tabular}
\end{table}

\section{Contributions}
The novel contributions of this paper are the following.

\begin{enumerate}
    \item We formulate and prove the e-Closure Principle for the general multiple testing problem with an error rate introduced in \Cref{sec: MT}. As a result, we prove that a method controls an expected error loss, such as FDR or FWER, if and only if it is a special case of the e-Closure procedure (Section \ref{sec: principle}). Further, we show that the classical Closure Principle \citep{marcus1976closed,sonnemann_general_solutions_2008} is a special case of the e-Closure Principle in \Cref{sec: classical closure}.
    \item We apply the e-Closure Principle to FDR control and use it to construct substantial uniform improvements of the eBH, BY, and Su methods. Our improvements allow greater flexibility in the rejection sets and often permit rejection of larger sets than the original methods at the same nominal FDR level (Sections \ref{sec: e-combining} and \ref{sec: p-combining}). Polynomial time algorithms for these methods are given in Section \ref{sec: computation}. We summarize the methods we improve in \Cref{tab:contributions}. \revise{Further, we analyze the BH procedure and its minimally adaptive variant \citep{solari2017minimally} through the lens of e-Closure in this paper. The e-collections we study do not add power and, apart from BH rejects all hypotheses, do not add nontrivial simultaneity; a different construction developed by \citet{goeman_uniform_improvement_2026} uniformly improves BH using the e-Closure Principle.}
    \begin{table}[ht]\ifarver{}{\fontsize{10}{12}\selectfont}
    
    \centering
    \begin{tabular}{|l|l|c|c|}
    \hline
    \textbf{Previous Method} & \textbf{Our Procedure} & \textbf{Power} & \textbf{Simultaneity} \\
    \hline
    eBH \citep{wang_false_discovery_2022} & $\CeBH$ (\Cref{sec: e-combining}) & \checkmark & \checkmark \\
    \citet{benjamini_control_false_2001} & $\cBY$ (\Cref{sec: p-combining})& \checkmark & \checkmark \\
    \citet{su2018fdr} & $\cSu $ (\Cref{sec: p-combining})& \checkmark & \checkmark \\
    \citet{benjamini_controlling_false_1995} & $\cBH$ (\Cref{sec: bh and variants})& = & = \\
    \citet{storey2004strong} & adaptive $\cBH$ (\Cref{sec: bh and variants}) &= & = \\
    Knockoffs  \citep{barber_controlling_2015} & $\cKnockoffs$ (\Cref{sec: post hoc sets}) & = &\checkmark \\
    \hline
    \end{tabular}

    \caption{A summary of the improvements on existing methods we make through the e-Closure Principle \revise{in this paper}. The \checkmark refers to an improvement, and $=$ refers to the new procedure being no worse than the original. Simultaneity refers to having a larger set of discovery sets simultaneously available for post hoc selection, and power refers to additionally making more discoveries (see \Cref{def: simultaneity improvement}).}
    \label{tab:contributions}
    \end{table}

    \item We generalize the concept of multiple testing to include simultaneous discovery sets (\Cref{sec: MT}), and show that such simultaneity comes for free (without reducing $\alpha$-levels) as a consequence of the e-Closure Principle.
    \item We show that the e-Closure Principle also characterizes procedures that allow researchers post hoc flexibility in the choice of discovery sets, error rates, and nominal levels in \Cref{sec: flexible}. As an example of choosing error rates, researchers can switch to FWER control if they reject a sufficiently small set (\Cref{sec: post hoc loss}), instead of FDR control. Second, we show that, for some FDR controlling methods, researchers may choose post hoc the $\alpha$-level at which FDR is controlled (\Cref{sec: alpha}).
    \item We extend the possibility of exploiting restricted combinations, i.e., logical relationships between hypotheses \citep{shaffer1986modified} from only FWER control to general error rates, in particular FDR.
    \item We also demonstrate clear improvements of our methods over their non-closure versions on a suite of real datasets of p-values and e-values in \Cref{sec: real data}.
\end{enumerate}

\section{A general definition of multiple testing methods} \label{sec: MT}

We will first give a general definition of a multiple testing method that encompasses all methods controlling an expected loss. This general definition will define the scope of the generalized e-Closure Principle that we will present in Section \ref{sec: principle}. Our definition encompasses many well-known error rates, including FWER, FDR, and per-family error rate. Moreover, we will emphasize simultaneity in our general definition, introducing this notion for error rates for which simultaneous control was not yet known, such as FDR.

We use set-centered notation throughout the paper. We denote all sets by capital letters (e.g., $R$), collections of sets by calligraphic letters (e.g., $\mathcal{R}$), and scalars by lowercase letters (e.g., $r$). We use the shorthand $[i] \coloneqq \{1,2,\ldots, i\}$. The power set of $R$ is $2^R$. Random variables and random sets are shown in boldface (e.g., $\mathbf{e}$, $\mathbf{R}$).
Direct inequalities between random variables should always be understood to hold almost surely\revise{, unless stated otherwise}.
We denote $\min(a,b)$ and $\max(a,b)$ as $a \wedge b$ and $a \vee b$, respectively. Further, any omitted proofs of results will be deferred to \Cref{sec: deferred proofs}.

Suppose our data is given by a random variable $\mathbf{X}:\Omega\to S$, where $(\Omega, \mathcal{F})$ and $(S, \mathcal{A})$ are measurable spaces. Let the statistical model $M$ be a set of probability measures on $(\Omega, \mathcal{F})$, e.g., $M = \{\mathrm{P}_\theta\colon \theta \in \Theta\}$ in a parametric model, and denote $\expect_\mathrm{P}$ as the expectation under $\mathrm{P}$. Hypotheses are restrictions to the model $M$, and therefore subsets of $M$. Suppose that we have \revise{$m$ hypotheses, $H_1, \ldots, H_m$, of interest, where each is a subset of the model $M$}. For every $\mathrm{P} \in M$, some hypotheses are true and others false; let $N_\mathrm{P} = \{i\colon \mathrm{P} \in H_i\}$ be the index set of the \revise{true (null) hypotheses} for $\mathrm{P}$.

If we choose to reject the hypotheses with indices in $R \subseteq [m]$, i.e., to claim them as discoveries, then the discoveries in $R \cap N_\mathrm{P}$ are false. Generally, we want to choose $R$ in such a way that the occurrence of such false discoveries is limited. We do this by bounding, in expectation, some nonnegative error function $\F_N(R)$ of $N$ and $R$. We assume that $\F_N(\emptyset) = 0$ for all $N \in 2^{[m]}$, and $\F_\emptyset(R) = 0$ for all $R \in 2^{[m]}$, i.e., no error is incurred if there are no nulls or we make no discoveries. Both conditions are satisfied by all error metrics in \Cref{tab:error_rates}. 

A multiple testing method for error rate $\mathrm{ER}_\F$ is a procedure that produces a random set $\mathbf{R}$.
Such a method controls an error rate defined by $\F_N(\mathbf{R})$ at level $\alpha$ if its construction guarantees that $\F_N(\mathbf{R})$ is bounded in expectation by $\alpha$.
This is formally stated in \Cref{def: single ER}. 

\begin{definition}[Multiple testing method] \label{def: single ER}
$\mathbf{R} \subseteq [m]$ controls $\mathrm{ER}_\F$ at level $\alpha>0$ if, for every $\mathrm{P} \in M$,
\[
\expect_\mathrm{P}\big(\F_{N_\mathrm{P}}(\mathbf{R})\big) \leq \alpha.
\]
\end{definition}

Our results cover the error rates cataloged in \Cref{tab:error_rates}, such as false discovery rate (FDR) and family-wise error rate (FWER), and more general families considered in the multiple testing literature \citep{malek_sequential_multiple_2017a,maurer_optimal_test_2023b}. We also discuss in \Cref{sec: more MT} how our results apply to other error metrics (e.g., mFDR), which are not explicitly the expectation of an error function.
{
\begin{table}[ht]
\ifarver{}{\fontsize{10}{12}\selectfont}
\centering
\begin{tabular}{|l|l|}
\hline
\textbf{Error Rate ($\ER_\F$)} & \textbf{Loss Function ($\F_N(R)$)} \\
\hline
False discovery rate (FDR) & $\FDP_{N}(R)\coloneqq |R \cap N| / (|R|\vee 1)$ \\
\hline
$k$-Familywise error rate (k-FWER), w/ $k=1$ giving FWER & $\ind\{|N\cap R|\geq k\}$ \\
\hline
Per-family error rate (PFER) & $|N\cap R|$ \\
\hline
False discovery exceedance (FDX-$\gamma$) for some $\gamma\in [0,1)$ & $\ind\{\FDP_N(R)> \gamma\}$  \\
\hline
Average error rate (AER) & $|N\cap R|/(|N| \vee 1)$ \\
\hline
\end{tabular}
\caption{Examples of multiple testing error rates ($\ER_\F$) and their corresponding loss functions $\F_N(R)$ for different error rates.\label{tab:error_rates}}
\end{table}
}

Furthermore, in \Cref{def: simultaneous ER} below, we allow the multiple testing method to return not just a single set $\mathbf{R} \subseteq [m]$, but a collection $\boldsymbol{\mathcal{R}} \subseteq 2^{[m]}$ of such sets.
Since $\F_N(\emptyset) = 0$, we may assume that $\rejcol$ is never empty: if $\rejcol$ controls $\mathrm{ER}_\F$, then so does $\rejcol \cup \{\emptyset\}$.

\begin{definition}[Simultaneous multiple testing method] \label{def: simultaneous ER}
\revise{A random collection of rejection sets} $\boldsymbol{\mathcal{R}} \subseteq 2^{[m]}$ controls $\mathrm{ER}_\F$ simultaneously at level $\alpha$ if, for every $\mathrm{P} \in M$,
$
\expect_\mathrm{P}\Big(\max_{R \in \rejcol} \F_{N_\mathrm{P}}(R)\Big) \leq \alpha.
$
\end{definition}

We note that simultaneous $\ER_\F$ control implies standard $\ER_\F$ control. Hence, we focus on proving validity for simultaneous procedures. Furthermore, for each procedure introduced, we also define its classical version, which outputs the largest discovery set from the simultaneous procedure.
\begin{lemma} \label{lem: duality}
\revise{If a random rejection set $\mathbf{R}$ controls $\ER_\F$ at $\alpha$, then so does the random collection $\boldsymbol{\mathcal{R}} = \{\mathbf{R}\}$;
if a random collection $\boldsymbol{\mathcal{R}}$ controls $\ER_\F$ at $\alpha$, then so does any random rejection set $\mathbf{R}$ satisfying $\mathbf{R} \in \boldsymbol{\mathcal{R}}$ almost surely.}
\end{lemma}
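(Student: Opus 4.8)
The plan is to derive both implications directly from the two definitions, \Cref{def: single ER} and \Cref{def: simultaneous ER}, since the lemma is really a statement about how the maximum appearing in \Cref{def: simultaneous ER} behaves in the two extreme cases: when $\rejcol$ is a singleton, and when we pick out one member of $\rejcol$. No structural property of $\F$ beyond its nonnegativity will be needed.

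For the first implication I would take $\rejcol = \{\mathbf{R}\}$ and observe that the maximum in \Cref{def: simultaneous ER} is then taken over a (random) collection containing exactly one set. Hence, for every outcome $\omega \in \Omega$ and every $\mathrm{P} \in M$, we have $\max_{R \in \rejcol} \F_{N_\mathrm{P}}(R) = \F_{N_\mathrm{P}}(\mathbf{R})$, so the two expectations in the definitions coincide identically. The hypothesis that $\mathbf{R}$ controls $\ER_\F$ at $\alpha$ then gives simultaneous control with no further work.

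For the second implication I would fix any member $\mathbf{R} \in \rejcol$. Because $\mathbf{R}(\omega)$ lies in $\rejcol(\omega)$ for (almost) every outcome, the pointwise bound $\F_{N_\mathrm{P}}(\mathbf{R}) \leq \max_{R \in \rejcol} \F_{N_\mathrm{P}}(R)$ holds almost surely, for each $\mathrm{P} \in M$. Monotonicity of expectation, combined with the simultaneous-control hypothesis, then yields $\expect_\mathrm{P}(\F_{N_\mathrm{P}}(\mathbf{R})) \leq \expect_\mathrm{P}(\max_{R \in \rejcol} \F_{N_\mathrm{P}}(R)) \leq \alpha$, which is exactly the requirement of \Cref{def: single ER}. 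Together the two directions show that single-set control and simultaneous control are interchangeable in the appropriate sense, justifying the ``duality'' label and the paper's strategy of only proving validity for simultaneous procedures.

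The only genuine subtlety, and the step I would treat most carefully, is measurability. In the first direction $\F_{N_\mathrm{P}}(\mathbf{R})$ is measurable since $\mathbf{R}$ is a random set; in the second direction one needs both that $\max_{R \in \rejcol} \F_{N_\mathrm{P}}(R)$ is a well-defined measurable random variable and that the selected member $\mathbf{R}$ is itself a measurable map, i.e.\ a measurable selection from the random collection $\rejcol$. These are part of the implicit standing assumptions on what constitutes a valid (simultaneous) multiple testing method, and once they are granted the pointwise inequalities pass to expectations immediately, so I do not expect any deeper obstacle.
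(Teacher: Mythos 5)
Your proof is correct and is exactly the immediate argument the paper relies on (the lemma is stated without proof there): a singleton collection makes the max collapse to the single value, and membership gives the pointwise bound $\F_{N_\mathrm{P}}(\mathbf{R}) \leq \max_{R \in \rejcol}\F_{N_\mathrm{P}}(R)$, after which monotonicity of expectation finishes both directions. The measurability caveat you flag is a reasonable standing assumption and does not change the substance.
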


Furthermore, we note that simultaneous control is equivalent to \textit{post hoc} control, which allows the user to choose any rejection set $R \in \boldsymbol{\mathcal{R}}$ in an arbitrarily data-dependent way. 
We provide a deeper discussion on the use of procedures with post hoc control in Section~\ref{sec: post hoc sets}.
\revise{For the following lemma, suppose that
$\boldsymbol{\mathcal R}=\mathcal R(\mathbf{X})$, where
$\mathcal R:S\to 2^{2^{[m]}}$ is
$\mathcal A$-measurable and $\mathbf{X}$ includes any auxiliary randomization
used by the procedure.}\begin{lemma}\label{lemma:simult_posthoc}
    We have for all $\mathrm{P}\in M$,
   \begin{align}
  \mathbb{E}_{\mathrm{P}}\left[ \max_{R\in \boldsymbol{\mathcal{R}}} \textnormal{f}_{N_{\mathrm{P}}}(R) \right] =    \revise{\sup_{\mathrm{R}: \mathrm{R}(\mathbf X)\in \boldsymbol{\mathcal{R}} \text{ a.s.}}} \mathbb{E}_{\mathrm{P}}\left[  \textnormal{f}_{N_{\mathrm{P}}}(\mathrm{R}(\mathbf{X})) \right], \label{eq:simult_select}
\end{align}
\revise{where the supremum is over $\mathcal A$-measurable selectors $\mathrm{R}:S\to 2^{[m]}$ satisfying $\mathrm{R}(\mathbf{X})\in\boldsymbol{\mathcal R}$ almost surely.}
\end{lemma}

Though our general definition includes many error rates that have been considered in the literature, it excludes some error rates that are not expectations of some loss $\F_{N_\mathrm{P}}(R)$, such as positive FDR \citep{storey_direct_approach_2002a} and local false discovery rate \citep{efron_large-scale_inference_2010}. However, it does cover procedures performing multiple testing on a data-driven subset of the hypotheses conditionally on the selected set: the final error rate to be controlled, conditional or unconditional, can be written in the form of \Cref{def: single ER} \citep{goeman2024selection}.

\section{The e-Closure Principle} \label{sec: principle}

The central result of this paper is a generalized Closure Principle, the e-Closure Principle, that provides control for any general loss function of the form discussed in \Cref{sec: MT}.
Analogous to the Closure Principle \citep{marcus1976closed, sonnemann_vollstaendigkeitssaetze_fuer_1988,
sonnemann_general_solutions_2008}, this novel principle gives a general recipe for designing multiple testing methods, and recovers the classical Closure Principle (along with all FWER controlling methods) as a special case. After formulating the e-Closure Principle, we will first demonstrate how it recovers the classic Closure Principle, and then proceed with novel results applying the e-Closure Principle to FDR control.

A nonnegative random variable $\mathbf{e}$ is an e-value for a hypothesis $H$ if $\expect_\mathrm{P}(\mathbf{e}) \leq 1$ for all $\mathrm{P} \in H$. A suite of literature has investigated hypothesis testing based on e-values \citep{shafer_testing_betting_2021,grunwals2024safe} rather than on p-values, and an overview of this area can be found in \citet{ramdas_hypothesis_testing_2025}. For a single hypothesis $H$, we may reject $H$ when $\mathbf{e} \geq \alpha^{-1}$, while controlling Type I error at level $\alpha$, since for every $\mathrm{P} \in H$, the following holds by Markov's inequality:
\[
\mathrm{P}(\mathbf{e} \geq \alpha^{-1}) \leq \alpha \cdot \expect_\mathrm{P}(\mathbf{e}) \leq \alpha.
\]
The e-Closure Principle requires an e-value to be defined for every subset of $[m]$.

\begin{definition}
We define an \emph{e-collection} $\mathbf{E} \coloneqq (\mathbf{e}_S)_{S \in 2^{[m]}}$ \citep{shafer_testing_betting_2021,vovk_confidence_discoveries_2023a,xu_post-selection_inference_2022,grunwald_e-posterior_2023,grunwald_neyman-pearson_e-values_2024} to be a collection of nonnegative random variables where $\expect_\pdist(\evalue_{N_\pdist}) \leq 1$ for all $\pdist \in M$, i.e., $\evalue_{N_\pdist}$ is an e-value for each $\pdist \in M$.
\end{definition}
A sufficient but not necessary way to construct an e-collection is to let each $\mathbf{e}_S$ be an e-value for the corresponding $H_S \coloneqq \cap_{i \in S} H_i$ (see \Cref{rem: partitoning}). \revise{We set $\evalue_\emptyset=0$. Its value is immaterial for the e-Closure Principle since the error rate satisfies $\F_\emptyset(R) = 0$, but this convention makes expressions quantified over all $S\in2^{[m]}$ well-defined.}

Define the set of candidate discovery sets for the error metric $\ER_\F$ at level $\alpha$ as
\begin{equation}
\label{eq: e-closure}
\Rcal^{\ER_\F}_\alpha (\mathbf{E})\coloneqq \left\{R \in 2^{[m]}:
\evalue_S \geq \frac{\F_S(R)}{\alpha} \quad \forall {S \in 2^{[m]}}
\right\}.
\end{equation}
To understand why $\Rcal_\alpha^{\ER_\F}(\mathbf{E})$ would control $\ER_\F$ at $\alpha$, note that
by construction, $\F_S(R) \leq \alpha \evalue_S$ for all $S$, so in particular $\F_{N_\pdist}(R)\leq \alpha \evalue_{N_\pdist}$. Therefore $\ER_\F$ is bounded by $\alpha$, the expectation of $\alpha \evalue_{N_\pdist}$.
However, we claim not just that (\ref{eq: e-closure}) controls its error rate ER, but also that every procedure controlling ER is of the form (\ref{eq: e-closure}). This is our first main result, the e-Closure Principle, formulated as Theorem \ref{thm: e-closure}. It says that the class of all procedures of the form (\ref{eq: e-closure}) is complete \citep[Section 1.8]{lehmann2005testing} for the control of error rates of the form $\ER_\F$.

\begin{theorem}[The e-Closure Principle]\label{thm: e-closure}
    $\rejcol$ is \revise{a simultaneous $\ER_\F$ controlling procedure} at level $\alpha$ if and only if there exists an e-collection $\mathbf{E}$ such that $\rejcol \subseteq \Rcal^{\ER_\F}_\alpha(\mathbf{E})$.\end{theorem}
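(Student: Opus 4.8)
The plan is to prove the two implications of \Cref{thm: e-closure} separately. The forward (``if'') direction is essentially the argument already sketched after \eqref{eq: e-closure}, so the real content lies in the reverse (``only if'') direction, where I must manufacture an e-collection out of an arbitrary error-controlling procedure. The key realization is that the quantity appearing in \Cref{def: simultaneous ER}, namely the maximal loss over the collection, is exactly the object I want an e-value to dominate; the construction will therefore simply read an e-collection off of $\rejcol$ itself.

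For the ``if'' direction, suppose $\rejcol \subseteq \Rcal^{\ER_\F}_\alpha(\mathbf{E})$ for some e-collection $\mathbf{E}$, and fix $\pdist \in M$. Every $R \in \rejcol$ lies in $\Rcal^{\ER_\F}_\alpha(\mathbf{E})$, so the defining constraint of \eqref{eq: e-closure} applied with $S = N_\pdist$ gives $\F_{N_\pdist}(R) \leq \alpha\, \evalue_{N_\pdist}$ almost surely. Taking the maximum over $R \in \rejcol$ preserves this bound, and taking expectations together with the e-collection property $\expect_\pdist(\evalue_{N_\pdist}) \leq 1$ yields $\expect_\pdist\big(\max_{R \in \rejcol}\F_{N_\pdist}(R)\big) \leq \alpha$, which is exactly simultaneous $\ER_\F$ control.

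For the ``only if'' direction, suppose $\rejcol$ controls $\ER_\F$ simultaneously at level $\alpha$. I define, for each $S \in 2^{[m]}$,
\[
\evalue_S \coloneqq \frac{1}{\alpha}\max_{R \in \rejcol}\F_S(R),
\]
and set $\mathbf{E} \coloneqq (\evalue_S)_{S \in 2^{[m]}}$. Since $\emptyset \in \rejcol$, each maximum is over a nonempty finite collection, so $\evalue_S$ is a well-defined, nonnegative random variable, and it is measurable because $2^{[m]}$ is finite. Two checks remain. First, $\mathbf{E}$ is an e-collection: for any $\pdist \in M$, $\expect_\pdist(\evalue_{N_\pdist}) = \alpha^{-1}\expect_\pdist\big(\max_{R \in \rejcol}\F_{N_\pdist}(R)\big) \leq 1$, where the inequality is precisely the assumed control. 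Second, $\rejcol \subseteq \Rcal^{\ER_\F}_\alpha(\mathbf{E})$ holds by construction: for any $R' \in \rejcol$ and any $S \in 2^{[m]}$, we have $\F_S(R') \leq \max_{R \in \rejcol}\F_S(R) = \alpha\,\evalue_S$, so $R'$ satisfies every defining inequality of \eqref{eq: e-closure}.

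I expect no genuine analytic obstacle; the entire difficulty is conceptual, namely spotting that the \emph{simultaneous} error quantity $\max_{R\in\rejcol}\F_{N_\pdist}(R)$ is, up to the factor $\alpha$, itself an e-value, so that the controlling procedure supplies its own certificate. It is worth noting why simultaneity fits so naturally: had control only been assumed for a single $\mathbf{R}$ as in \Cref{def: single ER}, the same construction with $\evalue_S = \alpha^{-1}\F_S(\mathbf{R})$ would still be an e-collection certifying $\mathbf{R}$, so the statement specializes cleanly via \Cref{lem: duality}. The only point to state carefully is the paper's convention that all inequalities between the random variables are understood to hold almost surely.
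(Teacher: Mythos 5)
Your proposal is correct and follows essentially the same route as the paper: the forward direction bounds $\max_{R\in\rejcol}\F_{N_\pdist}(R)$ by $\alpha\,\evalue_{N_\pdist}$ and takes expectations, and the reverse direction constructs the certificate $\evalue_S = \alpha^{-1}\max_{R\in\rejcol}\F_S(R)$, exactly as in \eqref{eq: max-compat-F}. The additional remarks on measurability and the single-set specialization via \Cref{lem: duality} are consistent with the paper's surrounding discussion.
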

\begin{proof}For any e-collection $\mathbf{E}$ and distribution $\pdist$, we see that
    $$
    \expect_\pdist\left(\max_{R \in \Rcal^{\ER_\F}_\alpha(\mathbf{E})}\ \F_{N_\pdist}(R)\right)
    \leq \expect_\pdist\left(\alpha \cdot \evalue_{N_\pdist}\right)
    \leq \alpha.$$

    The first inequality is by definition of $\Rcal^{\ER_\F}_\alpha(\mathbf{E})$, and the last inequality is from $\evalue_{N_\pdist}$ being an e-value for $\pdist$.

    Suppose $\rejcol$ simultaneously controls $\ER_\F$ over its elements. One can set
    \begin{align}
        \evalue_S =
        \frac{\max_{R \in \rejcol}\ \F_S(R)}{\alpha}\label{eq: max-compat-F}
    \end{align}
    for each $S \in 2^{[m]}$.
For each $\pdist \in M$, we have that
    \begin{align}
        \expect_\pdist(\evalue_{N_\pdist})
        =
        \frac{\expect_\pdist\left(\max_{R \in \rejcol}\ \F_{N_\pdist}(R)\right)}{\alpha}
        \leq 1.
    \end{align}
The inequality is by $\rejcol$ being a simultaneous $\ER_\F$ controlling procedure.
    For each $R \in \rejcol$, we have that:
    \begin{align}
        \frac{\F_S(R)}{\alpha} \leq \frac{\max_{R \in \rejcol}\F_S(R)}{\alpha} = \evalue_S
    \end{align} for all $S \in 2^{[m]}$, which implies that $R \in \Rcal_\alpha^{\ER_\F}(\mathbf{E})$. Thus, we have shown that $\rejcol \subseteq \Rcal_\alpha^{\ER_\F}(\mathbf{E})$.
\end{proof}

One way to view the e-Closure Principle (\ref{eq: e-closure}) is that it divides the model $M$ into parts, designs an error-bound for each part, and combines the results into an overall $\ER_\F$ control procedure. When constructing each of the partial bounds, the method designer may assume that $N_\mathrm{P}$ is known, which gives access to powerful oracle-like information. In FDR control, knowledge of $\pi_{0, \mathrm{P}} = |N_\mathrm{P}|/m$ is often already extremely valuable \citep{storey2004strong, benjamini2006adaptive, blanchard2009adaptive}.

The e-Closure Principle reduces the task of constructing an error control procedure to the much simpler task of constructing e-values for intersection hypotheses. When constructing these e-values, the researcher should take into account any knowledge, or lack of it, on the joint distribution of the data. After choosing e-values, the remaining task, implementing (\ref{eq: e-closure}), is purely computational. This computation may have exponential complexity, since $2^m-1$ e-values must be taken into account. However, computation can be done in polynomial time in important cases, as we shall see in Section \ref{sec: computation}.

Note that \Cref{thm: e-closure} is not tied to our \Cref{def: simultaneous ER} of simultaneous error control; it also characterizes classical procedures that output a single discovery set (i.e., those that follow \Cref{def: single ER}). We make this clear in the following \Cref{cor: principle}, which combines \Cref{thm: e-closure} and \Cref{lem: duality}. In fact, writing an ER controlling set $\mathbf{R}$ as an e-collection is a way of obtaining simultaneous FDR control for a classical FDR control method in a less trivial way than was done by \Cref{lem: duality} ---  we explore this further in \Cref{sec: flexible}.

\begin{corollary} \label{cor: principle}
$\mathbf{R}$ controls $\ER_\F$ at level $\alpha$ if and only if $\mathbf{R} \in \Rcal_\alpha^{\ER_\F}(\mathbf{E})$ for an e-collection $\mathbf{E}$.
\end{corollary}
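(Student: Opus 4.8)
The plan is to derive \Cref{cor: principle} directly from \Cref{thm: e-closure} and \Cref{lem: duality}, treating the single-set statement of \Cref{def: single ER} as the degenerate case of the simultaneous statement in \Cref{def: simultaneous ER}. The bridge is the observation that a singleton collection $\{\mathbf{R}\}$ collapses the maximum in \Cref{def: simultaneous ER} to the single term $\F_{N_\mathrm{P}}(\mathbf{R})$, so that simultaneous control of $\{\mathbf{R}\}$ coincides exactly with control of $\mathbf{R}$. This is precisely the content of \Cref{lem: duality}, so the corollary reduces to chaining that lemma with the theorem in both directions.

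For the forward direction, suppose $\mathbf{R}$ controls $\ER_\F$ at level $\alpha$ in the sense of \Cref{def: single ER}. First I would invoke the first half of \Cref{lem: duality} to pass to the collection $\rejcol = \{\mathbf{R}\}$, which then controls $\ER_\F$ simultaneously at level $\alpha$. Applying \Cref{thm: e-closure} to this singleton collection yields an e-collection $\mathbf{E}$ with $\{\mathbf{R}\} \subseteq \Rcal_\alpha^{\ER_\F}(\mathbf{E})$, which is exactly $\mathbf{R} \in \Rcal_\alpha^{\ER_\F}(\mathbf{E})$, as desired.

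For the reverse direction, suppose $\mathbf{R} \in \Rcal_\alpha^{\ER_\F}(\mathbf{E})$ for some e-collection $\mathbf{E}$. Then $\{\mathbf{R}\} \subseteq \Rcal_\alpha^{\ER_\F}(\mathbf{E})$, so \Cref{thm: e-closure} gives that the collection $\{\mathbf{R}\}$ controls $\ER_\F$ simultaneously at level $\alpha$. Applying the second half of \Cref{lem: duality} to the member $\mathbf{R} \in \{\mathbf{R}\}$ recovers control of $\mathbf{R}$ itself in the sense of \Cref{def: single ER}, completing the equivalence.

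I expect no genuine obstacle here, since all the substance lives in \Cref{thm: e-closure} and the corollary is purely a packaging result. The only point demanding a little care is the bookkeeping between the two notions of control: one must confirm that ``$\{\mathbf{R}\}$ controls $\ER_\F$ simultaneously'' and ``$\mathbf{R}$ controls $\ER_\F$'' are genuinely interchangeable through \Cref{lem: duality}, rather than silently re-proving the expectation bound already established in \Cref{thm: e-closure}. Once that identification is made explicit, both implications are one-line consequences.
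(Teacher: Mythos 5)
Your argument is correct and is exactly the route the paper takes: the text introducing \Cref{cor: principle} states that it ``combines \Cref{thm: e-closure} and \Cref{lem: duality},'' which is precisely your chaining of the duality lemma (to pass between $\mathbf{R}$ and the singleton collection $\{\mathbf{R}\}$) with the e-Closure theorem in both directions. Nothing further is needed.
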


Following \Cref{cor: principle} we will generally think of e-Closure methods in two different ways: as a method returning a full collection $\Rcal_\alpha^{\ER_\F}(\mathbf{E})$ of rejected sets, and as a method returning a single set $\mathbf{R} \in \Rcal_\alpha^{\ER_\F}(\mathbf{E})$, which is often one of the sets with the largest cardinality in $\Rcal_\alpha^{\ER_\F}(\mathbf{E})$.

Further, once a procedure has been passed through the e-Closure construction, applying the e-Closure construction again will result in the same procedure. This is formally defined as follows.
\begin{proposition} \label{thm: circ}
    Let $\mathbf{E}$ be defined from an arbitrary level $\alpha$ $\ER_\F$ controlling procedure $\rejcol$ as in \eqref{eq: max-compat-F}. Then, if we construct the e-collection $\mathbf{E}' \coloneqq (\evalue_S')_{S \in 2^{[m]}}$ by applying \eqref{eq: max-compat-F} to $\Rcal_\alpha^{\ER_\F}(\mathbf{E})$, we have that $\mathbf{E}' = \mathbf{E}$.
\end{proposition}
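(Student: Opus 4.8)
The plan is to prove the equality $\mathbf{E}' = \mathbf{E}$ coordinate-wise: I fix an arbitrary $S \in 2^{[m]}$ and an arbitrary sample point, and establish the two inequalities $\evalue_S \leq \evalue_S'$ and $\evalue_S' \leq \evalue_S$ separately, concluding by antisymmetry. Since $2^{[m]}$ is finite, both maxima defining $\evalue_S$ and $\evalue_S'$ via \eqref{eq: max-compat-F} range over finite collections, and both collections are nonempty ($\rejcol$ is never empty, and $\emptyset \in \Rcal_\alpha^{\ER_\F}(\mathbf{E})$ because $\F_S(\emptyset)=0 \leq \alpha\evalue_S$), so every maximum is attained and the whole argument can be carried out pointwise, with the resulting identity holding almost surely.

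For the direction $\evalue_S \leq \evalue_S'$, I would invoke the inclusion $\rejcol \subseteq \Rcal_\alpha^{\ER_\F}(\mathbf{E})$, which is exactly what the ``only if'' half of the proof of \Cref{thm: e-closure} establishes for the e-collection built through \eqref{eq: max-compat-F}. Since $\evalue_S'$ maximizes $\F_S(\cdot)/\alpha$ over the larger collection $\Rcal_\alpha^{\ER_\F}(\mathbf{E})$ while $\evalue_S$ maximizes the same quantity over the subcollection $\rejcol$, monotonicity of the maximum immediately gives $\evalue_S \leq \evalue_S'$.

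The reverse direction $\evalue_S' \leq \evalue_S$ is the crux, though it too is short. By the definition \eqref{eq: e-closure} of $\Rcal_\alpha^{\ER_\F}(\mathbf{E})$, every $R \in \Rcal_\alpha^{\ER_\F}(\mathbf{E})$ satisfies $\F_S(R) \leq \alpha\evalue_S$ for this particular $S$; taking the maximum over all such $R$ yields $\max_{R \in \Rcal_\alpha^{\ER_\F}(\mathbf{E})} \F_S(R) \leq \alpha\evalue_S$, and dividing by $\alpha$ gives $\evalue_S' \leq \evalue_S$. Combining the two inequalities gives $\evalue_S' = \evalue_S$ for every $S$, hence $\mathbf{E}' = \mathbf{E}$. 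I do not expect a serious obstacle here: the only point to notice is that $\Rcal_\alpha^{\ER_\F}(\mathbf{E})$ is, by its very definition, the \emph{largest} collection all of whose members are dominated by $\mathbf{E}$, so re-running \eqref{eq: max-compat-F} on it cannot manufacture e-values larger than $\mathbf{E}$ (that is the content of $\evalue_S'\leq\evalue_S$), while the completeness inclusion of \Cref{thm: e-closure} prevents them from shrinking (that is the content of $\evalue_S\leq\evalue_S'$). Together these say precisely that the e-Closure construction is idempotent.
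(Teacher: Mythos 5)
Your proof is correct, and it is exactly the argument the paper has in mind (the paper simply asserts the result ``follows directly from the definitions'' without writing it out): the inequality $\evalue_S' \leq \evalue_S$ comes from taking $S'=S$ in the defining constraints of $\Rcal_\alpha^{\ER_\F}(\mathbf{E})$, and $\evalue_S \leq \evalue_S'$ comes from the inclusion $\rejcol \subseteq \Rcal_\alpha^{\ER_\F}(\mathbf{E})$ already established in the proof of \Cref{thm: e-closure}. Nothing is missing.
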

The proof follows directly from the definition of e-values in \eqref{eq: max-compat-F} and the definition of $\Rcal_\alpha^{\ER_\F}$ in \eqref{eq: e-closure}. \Cref{thm: circ} \revise{shows that the reconstruction in \Cref{thm: e-closure} recovers a given procedure but should not be expected to improve its power directly. To obtain a power improvement, one must instead construct stronger or better adapted local e-values, as we do later for $\CeBH$, $\cBY$, and $\cSu$. The reconstruction can nevertheless enlarge a classical procedure's collection of simultaneously valid discovery sets, as discussed in \Cref{sec: flexible}.}

\begin{remark}
    Control of $\ER_\F$ in \Cref{thm: e-closure} hinges on the validity of the single e-value $\evalue_{N_\mathrm{P}}$. This implies that relevant properties of $\evalue_{N_\mathrm{P}}$ translate directly to properties of the multiple testing procedure. For example, if $\evalue_{N_\mathrm{P}}$ is an e-value under one of the asymptotic notions considered by \citet{ignatiadis_asymptotic_compound_2025}, control of $\ER_\F$ in the resulting procedure converges to $\alpha$ at the rate at which $\expect(\evalue_{N_\mathrm{P}})$ converges to 1.
\end{remark}
\begin{remark}
The definition of the e-collection immediately allows exploitation of restricted combinations \citep{shaffer1986modified}, since it imposes no conditions on $\evalue_S$ for sets $S$ that do not equal $N_\mathrm{P}$ for any $\mathrm{P} \in M$. We return to this point in \Cref{sec: Shaffer}.
\end{remark}
\begin{remark}
The general multiple testing problem in \Cref{sec: MT} can be viewed as a composite generalized Neyman-Pearson (GNP) problem \citep{grunwald_neyman-pearson_e-values_2024}, and hence the e-Closure Principle as an instance of a maximally compatible decision rule in the language of \citet{grunwald_neyman-pearson_e-values_2024}. \revise{We elaborate on this in more detail in \Cref{sec:related-work}.}
\end{remark}

\subsection{Recovering the classical Closure Principle for FWER control}\label{sec: classical closure}
Now we will see how e-Closure recovers the standard Closure Principle for FWER. Define the following collection of discovery sets:
$$\Rcal^\FWER_\alpha(\mathbf{E}) \coloneqq \left\{R \in 2^{[m]}: \evalue_S \geq \frac{\ind\{|R \cap S| > 0\}}{\alpha}\text{ for all }S \in 2^{[m]}\right\}.$$

Let $R^\FWER_\alpha(\mathbf{E})$ be the largest set in $\Rcal^\FWER_\alpha(\mathbf{E})$ --- we refer to this as the \emph{e-closed procedure} for FWER control. \revise{Such a largest set exists and equals the union of all sets in $\Rcal^\FWER_\alpha(\mathbf{E})$, since the defining condition is preserved under unions: if each set in a collection has empty intersection with every $S$ satisfying $\evalue_S<\alpha^{-1}$, then so does their union.} Note that $R^\FWER_\alpha(\mathbf{E})$ is unique and all members of $\Rcal^\FWER_\alpha(\mathbf{E})$ are subsets of $R^\FWER_\alpha(\mathbf{E})$.

\begin{theorem}[The e-Closure Principle for FWER control]\label{thm: fwer-closure}
    $\rejset$ is a \revise{(simultaneous)} FWER controlling procedure \revise{at level $\alpha$} if and only if there exists an e-collection $\mathbf{E}$ s.t.\ $\rejset \subseteq R^\FWER_\alpha(\mathbf{E})$. \revise{Moreover, the e-collection can be chosen so that every $\evalue_S$ lies in $\{0, \alpha^{-1}\}$.}
\end{theorem}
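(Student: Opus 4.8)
The plan is to read this statement as the specialization of \Cref{cor: principle} to the FWER loss $\F_N(R) = \ind\{|R \cap N| > 0\}$, and to extract the two additional features of the statement --- that the generic collection $\Rcal^{\ER_\F}_\alpha(\mathbf{E})$ collapses to a single maximal set $R^\FWER_\alpha(\mathbf{E})$, and that the witnessing e-collection can be taken two-valued --- from the monotone, indicator structure of this particular loss. The first step is to record the lattice structure of $\Rcal^\FWER_\alpha(\mathbf{E})$ that makes $R^\FWER_\alpha(\mathbf{E})$ well defined. Because $\ind\{|R' \cap S| > 0\} \leq \ind\{|R \cap S| > 0\}$ whenever $R' \subseteq R$, the collection $\Rcal^\FWER_\alpha(\mathbf{E})$ is downward closed. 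Because $\ind\{|(R_1 \cup R_2) \cap S| > 0\} = 1$ forces $\ind\{|R_i \cap S| > 0\} = 1$ for at least one $i$, membership of both $R_1$ and $R_2$ forces $\evalue_S \geq \alpha^{-1}$ at every such $S$, so $\Rcal^\FWER_\alpha(\mathbf{E})$ is closed under finite unions. Hence $R^\FWER_\alpha(\mathbf{E}) = \bigcup_{R \in \Rcal^\FWER_\alpha(\mathbf{E})} R$ is its unique maximal element and every member is a subset of it, which is the fact invoked ``by definition'' just before the theorem.

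With this structure in place the equivalence is immediate from \Cref{cor: principle}, once one notes that for the FWER loss the generic definition \eqref{eq: e-closure} of $\Rcal^{\ER_\F}_\alpha(\mathbf{E})$ is verbatim the displayed definition of $\Rcal^\FWER_\alpha(\mathbf{E})$. For the ``if'' direction, suppose $\rejset \subseteq R^\FWER_\alpha(\mathbf{E})$ for some e-collection $\mathbf{E}$; since $R^\FWER_\alpha(\mathbf{E}) \in \Rcal^\FWER_\alpha(\mathbf{E})$, downward closure gives $\rejset \in \Rcal^\FWER_\alpha(\mathbf{E}) = \Rcal^{\ER_\F}_\alpha(\mathbf{E})$, and \Cref{cor: principle} then yields FWER control. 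For the ``only if'' direction, FWER control of $\rejset$ produces via \Cref{cor: principle} an e-collection $\mathbf{E}$ with $\rejset \in \Rcal^\FWER_\alpha(\mathbf{E})$, whence maximality of $R^\FWER_\alpha(\mathbf{E})$ gives $\rejset \subseteq R^\FWER_\alpha(\mathbf{E})$.

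To obtain the support refinement, I would inspect the explicit e-collection built in the proof of \Cref{thm: e-closure} through \eqref{eq: max-compat-F}. Applied to $\rejcol = \{\rejset\}$ with the FWER loss it sets $\evalue_S = \F_S(\rejset)/\alpha = \ind\{|\rejset \cap S| > 0\}/\alpha$, which takes values only in $\{0, \alpha^{-1}\}$. Thus the e-collection already witnessing the ``only if'' direction is automatically binary, so it suffices to restrict attention to e-values supported on $\{0, \alpha^{-1}\}$.

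I do not anticipate a substantive obstacle, as the theorem is essentially \Cref{cor: principle} read through the FWER loss. The only points requiring care are the union-closure argument that guarantees a \emph{unique} maximal set $R^\FWER_\alpha(\mathbf{E})$ (so that the passage from the general collection $\Rcal^{\ER_\F}_\alpha$ to a single set is legitimate), and the observation that for an indicator loss the canonical e-collection of \eqref{eq: max-compat-F} is already two-valued; both are short but are the steps that genuinely use the FWER structure rather than the generic principle.
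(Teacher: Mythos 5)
Your proposal is correct and follows essentially the same route as the paper: the equivalence is obtained by specializing the general e-Closure Principle (via \Cref{cor: principle}) to the indicator loss $\F_N(R)=\ind\{|N\cap R|>0\}$, with your union-closure/downward-closure argument simply supplying the justification the paper leaves implicit for why $R^\FWER_\alpha(\mathbf{E})$ is the unique maximal element of $\Rcal^\FWER_\alpha(\mathbf{E})$. The only (immaterial) divergence is in the support refinement: the paper truncates an \emph{arbitrary} e-collection to $\{0,\alpha^{-1}\}$ without changing the rejected sets, whereas you observe that the canonical e-collection from \eqref{eq: max-compat-F} is already two-valued; both suffice for the stated claim.
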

\begin{proof}
The completeness result follows from applying \Cref{thm: e-closure} to the loss function $$\F_{N}(R) = \ind\{|N \cap R| > 0\}.$$
\revise{To prove the result on the support, start from any e-collection $\mathbf E$ and define}
    \[
    \revise{\evalue'_S=\alpha^{-1}\ind\{\evalue_S\geq\alpha^{-1}\}.}
    \]
    \revise{Then $\evalue'_S\leq\evalue_S$ and $\evalue'_S$ is also an e-value.}
    \revise{Thus $\mathbf E'$ is an e-collection. Because the FWER formulation of the e-Closure Principle compares local e-values only with $0$ and $\alpha^{-1}$, $\mathbf E'$ yields exactly the same rejection set as $\mathbf E$.}
\end{proof}

\begin{corollary}\label{cor: fwer-classical-closure}
\revise{Let $\boldsymbol{\Phi} \coloneqq (\boldsymbol{\varphi}_S)_{S \in 2^{[m]}}$ be a family of level $\alpha$ local intersection tests, and define the closed testing rejection set via the classical Closure Principle \citep{marcus1976closed,sonnemann1982allgemeine,sonnemann_vollstaendigkeitssaetze_fuer_1988,sonnemann_general_solutions_2008}} as
\[
\revise{
R_\alpha(\boldsymbol{\Phi})
\coloneqq
\{i \in [m]: \boldsymbol{\varphi}_S=1 \text{ for all } S \ni i\}.
}
\]
\revise{If we set $\evalue_S=\alpha^{-1} \cdot \boldsymbol{\varphi}_S$ for each $S\in 2^{[m]}$, then we have a valid e-collection by virtue of $\boldsymbol{\Phi}$ being a family of valid local intersection tests and that}
\[
\revise{
R_\alpha(\boldsymbol{\Phi})
=
R^\FWER_\alpha(\mathbf E).
}
\]
\revise{Conversely, suppose that $\mathbf{E}$ is an e-collection for which each $\evalue_S$ is an e-value for the intersection hypothesis $H_S$. If we let $\boldsymbol{\varphi}_S=\ind\{\evalue_S\geq \alpha^{-1}\}$, then $\boldsymbol{\Phi}$ is a family of valid level $\alpha$ local intersection tests, and the standard Closure Principle applied to $\boldsymbol{\Phi}$ yields $R_\alpha(\boldsymbol{\Phi}) = R^\FWER_\alpha(\mathbf E)$.}
\end{corollary}

\revise{Thus, the e-Closure Principle recovers the standard Closure Principle for FWER through the thresholding correspondence between local e-values and local tests.}

In \Cref{sec: post hoc loss}, we will show how the generalization by \citet{genovese2006exceedance} and \citet{goeman2011multiple} of closed testing to simultaneous FDP control is also covered by the e-Closure principle.

\subsection{A novel Closure Principle for FDR control}
Where e-Closure merely recovers known results for FWER, it can achieve uniform improvements of methods in the case of FDR. In the remainder of the paper, we will therefore focus on FDR control and define the following e-Closure procedure for FDR control.
\begin{equation} \label{eq: e-closure-fdr}
\Rcal_\alpha^\FDR(\mathbf{E}) = \Big\{ R \in 2^{[m]}\colon \mathbf{e}_S \geq \frac{\FDP_S(R)}{\alpha} \textrm{\ for all $S \in 2^{[m]}$}\Big\}.
\end{equation}
Equation (\ref{eq: e-closure-fdr}) bounds the FDP for all $S=N_\mathrm{P}$ by $\alpha \mathbf{e}_S$, which has expectation at most $\alpha$.

\begin{theorem}[The e-Closure Principle for FDR control] \label{thm: fdr-closure} $\boldsymbol{\mathcal{R}}$ \revise{simultaneously controls FDR} at level $\alpha$ if and only if $\boldsymbol{\mathcal{R}} \subseteq \Rcal_\alpha^\FDR(\mathbf{E})$ for an e-collection $\mathbf{E}$.
\end{theorem}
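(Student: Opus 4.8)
The plan is to derive this result as a direct instantiation of the general e-Closure Principle (\Cref{thm: e-closure}), since FDR is exactly the error rate $\ER_\F$ attached to the loss $\F_N(R) = \FDP_N(R) = |R \cap N|/(|R| \vee 1)$ recorded in \Cref{tab:error_rates}. The first thing I would check is that this $\F$ is an admissible error function in the sense of \Cref{sec: MT}: it is nonnegative, and $\FDP_N(\emptyset) = 0$ for every $N \in 2^{[m]}$ since the numerator vanishes. These are the only structural properties of $\F$ that the proof of \Cref{thm: e-closure} relies on.

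Next I would observe that the candidate collection $\Rcal_\alpha^\FDR(\mathbf{E})$ defined in \eqref{eq: e-closure-fdr} is literally the generic set $\Rcal^{\ER_\F}_\alpha(\mathbf{E})$ of \eqref{eq: e-closure} after substituting $\F = \FDP$: both require $\evalue_S \geq \FDP_S(R)/\alpha$ for all $S \in 2^{[m]}$. With this identification the theorem is immediate from \Cref{thm: e-closure}. If a self-contained argument were preferred, I would reproduce the two directions of that proof specialized to $\FDP$. For sufficiency, membership $\boldsymbol{\mathcal{R}} \subseteq \Rcal_\alpha^\FDR(\mathbf{E})$ forces $\max_{R \in \boldsymbol{\mathcal{R}}} \FDP_{N_\mathrm{P}}(R) \leq \alpha \evalue_{N_\mathrm{P}}$, whose expectation is at most $\alpha$ because $\evalue_{N_\mathrm{P}}$ is an e-value. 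For necessity, starting from a simultaneously FDR-controlling $\boldsymbol{\mathcal{R}}$ I would define $\evalue_S = \max_{R \in \boldsymbol{\mathcal{R}}} \FDP_S(R)/\alpha$ as in \eqref{eq: max-compat-F}; simultaneous control gives $\expect_\mathrm{P}(\evalue_{N_\mathrm{P}}) \leq 1$, so $\mathbf{E}$ is an e-collection, and every $R \in \boldsymbol{\mathcal{R}}$ lies in $\Rcal_\alpha^\FDR(\mathbf{E})$ by construction.

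Because the statement is a verbatim specialization of an already-proved theorem, I do not anticipate a genuine obstacle. The only point demanding care is confirming that $\FDP$ satisfies the structural conditions---nonnegativity and vanishing on the empty set---implicitly used in \Cref{thm: e-closure}; once that is noted, no FDR-specific argument is needed, and in particular no assumption on the dependence structure of the data enters, which is what makes the characterization as sharp for FDR as it is for FWER.
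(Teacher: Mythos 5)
Your proposal is correct and matches the paper's approach exactly: the paper also obtains \Cref{thm: fdr-closure} as an immediate specialization of \Cref{thm: e-closure} to the loss $\F_N(R) = \FDP_N(R)$, giving no separate proof beyond noting that $\FDP_{N_\mathrm{P}}(R) \leq \alpha\evalue_{N_\mathrm{P}}$ has expectation at most $\alpha$. Your additional check that $\FDP$ is nonnegative and vanishes on the empty set is a sensible (if routine) confirmation that the general theorem applies.
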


\label{eq: closure evalue}

The e-Closure Principle is helpful for constructing novel methods but also for improving existing methods. The ``only if'' part of \Cref{thm: e-closure} asserts that every existing method controlling FDR has an implicit e-collection that can be used to reconstruct the method as a special case of (\ref{eq: e-closure-fdr}). These e-values may be inefficient, e.g.\ because they have an expectation strictly smaller than 1; in such cases, \Cref{thm: fdr-closure} can sometimes be used to propose a superior method based on a suite of stochastically larger e-values. We will give several examples of such improvements in Sections \ref{sec: e-combining} and \ref{sec: p-combining}. However, \Cref{thm: circ} implies that the e-collection constructed in the proof of \Cref{thm: e-closure} cannot be directly used for finding improvements.
It is better to reverse engineer the proof and improve the implicit (or explicit) e-values directly.

\begin{remark}
    Similar to the restriction of support in \Cref{thm: fwer-closure}, one can restrict the support of $\evalue_S$ in any e-collection applicable to \Cref{thm: fdr-closure} to be in $\{k / (\alpha r): r \in [m], k \in \{0\} \cup [r \wedge |S|]\}$ for each $S \in 2^{[m]}$. We expand on the utility of this property for boosting and randomization in \Cref{sec: extensions}.
\end{remark}

The e-Closure Principle is not the only characterization of all FDR controlling procedures.
\citet[Theorem~6.5]{ignatiadis_asymptotic_compound_2025} prove that every FDR controlling procedure can be recovered by eBH applied to compound e-values. We will derive direct connections between the e-Closure Principle and its compound e-value construction in \Cref{sec: e-combining}. Further, the compound e-value formulation of \citet{ignatiadis_asymptotic_compound_2025} is restricted to FDR with a fixed set of base hypotheses, while the e-Closure Principle in \Cref{thm: e-closure} can be applied to problems with richer spaces of discovery sets and risk metrics.

Many FDR control methods, originally formulated in terms of p-values, have recently been reformulated in terms of e-values, generally as a special case of eBH on (compound) e-values. We summarize in \Cref{tab:evalues} how each of the methods we improve upon in this paper has been interpreted as a form of compound e-values and eBH, and give the local e-value that can be used with the e-Closure Principle to derive new procedures or recover the existing procedure, possibly with some added simultaneity. Such simultaneity is often appreciable, as we shall see in \Cref{sec:BY}, but should not be expected to be complete: \citet{finner2001false} showed that any FDR controlling procedure that controls FDR simultaneously for $\rejcol = 2^\rejset$, must control FWER for $\rejset$ (see also \Cref{thm: FWER} below).

\ifarver{
\begin{table}[h]
\caption{E-value based procedures for which we can derive a novel local e-value to either recover or improve. Each original method is equivalent to eBH applied to compound e-values \citep{ignatiadis_asymptotic_compound_2025} and can be improved or recovered via the local e-value for each intersection hypothesis applied to the e-Closure Principle. The local e-values for eBH, BY, and Su are derived in Sections \ref{sec: e-combining}, \ref{sec:BY}, and \ref{sec:Su}, respectively. Here, $\rejset^\BH_\alpha$ and $\rejset^\adaBH_\alpha$ are the sets of rejections of the Benjamini-Hochberg and adaptive Benjamini-Hochberg procedures, respectively, at level $\alpha$. For each $i \in [m]$, $\mathbf{w}_i$ is the coin-flip knockoff sign that results from the knockoff procedure \citep{barber_controlling_2015,candes_panning_gold_2018} and $\mathbf{c}_\alpha$ is the critical value for the knockoff procedure at level $\alpha$.}
\label{tab:evalues}
\centering
\renewcommand{\arraystretch}{2.0}
\begin{tabular}{|c|c|c|}
\hline
\textbf{Method} & \textbf{Implicit (compound) e-value} & \textbf{Local e-value} \\
\hline
\makecell{eBH} & \makecell{$\evalue_i$ \\ \citep{wang_false_discovery_2022}} & \makecell{$\displaystyle \frac{1}{|S|} \sum_{i \in S}\evalue_i$ \eqref{eq: mean e}} \\
\hline
\makecell{Benjamini-Yekutieli  \\ } & \makecell{$\displaystyle e_{m}(\pvalue_i)$ \\ \citep{xu_post-selection_inference_2022}} & $\displaystyle  \frac{1}{|S|}\sum_{i \in S} e_{|S|}(\pvalue_i)$ \eqref{eq: e for BY} \\
\hline
\makecell{Su} & \makecell{$(\ell_\alpha \pvalue_i \vee \alpha)^{-1}$\\ (this paper --- eq. \ref{eq: e Su})} & $(\ell_\alpha \pvalue_S \vee \alpha)^{-1} \eqref{eq: e Su}$ \\
\hline
\makecell{Benjamini-Hochberg } & \makecell{$\displaystyle \frac{m}{\alpha |\rejset_\alpha^\BH|} \mathbf{1}\Big\{\pvalue_i \leq \frac{\alpha |\rejset_\alpha^\BH|}{m} \Big\}$ \\ \citep{li_note_e-values_2025}} & \makecell{$\displaystyle  \frac{m}{ |S|} \sum_{i \in S} \frac{1}{\alpha |\rejset_\alpha^\BH|} \mathbf{1}\Big\{\pvalue_i \leq \frac{\alpha |\rejset_\alpha^\BH|}{m} \Big\}$ \eqref{eq: BH local e-value}} \\
\hline
\makecell{Adaptive BH \\ (e.g., Storey)} & \makecell{$\displaystyle \frac{m}{\alpha |\rejset_\alpha^\adaBH|} \mathbf{1}\Big\{\pvalue_i \leq \frac{\alpha |\rejset_\alpha^\adaBH|}{\hat{\boldsymbol{\pi}}_0 m} \Big\}$\\
\citep{li_note_e-values_2025}
} & $\displaystyle \sum_{i \in S}\frac{1}{\alpha |\rejset_\alpha^\adaBH|} \mathbf{1}\Big\{\pvalue_i \leq \frac{\alpha |\rejset_\alpha^\adaBH|}{\hat{\boldsymbol{\pi}}_0 m} \Big\}$ \eqref{eq: adaBH local e-value}  \\
\hline
\makecell{Knockoffs} & \makecell{$\displaystyle \frac{m \mathbf{1}\{\mathbf{w}_i \geq \mathbf{c}_\alpha^\Kn\}}{1+\sum_{i \in [m]}1\{\mathbf{w}_i \leq -\mathbf{c}_\alpha^\Kn\}}$ \\ \citep{ren2024derandomised}} & $\displaystyle \frac{ \sum_{i \in S} \mathbf{1}\{\mathbf{w}_i \geq \mathbf{c}_\alpha^\Kn\} }{ 1 + \sum_{i \in S}\mathbf{1}\{\mathbf{w}_i \leq -\mathbf{c}_\alpha^\Kn\} }$ \eqref{eq: knockoff local e-value}\\
\hline
\end{tabular}
\end{table}
}{
\begin{table}[h]\small
\caption{E-value based procedures for which we can derive a novel local e-value to either recover or improve. Each original method is equivalent to eBH applied to compound e-values \citep{ignatiadis_asymptotic_compound_2025} and can be improved or recovered via the local e-value for each intersection hypothesis applied to the e-Closure Principle. Here, $\rejset^\BH_\alpha$ and $\rejset^\adaBH_\alpha$ are the sets of rejections of the Benjamini-Hochberg and adaptive Benjamini-Hochberg procedures, respectively, at level $\alpha$. For each $i \in [m]$, $\mathbf{w}_i$ is the coin-flip knockoff sign that results from the knockoff procedure \citep{barber_controlling_2015,candes_panning_gold_2018} and \revise{$\mathbf{c}_\alpha^\Kn$} is the critical value for the knockoff procedure at level $\alpha$.}
\label{tab:evalues}
\centering
\renewcommand{\arraystretch}{2.0}
\begin{tabular}{|c|c|c|}
\hline
\textbf{Method} & \textbf{Implicit (compound) e-value} & \textbf{Local e-value} \\
\hline
eBH & $\evalue_i$ \citep{wang_false_discovery_2022} & $\displaystyle \frac{1}{|S|} \sum_{i \in S}\evalue_i$ \eqref{eq: mean e} \\
\hline
\makecell{Benjamini-\\Yekutieli} & $\displaystyle e_{m}(\pvalue_i)$ \citep{xu_post-selection_inference_2022} & $\displaystyle  \frac{1}{|S|}\sum_{i \in S} e_{|S|}(\pvalue_i)$ \eqref{eq: e for BY} \\
\hline
Su & $(\ell_\alpha \pvalue_i \vee \alpha)^{-1}$ (this paper --- eq. \ref{eq: e Su}) & $(\ell_\alpha \pvalue_S \vee \alpha)^{-1}$ \eqref{eq: e Su} \\
\hline
\makecell{Benjamini-\\Hochberg} & $\displaystyle \frac{m}{\alpha |\rejset_\alpha^\BH|} \mathbf{1}\Big\{\pvalue_i \leq \frac{\alpha |\rejset_\alpha^\BH|}{m} \Big\}$ \citep{li_note_e-values_2025} & $\displaystyle  \frac{m}{ |S|} \sum_{i \in S} \frac{1}{\alpha |\rejset_\alpha^\BH|} \mathbf{1}\Big\{\pvalue_i \leq \frac{\alpha |\rejset_\alpha^\BH|}{m} \Big\}$ \eqref{eq: BH local e-value} \\
\hline
\makecell{Adaptive BH \\ (e.g., Storey)} & $\displaystyle \frac{m}{\alpha |\rejset_\alpha^\adaBH|} \mathbf{1}\Big\{\pvalue_i \leq \frac{\alpha |\rejset_\alpha^\adaBH|}{\hat{\boldsymbol{\pi}}_0 m} \Big\}$ \citep{li_note_e-values_2025} & $\displaystyle \sum_{i \in S}\frac{1}{\alpha |\rejset_\alpha^\adaBH|} \mathbf{1}\Big\{\pvalue_i \leq \frac{\alpha |\rejset_\alpha^\adaBH|}{\hat{\boldsymbol{\pi}}_0 m} \Big\}$ \eqref{eq: adaBH local e-value}  \\
\hline
Knockoffs & $\displaystyle \frac{m \mathbf{1}\{\mathbf{w}_i \geq \mathbf{c}_\alpha^\Kn\}}{1+\sum_{i \in [m]}1\{\mathbf{w}_i \leq -\mathbf{c}_\alpha^\Kn\}}$ \citep{ren2024derandomised} & $\displaystyle \frac{ \sum_{i \in S} \mathbf{1}\{\mathbf{w}_i \geq \mathbf{c}_\alpha^\Kn\} }{ 1 + \sum_{i \in S}\mathbf{1}\{\mathbf{w}_i \leq -\mathbf{c}_\alpha^\Kn\} }$ \eqref{eq: knockoff local e-value}\\
\hline
\end{tabular}
\end{table}
}
\section{Improving eBH through merging e-values} \label{sec: e-combining}

As a first application of the e-Closure Principle, we will look at the important special case that we have e-values available for the hypotheses $H_1, \ldots, H_m$, and that the FDR controlling $\boldsymbol{\mathcal{R}}_\alpha$ should be a function of these e-values. Let $\mathbf{e}_{(i)}$ denote the $i$th largest e-value for $i \in [m]$. We make no assumptions on the joint distribution of the e-values.

For this situation, \citet{wang_false_discovery_2022} proposed the eBH procedure. It is equivalent to the BH procedure \citep{benjamini_controlling_false_1995} applied to $\mathbf{p}_1 =1/\mathbf{e}_1, \ldots, \mathbf{p}_m = 1/\mathbf{e}_m$. \revise{Of course, unlike eBH, BH does not provide FDR control under arbitrary dependence; its classical finite-sample validity result holds under PRDS \citep{benjamini_control_false_2001}. In contrast, the eBH procedure is valid for any joint distribution of the e-values.}

\revise{Let $\rejset^{(i)}$ denote the indices corresponding to the $i$ largest e-values (or smallest p-values), breaking ties by the smallest hypothesis index, and let $\rejset^{(0)} \coloneqq \emptyset$.}
Then $\eBH$ rejects:
\begin{equation} \label{eq: eBH}
    \rejset_\alpha^\eBH = \rejset^{(\mathbf{r}_\alpha^\eBH)}, \text{ where }\mathbf{r}_\alpha^\eBH  \coloneqq \max\{1 \leq r \leq m\colon r\mathbf{e}_{(r)} \geq m/\alpha\}
\end{equation}
where $\mathbf{r}_\alpha^\eBH$ is taken as 0 if the maximum does not exist.
We will show in the next section that eBH, though admissible for compound e-values \citep{ignatiadis_asymptotic_compound_2025}, is not admissible for regular ones.

\subsection{The closed $\textCeBH$ procedure}
We will now use the e-Closure Principle to propose an alternative method for controlling FDR based on e-values under arbitrary dependence. \revise{A natural way to merge e-values is to take their arithmetic average. Furthermore, \citet{vovk_e-values_calibration_2021} showed that the arithmetic average, possibly mixed with the trivial e-value of 1, is the only admissible symmetric e-merging function for arbitrarily dependent e-values. We will use the unmixed average}
\begin{equation} \label{eq: mean e}
\mathbf{e}_S = \frac1{|S|} \sum_{i \in S} \mathbf{e}_i
\end{equation}
as an e-value for $H_S$, \revise{for each nonempty $S \in 2^{[m]}$, together with the convention $\evalue_\emptyset=0$.}
We can now apply \eqref{eq: e-closure} using $\mathbf{E} = (\mathbf{e}_S)_{S \in 2^{[m]}}$ as the e-collection, to obtain the simultaneous $\CeBH$ procedure:
\begin{equation} \label{eq: proc mean e}
    \rejcol^{\CeBH}_\alpha \coloneqq \Rcal_\alpha^\FDR(\mathbf{E}) = \left\{ R \in 2^{[m]}\colon  \frac{1}{|S|} \sum_{i \in S} \mathbf{e}_i \geq \frac{\FDP_S(R)}{\alpha} \textrm{\ for all $S \in 2^{[m]}$}\right\}.
\end{equation}
We will refer to any discovery set that belongs in the simultaneous $\CeBH$ procedure as being \emph{mean-consistent}, as an analog of the self-consistency condition of \citet{blanchard_two_simple_2008}.

Often we are most interested in the largest set.
\revise{When one uses symmetric local e-values that are coordinatewise increasing in the input e-values (or coordinatewise decreasing in the input p-values), at least one of the largest discovery sets within the FDR controlling e-Closure collection can always be taken to be ordered, i.e., it consists of the largest e-values or smallest p-values. \Cref{lem: ordered closure sets} formalizes this exchange argument for a general symmetric loss.}

Thus, we define
\begin{align}
    \rejset^\CeBH_\alpha \coloneqq \rejset_\alpha^{(\mathbf{r}_\alpha^\CeBH)}, \text{ where }\mathbf{r}_\alpha^\CeBH \coloneqq \max \{r \in [m]: \rejset^{(r)} \in \rejcol^\CeBH_\alpha\}.
\end{align}
This is one of the sets with the largest possible size in $\rejcol^\CeBH_\alpha$, i.e., $\rejset^\CeBH_\alpha \in \operatorname*{argmax}_{R \in \rejcol^\CeBH_\alpha} |R|$.
However, the largest set does not have to be unique (see \Cref{ex: non-unique}).
This procedure controls FDR under any joint distribution of the e-values by \Cref{thm: fdr-closure} and the arithmetic mean being a valid e-merging function under arbitrary dependence. Moreover, the procedure uniformly improves upon eBH in both the choice of discoveries that are simultaneously available and in the maximum number of discoveries, as \Cref{thm: eBH} asserts.
\revise{To make this terminology precise, let $\boldsymbol{\mathcal{R}}(x) \subseteq 2^{[m]}$ be the collection returned by a simultaneous procedure and let $\mathbf{S}(x)$ be the set returned by a classical procedure at data realization $x$.}
\begin{definition}[Simultaneity and power improvements] \label{def: simultaneity improvement}
    \revise{The simultaneous procedure $\boldsymbol{\mathcal R}$ is a \emph{uniform simultaneity improvement} over $\mathbf S$ if $\mathbf S(x)\in\boldsymbol{\mathcal R}(x)$ for every data realization $x$ and, for at least one realization $x$, the collection $\boldsymbol{\mathcal R}(x)$ also contains a nonempty set $R\neq\mathbf S(x)$.
    It is a \emph{uniform power improvement} if, for every realization of $x$, there is an $R\in\boldsymbol{\mathcal R}(x)$ with $R\supseteq\mathbf S(x)$ and, for at least one realization $x$, the collection $\boldsymbol{\mathcal R}(x)$ also contains a nonempty set $R\supsetneq\mathbf S(x)$.}
\end{definition}\revise{This definition is deterministic as “uniform” refers to the pointwise containment for every possible sample realization in some sense, while improvement requires that the relevant improvement region is nonempty.
    If that region has positive probability under some $\pdist\in M$, the corresponding random procedures also differ with positive probability under $\pdist$.
    That additional probabilistic conclusion depends on the joint distribution; for example, some discrete distributions may assign the displayed improvement regions probability zero.
    In \Cref{sec: real data}, we observe the improvements empirically.
}
\begin{theorem} \label{thm: eBH}
    \revise{When $m>1$, $\rejcol^\CeBH_\alpha$ is a uniform improvement in both simultaneity and power over $\rejset_\alpha^\eBH$.}
\end{theorem}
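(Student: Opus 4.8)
The plan is to verify directly the two ingredients required by \Cref{def: simultaneity improvement}: first, that the single eBH discovery set always belongs to the closed collection, i.e.\ $\rejset_\alpha^\eBH \in \rejcol^\CeBH_\alpha$ almost surely; and second, that on some positive-probability event $\rejcol^\CeBH_\alpha$ contains a set strictly larger than $\rejset_\alpha^\eBH$. Since a strict superset $R \supset \rejset_\alpha^\eBH$ is in particular a nonempty set different from $\rejset_\alpha^\eBH$, one witnessing event establishes the simultaneity and the power improvement simultaneously, so the two claims collapse to a single construction.

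For the containment I would set $R=\rejset_\alpha^\eBH=\rejset^{(r)}$ with $r=\mathbf{r}_\alpha^\eBH$; the case $r=0$ gives $R=\emptyset\in\rejcol^\CeBH_\alpha$ trivially, so take $r\geq 1$. The eBH stopping rule then yields $\mathbf{e}_{(r)}\geq m/(r\alpha)$, and every $i\in R$ has $\mathbf{e}_i\geq\mathbf{e}_{(r)}$. For each nonempty $S$ (the case $S=\emptyset$ is vacuous, as $\FDP_\emptyset\equiv 0$),
\[ \alpha\mathbf{e}_S=\frac{\alpha}{|S|}\sum_{i\in S}\mathbf{e}_i\ \geq\ \frac{\alpha}{|S|}\sum_{i\in S\cap R}\mathbf{e}_i\ \geq\ \frac{\alpha\,|S\cap R|}{|S|}\cdot\frac{m}{r\alpha}\ =\ \frac{|S\cap R|}{r}\cdot\frac{m}{|S|}\ \geq\ \frac{|S\cap R|}{r}\ =\ \FDP_S(R), \]
where I discard the nonnegative terms with $i\in S\setminus R$, use $\mathbf{e}_i\geq\mathbf{e}_{(r)}\geq m/(r\alpha)$ on $S\cap R$, and $m/|S|\geq 1$ since $S\subseteq[m]$. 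This is exactly the defining inequality of $\rejcol^\CeBH_\alpha$ in \eqref{eq: proc mean e}, so $R\in\rejcol^\CeBH_\alpha$ almost surely, supplying the first clause of both improvement notions.

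For the strict improvement I would exhibit, for every $m>1$, a configuration on which eBH rejects nothing while $[m]\in\rejcol^\CeBH_\alpha$. Writing $u_i=\alpha\mathbf{e}_i$ and fixing $\epsilon=1/(2m)$, take $u_1=\cdots=u_{m-1}=1-\epsilon$ and $u_m=m-\tfrac12$. Here $r\mathbf{e}_{(r)}<m/\alpha$ for every $r$: the top value contributes $u_m=m-\tfrac12<m$ at $r=1$, and any $r\geq 2$ contributes $r(1-\epsilon)\le m(1-\epsilon)<m$, so $\mathbf{r}_\alpha^\eBH=0$ and $\rejset_\alpha^\eBH=\emptyset$. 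To see $[m]\in\rejcol^\CeBH_\alpha$, note $\FDP_S([m])=|S|/m$, so the requirement reduces to $\sum_{i\in S}\mathbf{e}_i\geq |S|^2/(m\alpha)$ for all $S$; the worst $S$ of each size collects the smallest $u_i$'s, leaving $s(1-\epsilon)\geq s^2/m$ for $s\leq m-1$ (equivalent to $\epsilon\leq 1/m$) and $\sum_i u_i=(m-1)(1-\epsilon)+(m-\tfrac12)\geq m$ for $s=m$, both holding with strict slack. Hence $[m]\supset\emptyset=\rejset_\alpha^\eBH$ lies in the closed collection, a strict power (and therefore simultaneity) improvement.

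I expect the main obstacle to be this second step: finding an example valid uniformly in $m$ and, more delicately, promoting it from an isolated point to a genuine positive-probability event. The point is that every inequality above is strict (the eBH failures are strict, and the mean-consistency inequalities for $[m]$ are strict once $\epsilon<1/m$ and $u_m>1+(m-1)\epsilon$), so the configuration is interior to an open region of e-value vectors; following the convention stated after \Cref{def: simultaneity improvement}, any model $M$ placing positive mass on a neighborhood of this configuration yields an event $E$ with $\mathrm{P}(E)>0$ on which the strict improvement occurs. The remaining work---justifying that the size-$s$ worst case is the smallest-$u$ subset, and checking the elementary slack inequalities---is routine.
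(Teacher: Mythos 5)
Your proposal is correct and follows essentially the same route as the paper's proof: the containment $\rejset_\alpha^\eBH \in \rejcol_\alpha^\CeBH$ is established by the same chain of inequalities (drop terms outside $R$, bound the rest by $\mathbf{e}_{(r)} \geq m/(r\alpha)$, use $|S| \leq m$), and the strict improvement is shown by an explicit e-value configuration on which eBH rejects nothing while the closed collection contains a nonempty set. The only difference is cosmetic: your witness makes $\CeBH$ reject all of $[m]$ (closer in spirit to the paper's separate Example on that point), whereas the paper's proof uses the simpler configuration $\mathbf{e}_1=(m-1/2)/\alpha$, $\mathbf{e}_2=1/(2\alpha)$, $\mathbf{e}_3=\cdots=\mathbf{e}_m=0$ to gain only $\{1\}$; both are valid under the paper's stated convention about positive-probability neighborhoods of such configurations.
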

\begin{proof}
Let $R$ be any self-consistent discovery set \citep{wang_false_discovery_2022}, which means that $\mathbf{e}_i \geq m/(\alpha |R|)$ for all $i \in R$. Note that $\rejset^\eBH_\alpha$ is the largest self-consistent set. \revise{If $R\cap S=\emptyset$, the e-Closure inequality for $S$ is immediate; otherwise,}
\begin{align}
\frac{1}{|S|} \sum_{i \in S} \mathbf{e}_i
&\geq
\frac{|R \cap S|}{|S|} \cdot \min_{i \in  R \cap S}\mathbf{e}_{i}
\geq
\frac{|R \cap S|}{|S|} \cdot \min_{i \in  R}\mathbf{e}_{i}
\geq
\frac{|R \cap S|}{m} \cdot \min_{i \in  R}\mathbf{e}_{i}
\geq
\frac{|R \cap S|}{m} \cdot \frac{m}{\alpha |R|}
=
\frac{\FDP_S(R)}{\alpha}
\end{align}
which shows that $R \in \rejcol_\alpha^\CeBH$. The same is trivially true if $R=\emptyset$.
To show a power improvement, let $m>1$ and consider the event that $\mathbf{e}_1=(m-1/2)/\alpha$, $\mathbf{e}_2 = 1/(2\alpha)$, $\mathbf{e}_3 = \ldots = \mathbf{e}_m = 0$. Then $\mathbf{R}_\alpha^\eBH = \emptyset$, but $\{1\} \in  \rejset^\CeBH_\alpha$, because $|S|^{-1} \sum_{i \in S} \mathbf{e}_i \geq \alpha^{-1}$ whenever $\{1\} \in S$.
\end{proof}

\begin{remark} \citet{ignatiadis2024values} formulated a slight uniform improvement on the eBH procedure, called minimally adaptive eBH. \revise{For $m>4$, a result analogous to \Cref{thm: eBH} holds about this procedure. We prove in \Cref{sec: minimally adaptive eBH} that $\CeBH$ also improves it in both simultaneity and power, subject to the same positive probability qualification.}
\end{remark}

\begin{remark}
It is not immediate that the $\CeBH$ procedure is admissible, despite it utilizing an admissible e-merging function for each intersection hypothesis. \revise{We present a necessary condition for admissibility in \Cref{sec: admissibility} via a characterization of generalized e-merging functions \citet{vovk2022admissible,wang_only_admissible_2025,clerico_simple_geometric_2026}, and $\textCeBH$ does satisfy such a condition. In an earlier draft, we posed a complete characterization of admissible e-value based procedures FDR controlling procedures as an open problem. Recently, \citet{sun_admissibility_complete_2026} provided such a characterization and showed that every constant-free $\textCeBH$ procedure derived from weighted mean local e-values is an admissible procedure at every $\alpha$, and within the class of symmetric procedures, $\textCeBH$ is not the largest element when $\alpha$ is sufficiently large.} We also elaborate on some extensions of $\textCeBH$ in \Cref{sec: extensions} that can improve or modify the procedure further.
\end{remark}

\subsection{When does $\textCeBH$ improve over eBH?}
The improvement from eBH to $\textCeBH$ can be substantial, as can be gauged from the proof of \Cref{thm: eBH}, in which each of the four inequalities leaves a substantial amount of room, and each has a different worst case in which it is an equality. The $\CeBH$ procedure often rejects more hypotheses than eBH, and may reject some hypotheses in the event that eBH does not reject any. An extreme example of this is given in Example \ref{ex: eBH+}, in which $\CeBH$ rejects all hypotheses, but eBH none. It is illustrated in Figure \ref{fig:example1}.

\begin{example} \label{ex: eBH+}
Suppose $m >1$, and $(2m-2i+1)/(m\alpha) \leq \mathbf{e}_i < m/(i\alpha)$ for $i=1,\ldots, m$. Then eBH rejects nothing, but $\textCeBH$ rejects $[m]$.
\end{example}
We defer a proof of this to \Cref{sec: eBH+ example proof}.

\begin{figure}[!ht]
\centering
\begin{tikzpicture}[scale=0.7]
\begin{axis}[
    xmin=1, xmax=20,
    xtick={5,10,15,20},
    ymin=0, ymax=400,
    ylabel=$e$-value,
    xlabel=rank,
    height=10cm, width=10cm,
    legend style={legend columns=1}
]

    \addplot[color=blue, very thick, mark=*, mark size=2pt, fill=blue, fill opacity=0.5, draw opacity=0.75] coordinates {
( 1,400 ) ( 2,200 ) ( 3,133.333 ) ( 4,100 ) ( 5,80 ) ( 6,66.667 ) ( 7,57.143 ) ( 8,50 ) ( 9,44.444 ) ( 10,40 ) ( 11,36.364 ) ( 12,33.333 ) ( 13,30.769 ) ( 14,28.571 ) ( 15,26.667 ) ( 16,25 ) ( 17,23.529 ) ( 18,22.222 ) ( 19,21.053 ) ( 20,20 ) ( 20,0 ) ( 19,0 ) ( 18,0 ) ( 17,0 ) ( 16,0 ) ( 15,0 ) ( 14,0 ) ( 13,0 ) ( 12,0 ) ( 11,0 ) ( 10,0 ) ( 9,0 ) ( 8,0 ) ( 7,40 ) ( 6,45.714 ) ( 5,51.429 ) ( 4,57.143 ) ( 3,62.857 ) ( 2,68.571 ) ( 1,74.286 )
    };
    \addlegendentry{$k=7$}

    \addplot[color=red, very thick, mark=*, mark size=2pt, fill=red, fill opacity=0.5, draw opacity=0.75] coordinates {
        (1,400) (2,200) (3,133.333) (4,100) (5,80) (6,66.667) (7,57.143) (8,50)
        (9,44.444) (10,40) (11,36.364) (12,33.333) (13,30.769) (14,28.571)
        (15,26.667) (16,25) (17,23.529) (18,22.222) (19,21.053) (20,20)
        (20,1) (19,3) (18,5) (17,7) (16,9) (15,11) (14,13) (13,15)
        (12,17) (11,19) (10,21) (9,23) (8,25) (7,27) (6,29) (5,31)
        (4,33) (3,35) (2,37) (1,39)
    };
    \addlegendentry{$k=20$}

    \addplot[color=black, very thick, mark=*, mark size=2pt] coordinates {
        (1,400) (2,200) (3,133.333) (4,100) (5,80) (6,66.667) (7,57.143) (8,50)
        (9,44.444) (10,40) (11,36.364) (12,33.333) (13,30.769) (14,28.571)
        (15,26.667) (16,25) (17,23.529) (18,22.222) (19,21.053) (20,20)
    };
\end{axis}
\end{tikzpicture}~
\begin{tikzpicture}[scale=0.7, define rgb/.code={\definecolor{mycolor}{RGB}{#1}}, rgb color/.style={define rgb={#1},mycolor}]
\begin{axis}[
    xmin=1, xmax=20,
    xtick={5,10,15,20},
    ymin=0, ymax=400,
    xlabel=rank,
    height=10cm, width=10cm,
    ylabel=,
    legend style={legend columns=2}
]
\addplot[rgb color={231, 29, 67}, very thick, mark=., mark size=2pt, fill opacity=0.5, draw opacity=0.75] coordinates {
    (1,400) (2,200) (3,133.333) (4,100) (5,80) (6,66.667) (7,57.143) (8,50)
    (9,44.444) (10,40) (11,36.364) (12,33.333) (13,30.769) (14,28.571)
    (15,26.667) (16,25) (17,23.529) (18,22.222) (19,21.053) (20,20)
    (20,1) (19,3) (18,5) (17,7) (16,9) (15,11) (14,13) (13,15)
    (12,17) (11,19) (10,21) (9,23) (8,25) (7,27) (6,29) (5,31)
    (4,33) (3,35) (2,37) (1,39)
};
\addplot[rgb color={255, 0, 0}, very thick,  mark size=2pt, mark=., fill opacity=0.5, draw opacity=0.75] coordinates {
( 1,400 ) ( 2,200 ) ( 3,133.333 ) ( 4,100 ) ( 5,80 ) ( 6,66.667 ) ( 7,57.143 ) ( 8,50 ) ( 9,44.444 ) ( 10,40 ) ( 11,36.364 ) ( 12,33.333 ) ( 13,30.769 ) ( 14,28.571 ) ( 15,26.667 ) ( 16,25 ) ( 17,23.529 ) ( 18,22.222 ) ( 19,21.053 ) ( 20,20 ) ( 20,0 ) ( 19,2.105 ) ( 18,4.211 ) ( 17,6.316 ) ( 16,8.421 ) ( 15,10.526 ) ( 14,12.632 ) ( 13,14.737 ) ( 12,16.842 ) ( 11,18.947 ) ( 10,21.053 ) ( 9,23.158 ) ( 8,25.263 ) ( 7,27.368 ) ( 6,29.474 ) ( 5,31.579 ) ( 4,33.684 ) ( 3,35.789 ) ( 2,37.895 ) ( 1,40 )
};
\addplot[rgb color={255, 55, 0}, very thick,  mark size=2pt, mark=., fill opacity=0.5, draw opacity=0.75] coordinates {
( 1,400 ) ( 2,200 ) ( 3,133.333 ) ( 4,100 ) ( 5,80 ) ( 6,66.667 ) ( 7,57.143 ) ( 8,50 ) ( 9,44.444 ) ( 10,40 ) ( 11,36.364 ) ( 12,33.333 ) ( 13,30.769 ) ( 14,28.571 ) ( 15,26.667 ) ( 16,25 ) ( 17,23.529 ) ( 18,22.222 ) ( 19,21.053 ) ( 20,20 ) ( 20,0 ) ( 19,0 ) ( 18,3.333 ) ( 17,5.556 ) ( 16,7.778 ) ( 15,10 ) ( 14,12.222 ) ( 13,14.444 ) ( 12,16.667 ) ( 11,18.889 ) ( 10,21.111 ) ( 9,23.333 ) ( 8,25.556 ) ( 7,27.778 ) ( 6,30 ) ( 5,32.222 ) ( 4,34.444 ) ( 3,36.667 ) ( 2,38.889 ) ( 1,41.111 )
};
\addplot[rgb color={255, 110, 0}, very thick,  mark size=2pt, mark=., fill opacity=0.5, draw opacity=0.75] coordinates {
( 1,400 ) ( 2,200 ) ( 3,133.333 ) ( 4,100 ) ( 5,80 ) ( 6,66.667 ) ( 7,57.143 ) ( 8,50 ) ( 9,44.444 ) ( 10,40 ) ( 11,36.364 ) ( 12,33.333 ) ( 13,30.769 ) ( 14,28.571 ) ( 15,26.667 ) ( 16,25 ) ( 17,23.529 ) ( 18,22.222 ) ( 19,21.053 ) ( 20,20 ) ( 20,0 ) ( 19,0 ) ( 18,0 ) ( 17,4.706 ) ( 16,7.059 ) ( 15,9.412 ) ( 14,11.765 ) ( 13,14.118 ) ( 12,16.471 ) ( 11,18.824 ) ( 10,21.176 ) ( 9,23.529 ) ( 8,25.882 ) ( 7,28.235 ) ( 6,30.588 ) ( 5,32.941 ) ( 4,35.294 ) ( 3,37.647 ) ( 2,40 ) ( 1,42.353 )
};
\addplot[rgb color={255, 165, 0}, very thick,  mark size=2pt, mark=., fill opacity=0.5, draw opacity=0.75] coordinates {
( 1,400 ) ( 2,200 ) ( 3,133.333 ) ( 4,100 ) ( 5,80 ) ( 6,66.667 ) ( 7,57.143 ) ( 8,50 ) ( 9,44.444 ) ( 10,40 ) ( 11,36.364 ) ( 12,33.333 ) ( 13,30.769 ) ( 14,28.571 ) ( 15,26.667 ) ( 16,25 ) ( 17,23.529 ) ( 18,22.222 ) ( 19,21.053 ) ( 20,20 ) ( 20,0 ) ( 19,0 ) ( 18,0 ) ( 17,0 ) ( 16,6.25 ) ( 15,8.75 ) ( 14,11.25 ) ( 13,13.75 ) ( 12,16.25 ) ( 11,18.75 ) ( 10,21.25 ) ( 9,23.75 ) ( 8,26.25 ) ( 7,28.75 ) ( 6,31.25 ) ( 5,33.75 ) ( 4,36.25 ) ( 3,38.75 ) ( 2,41.25 ) ( 1,43.75 )
};
\addplot[rgb color={255, 195, 0}, very thick,  mark size=2pt, mark=., fill opacity=0.5, draw opacity=0.75] coordinates {
( 1,400 ) ( 2,200 ) ( 3,133.333 ) ( 4,100 ) ( 5,80 ) ( 6,66.667 ) ( 7,57.143 ) ( 8,50 ) ( 9,44.444 ) ( 10,40 ) ( 11,36.364 ) ( 12,33.333 ) ( 13,30.769 ) ( 14,28.571 ) ( 15,26.667 ) ( 16,25 ) ( 17,23.529 ) ( 18,22.222 ) ( 19,21.053 ) ( 20,20 ) ( 20,0 ) ( 19,0 ) ( 18,0 ) ( 17,0 ) ( 16,0 ) ( 15,8 ) ( 14,10.667 ) ( 13,13.333 ) ( 12,16 ) ( 11,18.667 ) ( 10,21.333 ) ( 9,24 ) ( 8,26.667 ) ( 7,29.333 ) ( 6,32 ) ( 5,34.667 ) ( 4,37.333 ) ( 3,40 ) ( 2,42.667 ) ( 1,45.333 )
};
\addplot[rgb color={255, 225, 0}, very thick,  mark size=2pt, mark=., fill opacity=0.5, draw opacity=0.75] coordinates {
( 1,400 ) ( 2,200 ) ( 3,133.333 ) ( 4,100 ) ( 5,80 ) ( 6,66.667 ) ( 7,57.143 ) ( 8,50 ) ( 9,44.444 ) ( 10,40 ) ( 11,36.364 ) ( 12,33.333 ) ( 13,30.769 ) ( 14,28.571 ) ( 15,26.667 ) ( 16,25 ) ( 17,23.529 ) ( 18,22.222 ) ( 19,21.053 ) ( 20,20 ) ( 20,0 ) ( 19,0 ) ( 18,0 ) ( 17,0 ) ( 16,0 ) ( 15,0 ) ( 14,10 ) ( 13,12.857 ) ( 12,15.714 ) ( 11,18.571 ) ( 10,21.429 ) ( 9,24.286 ) ( 8,27.143 ) ( 7,30 ) ( 6,32.857 ) ( 5,35.714 ) ( 4,38.571 ) ( 3,41.429 ) ( 2,44.286 ) ( 1,47.143 )
};
\addplot[rgb color={255, 255, 0}, very thick,  mark size=2pt, mark=., fill opacity=0.5, draw opacity=0.75] coordinates {
( 1,400 ) ( 2,200 ) ( 3,133.333 ) ( 4,100 ) ( 5,80 ) ( 6,66.667 ) ( 7,57.143 ) ( 8,50 ) ( 9,44.444 ) ( 10,40 ) ( 11,36.364 ) ( 12,33.333 ) ( 13,30.769 ) ( 14,28.571 ) ( 15,26.667 ) ( 16,25 ) ( 17,23.529 ) ( 18,22.222 ) ( 19,21.053 ) ( 20,20 ) ( 20,0 ) ( 19,0 ) ( 18,0 ) ( 17,0 ) ( 16,0 ) ( 15,0 ) ( 14,0 ) ( 13,12.308 ) ( 12,15.385 ) ( 11,18.462 ) ( 10,21.538 ) ( 9,24.615 ) ( 8,27.692 ) ( 7,30.769 ) ( 6,33.846 ) ( 5,36.923 ) ( 4,40 ) ( 3,43.077 ) ( 2,46.154 ) ( 1,49.231 )
};
\addplot[rgb color={170, 213, 0}, very thick,  mark size=2pt, mark=., fill opacity=0.5, draw opacity=0.75] coordinates {
( 1,400 ) ( 2,200 ) ( 3,133.333 ) ( 4,100 ) ( 5,80 ) ( 6,66.667 ) ( 7,57.143 ) ( 8,50 ) ( 9,44.444 ) ( 10,40 ) ( 11,36.364 ) ( 12,33.333 ) ( 13,30.769 ) ( 14,28.571 ) ( 15,26.667 ) ( 16,25 ) ( 17,23.529 ) ( 18,22.222 ) ( 19,21.053 ) ( 20,20 ) ( 20,0 ) ( 19,0 ) ( 18,0 ) ( 17,0 ) ( 16,0 ) ( 15,0 ) ( 14,0 ) ( 13,0 ) ( 12,15 ) ( 11,18.333 ) ( 10,21.667 ) ( 9,25 ) ( 8,28.333 ) ( 7,31.667 ) ( 6,35 ) ( 5,38.333 ) ( 4,41.667 ) ( 3,45 ) ( 2,48.333 ) ( 1,51.667 )
};
\addplot[rgb color={85, 170, 0}, very thick,  mark size=2pt, mark=., fill opacity=0.5, draw opacity=0.75] coordinates {
( 1,400 ) ( 2,200 ) ( 3,133.333 ) ( 4,100 ) ( 5,80 ) ( 6,66.667 ) ( 7,57.143 ) ( 8,50 ) ( 9,44.444 ) ( 10,40 ) ( 11,36.364 ) ( 12,33.333 ) ( 13,30.769 ) ( 14,28.571 ) ( 15,26.667 ) ( 16,25 ) ( 17,23.529 ) ( 18,22.222 ) ( 19,21.053 ) ( 20,20 ) ( 20,0 ) ( 19,0 ) ( 18,0 ) ( 17,0 ) ( 16,0 ) ( 15,0 ) ( 14,0 ) ( 13,0 ) ( 12,0 ) ( 11,18.182 ) ( 10,21.818 ) ( 9,25.455 ) ( 8,29.091 ) ( 7,32.727 ) ( 6,36.364 ) ( 5,40 ) ( 4,43.636 ) ( 3,47.273 ) ( 2,50.909 ) ( 1,54.545 )
};
\addplot[rgb color={0, 128, 0}, very thick,  mark size=2pt, mark=., fill opacity=0.5, draw opacity=0.75] coordinates {
( 1,400 ) ( 2,200 ) ( 3,133.333 ) ( 4,100 ) ( 5,80 ) ( 6,66.667 ) ( 7,57.143 ) ( 8,50 ) ( 9,44.444 ) ( 10,40 ) ( 11,36.364 ) ( 12,33.333 ) ( 13,30.769 ) ( 14,28.571 ) ( 15,26.667 ) ( 16,25 ) ( 17,23.529 ) ( 18,22.222 ) ( 19,21.053 ) ( 20,20 ) ( 20,0 ) ( 19,0 ) ( 18,0 ) ( 17,0 ) ( 16,0 ) ( 15,0 ) ( 14,0 ) ( 13,0 ) ( 12,0 ) ( 11,0 ) ( 10,22 ) ( 9,26 ) ( 8,30 ) ( 7,34 ) ( 6,38 ) ( 5,42 ) ( 4,46 ) ( 3,50 ) ( 2,54 ) ( 1,58 )
};
\addplot[rgb color={0, 85, 85}, very thick,  mark size=2pt, mark=., fill opacity=0.5, draw opacity=0.75] coordinates {
( 1,400 ) ( 2,200 ) ( 3,133.333 ) ( 4,100 ) ( 5,80 ) ( 6,66.667 ) ( 7,57.143 ) ( 8,50 ) ( 9,44.444 ) ( 10,40 ) ( 11,36.364 ) ( 12,33.333 ) ( 13,30.769 ) ( 14,28.571 ) ( 15,26.667 ) ( 16,25 ) ( 17,23.529 ) ( 18,22.222 ) ( 19,21.053 ) ( 20,20 ) ( 20,0 ) ( 19,0 ) ( 18,0 ) ( 17,0 ) ( 16,0 ) ( 15,0 ) ( 14,0 ) ( 13,0 ) ( 12,0 ) ( 11,0 ) ( 10,0 ) ( 9,26.667 ) ( 8,31.111 ) ( 7,35.556 ) ( 6,40 ) ( 5,44.444 ) ( 4,48.889 ) ( 3,53.333 ) ( 2,57.778 ) ( 1,62.222 )
};
\addplot[rgb color={0, 43, 170}, very thick,  mark size=2pt, mark=., fill opacity=0.5, draw opacity=0.75] coordinates {
( 1,400 ) ( 2,200 ) ( 3,133.333 ) ( 4,100 ) ( 5,80 ) ( 6,66.667 ) ( 7,57.143 ) ( 8,50 ) ( 9,44.444 ) ( 10,40 ) ( 11,36.364 ) ( 12,33.333 ) ( 13,30.769 ) ( 14,28.571 ) ( 15,26.667 ) ( 16,25 ) ( 17,23.529 ) ( 18,22.222 ) ( 19,21.053 ) ( 20,20 ) ( 20,0 ) ( 19,0 ) ( 18,0 ) ( 17,0 ) ( 16,0 ) ( 15,0 ) ( 14,0 ) ( 13,0 ) ( 12,0 ) ( 11,0 ) ( 10,0 ) ( 9,0 ) ( 8,32.5 ) ( 7,37.5 ) ( 6,42.5 ) ( 5,47.5 ) ( 4,52.5 ) ( 3,57.5 ) ( 2,62.5 ) ( 1,67.5 )
};
\addplot[rgb color={0, 0, 255}, very thick,  mark size=2pt, mark=., fill opacity=0.5, draw opacity=0.75] coordinates {
( 1,400 ) ( 2,200 ) ( 3,133.333 ) ( 4,100 ) ( 5,80 ) ( 6,66.667 ) ( 7,57.143 ) ( 8,50 ) ( 9,44.444 ) ( 10,40 ) ( 11,36.364 ) ( 12,33.333 ) ( 13,30.769 ) ( 14,28.571 ) ( 15,26.667 ) ( 16,25 ) ( 17,23.529 ) ( 18,22.222 ) ( 19,21.053 ) ( 20,20 ) ( 20,0 ) ( 19,0 ) ( 18,0 ) ( 17,0 ) ( 16,0 ) ( 15,0 ) ( 14,0 ) ( 13,0 ) ( 12,0 ) ( 11,0 ) ( 10,0 ) ( 9,0 ) ( 8,0 ) ( 7,40 ) ( 6,45.714 ) ( 5,51.429 ) ( 4,57.143 ) ( 3,62.857 ) ( 2,68.571 ) ( 1,74.286 )
};
\addplot[rgb color={25, 0, 213}, very thick,  mark size=2pt, mark=., fill opacity=0.5, draw opacity=0.75] coordinates {
( 1,400 ) ( 2,200 ) ( 3,133.333 ) ( 4,100 ) ( 5,80 ) ( 6,66.667 ) ( 7,57.143 ) ( 8,50 ) ( 9,44.444 ) ( 10,40 ) ( 11,36.364 ) ( 12,33.333 ) ( 13,30.769 ) ( 14,28.571 ) ( 15,26.667 ) ( 16,25 ) ( 17,23.529 ) ( 18,22.222 ) ( 19,21.053 ) ( 20,20 ) ( 20,0 ) ( 19,0 ) ( 18,0 ) ( 17,0 ) ( 16,0 ) ( 15,0 ) ( 14,0 ) ( 13,0 ) ( 12,0 ) ( 11,0 ) ( 10,0 ) ( 9,0 ) ( 8,0 ) ( 7,0 ) ( 6,50 ) ( 5,56.667 ) ( 4,63.333 ) ( 3,70 ) ( 2,76.667 ) ( 1,83.333 )
};
\addplot[rgb color={50, 0, 172}, very thick,  mark size=2pt, mark=., fill opacity=0.5, draw opacity=0.75] coordinates {
( 1,400 ) ( 2,200 ) ( 3,133.333 ) ( 4,100 ) ( 5,80 ) ( 6,66.667 ) ( 7,57.143 ) ( 8,50 ) ( 9,44.444 ) ( 10,40 ) ( 11,36.364 ) ( 12,33.333 ) ( 13,30.769 ) ( 14,28.571 ) ( 15,26.667 ) ( 16,25 ) ( 17,23.529 ) ( 18,22.222 ) ( 19,21.053 ) ( 20,20 ) ( 20,0 ) ( 19,0 ) ( 18,0 ) ( 17,0 ) ( 16,0 ) ( 15,0 ) ( 14,0 ) ( 13,0 ) ( 12,0 ) ( 11,0 ) ( 10,0 ) ( 9,0 ) ( 8,0 ) ( 7,0 ) ( 6,0 ) ( 5,64 ) ( 4,72 ) ( 3,80 ) ( 2,88 ) ( 1,96 )
};
\addplot[rgb color={75, 0, 130}, very thick,  mark size=2pt, mark=., fill opacity=0.5, draw opacity=0.75] coordinates {
( 1,400 ) ( 2,200 ) ( 3,133.333 ) ( 4,100 ) ( 5,80 ) ( 6,66.667 ) ( 7,57.143 ) ( 8,50 ) ( 9,44.444 ) ( 10,40 ) ( 11,36.364 ) ( 12,33.333 ) ( 13,30.769 ) ( 14,28.571 ) ( 15,26.667 ) ( 16,25 ) ( 17,23.529 ) ( 18,22.222 ) ( 19,21.053 ) ( 20,20 ) ( 20,0 ) ( 19,0 ) ( 18,0 ) ( 17,0 ) ( 16,0 ) ( 15,0 ) ( 14,0 ) ( 13,0 ) ( 12,0 ) ( 11,0 ) ( 10,0 ) ( 9,0 ) ( 8,0 ) ( 7,0 ) ( 6,0 ) ( 5,0 ) ( 4,85 ) ( 3,95 ) ( 2,105 ) ( 1,115 )
};
\addplot[rgb color={129, 43, 166}, very thick,  mark size=2pt, mark=., fill opacity=0.5, draw opacity=0.75] coordinates {
( 1,400 ) ( 2,200 ) ( 3,133.333 ) ( 4,100 ) ( 5,80 ) ( 6,66.667 ) ( 7,57.143 ) ( 8,50 ) ( 9,44.444 ) ( 10,40 ) ( 11,36.364 ) ( 12,33.333 ) ( 13,30.769 ) ( 14,28.571 ) ( 15,26.667 ) ( 16,25 ) ( 17,23.529 ) ( 18,22.222 ) ( 19,21.053 ) ( 20,20 ) ( 20,0 ) ( 19,0 ) ( 18,0 ) ( 17,0 ) ( 16,0 ) ( 15,0 ) ( 14,0 ) ( 13,0 ) ( 12,0 ) ( 11,0 ) ( 10,0 ) ( 9,0 ) ( 8,0 ) ( 7,0 ) ( 6,0 ) ( 5,0 ) ( 4,0 ) ( 3,120 ) ( 2,133.333 ) ( 1,146.667 )
};
\addplot[rgb color={184, 87, 202}, very thick,  mark size=2pt, mark=., fill opacity=0.5, draw opacity=0.75] coordinates {
( 1,400 ) ( 2,200 ) ( 3,133.333 ) ( 4,100 ) ( 5,80 ) ( 6,66.667 ) ( 7,57.143 ) ( 8,50 ) ( 9,44.444 ) ( 10,40 ) ( 11,36.364 ) ( 12,33.333 ) ( 13,30.769 ) ( 14,28.571 ) ( 15,26.667 ) ( 16,25 ) ( 17,23.529 ) ( 18,22.222 ) ( 19,21.053 ) ( 20,20 ) ( 20,0 ) ( 19,0 ) ( 18,0 ) ( 17,0 ) ( 16,0 ) ( 15,0 ) ( 14,0 ) ( 13,0 ) ( 12,0 ) ( 11,0 ) ( 10,0 ) ( 9,0 ) ( 8,0 ) ( 7,0 ) ( 6,0 ) ( 5,0 ) ( 4,0 ) ( 3,0 ) ( 2,190 ) ( 1,210 )
};
\addplot[color=black, very thick,  mark size=2pt, mark=.] coordinates {
( 1,400 ) ( 2,200 ) ( 3,133.333 ) ( 4,100 ) ( 5,80 ) ( 6,66.667 ) ( 7,57.143 ) ( 8,50 ) ( 9,44.444 ) ( 10,40 ) ( 11,36.364 ) ( 12,33.333 ) ( 13,30.769 ) ( 14,28.571 ) ( 15,26.667 ) ( 16,25 ) ( 17,23.529 ) ( 18,22.222 ) ( 19,21.053 ) ( 20,20 )
};
\end{axis}
\end{tikzpicture}
 \caption{\revise{Regions of ordered e-value sequences, for $m=20$ and $\alpha=0.05$.
The $x$-axis gives the rank and the $y$-axis gives the ordered e-values $\evalue_{(1)}\geq \cdots \geq \evalue_{(m)}$.
The black curve is the eBH rejection boundary, i.e., an ordered sequence that lies entirely below the black curve leads eBH to reject no hypotheses.
The colored regions show ordered sequences for which $\CeBH$ rejects at least $k$ hypotheses while eBH rejects none; the case $k=20$ corresponds to \Cref{ex: eBH+}.
These regions were constructed by greedily choosing the smallest $\evalue_{(m)}, \ldots, \evalue_{(1)}$, in that order, to achieve the desired mean-consistency.}}
    \label{fig:example1}
\end{figure}

The $\CeBH$ procedure and mean-consistency are qualitatively different from the eBH procedure and self-consistency. We will enumerate some unique properties of discovery sets satisfying mean-consistency and the discoveries made by the $\CeBH$ procedure.
First, rejection of a particular set $\mathbf{R}$ depends not only on the e-values corresponding to hypotheses in $\mathbf{R}$, but also on the e-values corresponding to other hypotheses.
\begin{example}[Dependence of non-rejected e-values]
Consider a case where $m=2$, and $\mathbf{e}_1=9/(5\alpha)$ and $\mathbf{e}_2=0$. In this case, nothing is rejected. However, if we increase $\mathbf{e}_2$ to $1/(5\alpha)$, we obtain $\rejcol_\alpha^\CeBH = \{\emptyset, \{1\}\}$. Increasing $\mathbf{e}_2$, apparently, facilitates the rejection of a set of hypotheses that does not include $H_2$.
\end{example}

Secondly, $\rejcol_\alpha^\CeBH$ may reject hypotheses for which the corresponding e-value is less than $\alpha^{-1}$.
\begin{example}[Rejecting e-values less than $1 / \alpha$]
To see an example, consider $m=2$, $\mathbf{e}_1=3/(2\alpha)$ and $\mathbf{e}_2=1/(2\alpha)$. It is easily checked that $\{1,2\} \in \rejcol_\alpha^\CeBH$, implying that $H_2$ can be rejected with FDR control at $\alpha$, although $\mathbf{e}_2<\alpha^{-1}$.
\end{example}
Translated to p-values, these properties of the procedure of \eqref{eq: proc mean e} are shared by adaptive FDR control procedures that plug in an estimate $\boldsymbol{\hat\pi}_0$ of $\pi_{0,\mathrm{P}} = |N_\mathrm{P}|/m$ into a procedure controlling FDR at $\pi_{0,\mathrm{P}}\alpha$
\citep{storey2004strong, benjamini2006adaptive, blanchard2009adaptive}. For such procedures, rejection of a hypothesis
may also depend on p-values of other hypotheses through $\boldsymbol{\hat\pi}_0$, and the procedure may reject hypotheses with p-values up to $\alpha/\boldsymbol{\hat\pi}_0 > \alpha$. The $\CeBH$ procedure can therefore be seen as an adaptive procedure, even though it was not explicitly constructed using an estimate of $\pi_{0,\mathrm{P}}$.

Lastly, $\textCeBH$ does not necessarily output a single mean-consistent discovery set that has the largest cardinality (containing all other mean-consistent sets as a subset), a property which holds for self-consistent sets. There can be multiple discovery sets of the largest cardinality in $\rejcol_\alpha^\CeBH$.
\begin{example}[Multiple largest mean-consistent discovery sets]
    Let $m=3$ and $\evalue_1 = 2/\alpha$, and $\evalue_2 = 1/(2\alpha)$, and $\evalue_3 = 1 /(2\alpha)$. Then, both $\rejset^\CeBH_\alpha = \{1, 2\}$ and $\{1, 3\}$ are in $\rejcol^\CeBH_\alpha$, but $\{1,2,3\}$ is not since $\evalue_{\{2, 3\}} = 1 / (2\alpha) < 2 / (3\alpha)$. \label{ex: non-unique}
\end{example}

\subsection{Compound e-values and the universality of eBH}
\citet{ignatiadis_asymptotic_compound_2025} showed that eBH, when applied to compound e-values, is a universal procedure for FDR control. They showed that every FDR controlling procedure can be written as the eBH procedure applied to some set of compound e-values. \emph{Compound e-values} $(\widetilde{\evalue}_1, \dots, \widetilde{\evalue}_m)$ are nonnegative random variables that satisfy the following condition
\begin{equation}\label{eq:compound-def}
    \sum_{i \in N_\pdist}\expect_\pdist[\widetilde{\evalue}_i] \leq m \text{ for all }\pdist \in M.
\end{equation}
In words, no matter which hypotheses constitute the true set of null hypotheses $N_\pdist$ for a distribution $\pdist$, the expected sum of the corresponding compound e-values cannot exceed $m$.

\revise{Our results are complementary to the above universality result on eBH. Instead of reducing every FDR procedure to using eBH with compound e-values, we reduce it to the $\textCeBH$ procedure with an e-value for each intersection hypothesis. In this section we describe two different conversions, and we keep the direction of each conversion explicit. First, starting from compound e-values, we construct an e-collection for all intersection hypotheses. Second, starting from an e-collection and a rejected set produced by e-Closure, we construct compound e-values.}
\revise{We first start with compound e-values $(\widetilde{\evalue}_1,\ldots,\widetilde{\evalue}_m)$ satisfying \eqref{eq:compound-def}. From these given compound e-values, define for any $S \subseteq [m]$}
\begin{align}
    \evalue_S = \frac{1}{m}\sum_{i \in S}\widetilde{\evalue}_i \label{eq:c-int-evalue}
\end{align}
\revise{\begin{proposition}    
Let $(\widetilde{\evalue}_1, \dots, \widetilde{\evalue}_m)$ be compound e-values. Then $\evalue_S$, as defined in \eqref{eq:c-int-evalue}, is a valid e-value for $H_S$ for each $S \subseteq [m]$.
\end{proposition}
This is true via the definition of compound e-values in \eqref{eq:compound-def}. The factor $1/m$ is important: this is not the ordinary average over $S$, which would use $1/|S|$, but the scaling required by the compound-e-value condition. Thus, \eqref{eq:c-int-evalue} constructs an e-collection $\mathbf{E}=(\evalue_S)_{S\subseteq [m]}$ from the compound e-values. Consequently, we can define a corresponding \emph{compound mean-consistent} set $\widetilde{\rejcol}^\CeBH_\alpha \coloneqq \Rcal^\FDR_\alpha(\mathbf{E})$.}
\revise{Define the \emph{compound $\textCeBH$ procedure} w.r.t.\ $\widetilde{\rejcol}^\CeBH_\alpha$ as}
\[
\revise{\widetilde{\rejset}^\CeBH_\alpha} \coloneqq \rejset^{(\revise{\widetilde{\mathbf{r}}_\alpha^\CeBH})},
\qquad
\revise{\widetilde{\mathbf{r}}_\alpha^\CeBH} \coloneqq \max\ \{r \in [m]: \rejset^{(r)} \in \revise{\widetilde{\rejcol}^\CeBH_\alpha}\},
\] \revise{where $\rejset^{(r)}$ here refers to the set of the $r$ largest compound e-values.}
\revise{Similarly, we can define the \emph{compound self-consistent} set as $\widetilde{\rejcol}^{\SC}_\alpha \coloneqq \{R \in 2^{[m]}: \widetilde{\evalue}_i \geq m / (\alpha|R|)\text{ for all }i \in R\}$
and the \emph{compound eBH procedure} as outputting $\widetilde{\rejset}^\eBH_\alpha$, the largest set in $\widetilde{\rejcol}^\SC_\alpha$.}
\begin{theorem}\label{thm: compound ebh equal eclosure}
    We have that $\revise{\widetilde{\rejcol}^\SC_\alpha = \widetilde{\rejcol}^\CeBH_\alpha}$ and consequently $\revise{\widetilde{\rejset}^\eBH_\alpha = \widetilde{\rejset}^\CeBH_\alpha}$.
\end{theorem}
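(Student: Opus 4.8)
The plan is to prove the set equality $\widetilde{\rejcol}^\SC_\alpha = \widetilde{\rejcol}^\CeBH_\alpha$ by establishing the two inclusions separately, and then read off the statement about the procedures' outputs. Write $e_i \coloneqq \widetilde{\evalue}_i$ and recall that, by definition of $\Rcal^\FDR_\alpha$ applied to the e-collection $\evalue_S = \tfrac1m\sum_{i\in S}e_i$, membership $R \in \widetilde{\rejcol}^\CeBH_\alpha$ means $\tfrac1m\sum_{i\in S} e_i \ge \FDP_S(R)/\alpha = |R\cap S|/(\alpha(|R|\vee 1))$ for \emph{every} $S \in 2^{[m]}$, whereas $R \in \widetilde{\rejcol}^\SC_\alpha$ means $e_i \ge m/(\alpha|R|)$ for every $i \in R$. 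The empty set lies in both collections trivially, since $\FDP_S(\emptyset)=0$ and the self-consistency condition is vacuous, so I would fix an arbitrary nonempty $R$.

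For the inclusion $\widetilde{\rejcol}^\CeBH_\alpha \subseteq \widetilde{\rejcol}^\SC_\alpha$, the key observation is that the mean-consistency inequality is required for \emph{all} $S$, in particular for singletons. Taking $S=\{j\}$ with $j\in R$ gives $\tfrac1m e_j \ge |R\cap\{j\}|/(\alpha|R|) = 1/(\alpha|R|)$, i.e.\ $e_j \ge m/(\alpha|R|)$; since $j\in R$ was arbitrary, $R$ is self-consistent. For the reverse inclusion $\widetilde{\rejcol}^\SC_\alpha \subseteq \widetilde{\rejcol}^\CeBH_\alpha$, I would exploit nonnegativity of the compound e-values to discard the terms of $S$ outside $R$ and then bound the survivors from below: for any $S$, $\tfrac1m\sum_{i\in S}e_i \ge \tfrac1m\sum_{i\in S\cap R} e_i \ge \tfrac{|S\cap R|}{m}\cdot\tfrac{m}{\alpha|R|} = \tfrac{|S\cap R|}{\alpha|R|} = \FDP_S(R)/\alpha$, which is exactly the mean-consistency condition. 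Together these give $\widetilde{\rejcol}^\SC_\alpha = \widetilde{\rejcol}^\CeBH_\alpha$.

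The claim about the procedures then follows because each outputs the largest set of its (now identical) collection. To make ``largest'' unambiguous I would first note that both collections are closed under unions: if $R_1,R_2$ are self-consistent and $i\in R_1$, then $e_i\ge m/(\alpha|R_1|)\ge m/(\alpha|R_1\cup R_2|)$, and symmetrically for $i\in R_2$, so $R_1\cup R_2$ is self-consistent and each collection has a unique maximal element. Since the collections coincide, this maximal set is the same for both; invoking the standard step-up structure of eBH that the maximal self-consistent set is attained at a prefix $\rejset^{(r)}$ of the e-value ordering, it equals the set selected by $\widetilde{\rejset}^\CeBH_\alpha$, whence $\widetilde{\rejset}^\eBH_\alpha = \widetilde{\rejset}^\CeBH_\alpha$.

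Because both inclusions collapse to a one-line manipulation, I do not anticipate a serious obstacle; the only point requiring care is the corollary, where one must reconcile the two definitional conventions --- ``the largest set in $\widetilde{\rejcol}^\SC_\alpha$'' versus ``the top-$r$ set for the largest admissible $r$'' --- by verifying that the unique maximal (self- or mean-)consistent set is indeed realized at a prefix of the e-value ordering rather than at some unordered subset.
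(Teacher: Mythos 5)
Your proof is correct and follows essentially the same route as the paper's: the inclusion $\widetilde{\rejcol}^\SC_\alpha \subseteq \widetilde{\rejcol}^\CeBH_\alpha$ via dropping terms outside $R$ and lower-bounding each survivor by $m/(\alpha|R|)$, and the reverse inclusion via the singleton sets $S=\{j\}$. Your extra care in the last step (union-closedness of the self-consistent collection and the prefix structure of its unique maximal element) fills in a detail the paper leaves implicit when it asserts ``consequently $\widetilde{\rejset}^\eBH_\alpha = \widetilde{\rejset}^\CeBH_\alpha$,'' and is a welcome addition rather than a deviation.
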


\revise{Thus, when we start from compound e-values and construct the e-collection by \eqref{eq:c-int-evalue}, eBH and $\CeBH$ are equivalent. We now switch directions. Suppose instead that we start from an e-collection $\mathbf{E}=(\evalue_S)_{S\subseteq[m]}$ and the corresponding discovery set $\rejset \in \Rcal^\FDR_\alpha(\mathbf{E})$ produced by the e-Closure Principle. From this given e-collection and rejected set, we can construct compound e-values. For each $i \in [m]$, the universal compound e-value derived from $\rejset$ \citep{ignatiadis_asymptotic_compound_2025} is}
\begin{align}
    \widetilde{\evalue}_i = \frac{m}{\alpha(|\rejset|\vee 1)} \cdot \mathbf{1}\{i \in \rejset\}. \label{eq: universal compound evalue}
\end{align}
\revise{More generally, a sufficient condition for constructing compound e-values from an e-collection is that the constructed variables satisfy the following inequality with respect to the given e-collection:}
\begin{align}
   \sum_{i \in S} \widetilde{\evalue}_i \leq m \cdot \evalue_S \text{ for all } S \in 2^{[m]} \label{eq: intersection bound}
\end{align}
\begin{proposition}\label{prop: compound e-values from e-collections}
    \revise{Let $\mathbf{E}=(\evalue_S)_{S\subseteq[m]}$ be an e-collection. Any nonnegative random variables $(\widetilde{\evalue}_1, \ldots, \widetilde{\evalue}_m)$ that satisfy \eqref{eq: intersection bound} almost surely are compound e-values. Further, if $\rejset \in \Rcal^\FDR_\alpha(\mathbf{E})$, then the universal compound e-values in \eqref{eq: universal compound evalue} satisfy \eqref{eq: intersection bound}.}
\end{proposition}

\section{FDR control via merging p-values} \label{sec: p-combining}
Many FDR control procedures start from p-values $\mathbf{p}_1, \ldots, \mathbf{p}_m$ for hypotheses $H_1, \ldots, H_m$. Let $\pvalue_{(i)}$ denote the $i$th smallest p-value for $i \in [m]$, with ties being broken by choosing the hypothesis with the smallest index.
There are several famous procedures, the choice of which depends on the assumptions that we are willing to make on the joint distribution of the p-values. Here, we will explore three such procedures. The first is BY \citep{benjamini_control_false_2001}, which is valid for any distribution of the p-values. The second is the procedure of \citet{su2018fdr} based on the FDR Linking Theorem. The Su method is valid under the PRDN assumption, a weaker variant of the PRDS assumption underlying BH. The third is BH itself. We will place these three methods in the context of the e-Closure Principle and show that the first two can be uniformly improved.

Since the e-Closure Principle depends on e-values, we will need to convert the input p-values to e-values. We use p-to-e calibrators for this \citep{shafer2011test}. A function $e(\mathbf{p})$ is a p-to-e calibrator if $e(\mathbf{p})$ is an e-value whenever $\mathbf{p}$ is a p-value, i.e., $\mathbb{E}_{\mathrm{P}}[e(\mathbf{p})]\leq 1$  whenever $\mathrm{P}(\mathbf{p}\leq t) \leq t$ for all $0 \leq t \leq 1$. One straightforward way to construct an FDR control procedure would be to calibrate $\mathbf{p}_1,  \ldots, \mathbf{p}_m$ to e-values and apply $\CeBH$. However, in general, this does not turn out to be the most powerful approach.

\subsection{\texorpdfstring{$\textcBY$}{closed BY}: FDR control under general dependence} \label{sec:BY}
When working with p-values, let $\rejset^{(i)}$ denote the set of indices corresponding to the $i$ smallest p-values and let $\rejset^{(0)} \coloneqq \emptyset$. \cite{benjamini_control_false_2001} proved that under general dependence, the following random set controls FDR.
$$\rejset_\alpha^{\BY} \coloneqq \rejset^{(\mathbf{r}_\alpha^\BY)}, \text{ where }\mathbf{r}_\alpha^\BY \coloneqq \max\{r \in [m]: \pvalue_{(r)} \leq \alpha r / (mh_m) \},$$
where $\mathbf{r}_\alpha^\BY$ is taken as 0 if the maximum does not exist.
Here, $h_m = \sum_{i=1}^m 1/i$ is the $m$th harmonic number.

\citet[Proposition 3]{xu_post-selection_inference_2022} formulated the following p-to-e calibrator for every $k \in \naturals$ and $\alpha \in (0, 1]$.
\begin{align}
e_k(\mathbf{p}) \coloneqq \frac{k 1\{ h_k \mathbf{p} \leq \alpha  \}   }{ \alpha  (\lceil k h_k  \mathbf{p} / \alpha  \rceil \vee 1)} \label{eq: by calibrator}
\end{align} is a p-to-e calibrator for all $\alpha \in (0, 1]$ and all $k \in \mathbb{N}$.
They showed that the eBH procedure applied to $e_m(\pvalue_1), \dots, e_m(\pvalue_m)$ is equivalent to the BY procedure. We will use this same calibrator, though in a different way, to define an e-collection.

\revise{For every nonempty $S$,} $\evalue_S$ is the average of the e-values obtained by applying $e$ to the p-values of hypotheses in $S$; however, the choice of p-to-e calibrator will now depend on $|S|$ (instead of $m$) as well as on $\alpha$.
\begin{lemma} Under $H_S$, the following is an e-value:
\begin{equation} \label{eq: e for BY}
\mathbf{e}_S = \sum_{i \in S}  \frac{ 1\{ h_{|S|} \mathbf{p}_i \leq \alpha  \}   }{ \alpha  (\lceil |S| h_{|S|} \mathbf{p}_i / \alpha  \rceil \vee 1) }
\end{equation}
\end{lemma}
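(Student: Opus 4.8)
The plan is to observe that the apparently complicated expression in \eqref{eq: e for BY} is simply the arithmetic average of the calibrated e-values $e_{|S|}(\pvalue_i)$ over $i \in S$, and then to conclude using two facts already available in the excerpt: that $e_k$ is a p-to-e calibrator for every $k$ (Proposition 3 of \citet{xu_post-selection_inference_2022}, restated above), and that averaging e-values preserves the e-value property under arbitrary dependence by linearity of expectation. So the proof will have essentially no analytic content; it is a matching-of-formulas argument followed by a one-line expectation bound.

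First I would carry out the algebraic identification. Comparing the summand in \eqref{eq: e for BY} with the displayed definition of $e_k(\mathbf{p})$ and substituting $k = |S|$, each term satisfies
\[
\frac{\ind\{h_{|S|}\pvalue_i \leq \alpha\}}{\alpha\,(\lceil |S|\, h_{|S|}\pvalue_i/\alpha \rceil \vee 1)}
= \frac{e_{|S|}(\pvalue_i)}{|S|},
\]
the extra factor $|S|$ in the denominator being exactly the leading $k = |S|$ in the numerator of $e_{|S|}$. Hence $\evalue_S = \tfrac{1}{|S|}\sum_{i \in S} e_{|S|}(\pvalue_i)$, which also confirms the local e-value reported for BY in \Cref{tab:evalues}.

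Next I would fix any $\mathrm{P} \in H_S$. Since $H_S = \cap_{i \in S} H_i$, we have $\mathrm{P} \in H_i$ for every $i \in S$, so each $\pvalue_i$ is a valid p-value under $\mathrm{P}$, and the calibrator property of $e_{|S|}$ gives $\expect_{\mathrm{P}}[e_{|S|}(\pvalue_i)] \leq 1$. Linearity of expectation then yields
\[
\expect_{\mathrm{P}}[\evalue_S]
= \frac{1}{|S|}\sum_{i \in S}\expect_{\mathrm{P}}\big[e_{|S|}(\pvalue_i)\big]
\leq \frac{1}{|S|}\cdot |S| = 1,
\]
so $\evalue_S$ is an e-value for $H_S$. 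I expect no genuine obstacle: the only real content is the first step's identification, and the only subtlety to watch is that the calibrator's internal index $k$ must be set to $|S|$ (not $m$), which is precisely what distinguishes this e-collection from the one recovering plain BY; because the merging is an average, no dependence assumption on the p-values is required, which is what will make the resulting $\cBY$ procedure valid under general dependence.
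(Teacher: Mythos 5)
Your proposal is correct and follows exactly the paper's own argument: the paper's proof likewise takes $k=|S|$, applies the calibrator of \citet{xu_post-selection_inference_2022} to each $\pvalue_i$ with $i \in S$, and notes that the average of the resulting e-values is again an e-value. Your more explicit algebraic identification of the summand with $e_{|S|}(\pvalue_i)/|S|$ and the one-line expectation bound simply spell out the same steps in detail.
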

\begin{proof}
We take $k=|S|$, apply the p-to-e calibrator to $\mathbf{p}_i$ for each $i \in S$, and note that the average of the resulting e-values is again an e-value.
\end{proof}
Next, we can build an FDR control method by plugging in $\mathbf{E} = (\mathbf{e}_S)_{S \in 2^{[m]}}$, for the e-values we have just defined in (\ref{eq: e for BY}), into the general e-Closure Procedure (\ref{eq: e-closure}).
Formally, we can define the simultaneous $\cBY$ (closed BY) procedure as the simultaneous discovery set resulting from the e-Closure Principle applied to \eqref{eq: e for BY}, i.e., $\rejcol_\alpha^\cBY \coloneqq \Rcal_\alpha^\FDR(\ecol)$. The $\cBY$ procedure is then one of the largest discovery sets in $\rejcol_\alpha^\cBY$, i.e.,
\begin{align}
        \rejset^\cBY_\alpha \coloneqq \rejset^{(\mathbf{r}_\alpha^\cBY)}, \text{ where }\mathbf{r}_\alpha^\cBY &\coloneqq \max\bigl(\{r \in [m]: \rejset^{(r)} \in \rejcol^\cBY_\alpha\} \cup \{0\}\bigr).
\end{align}
The simultaneous $\cBY$ procedure controls FDR under arbitrary dependence among the p-values as a result of \Cref{thm: fdr-closure}, and uniformly improves BY in both simultaneity and power, as stated in the following theorem.
\begin{theorem}
    \revise{Under arbitrary dependence among the p-values, $\rejcol_\alpha^\cBY$ controls FDR at level $\alpha$. If $m>1$, $\rejcol_\alpha^\cBY$ is a uniform improvement in simultaneity and power over $\rejset_\alpha^\BY$.}
\end{theorem}

First, we define the notion of a self-consistent discovery set at level $\alpha$ for p-values \citep{blanchard_two_simple_2008,su2018fdr}. $R$ is a self-consistent discovery set at level $\alpha$ if $\pvalue_i \leq \alpha |R| / m$ for all $i \in R$.
\begin{proof}
Let $R$ be a non-empty self-consistent discovery set at level $\alpha / h_m$. Note that $\rejset^\BY_\alpha$ is the largest self-consistent discovery set at level $\alpha / h_m$.
\begin{align}
\alpha\mathbf{e}_S
&\geq \sum_{i\in S \cap R} \frac{ 1\{ h_{|S|} \mathbf{p}_i \leq \alpha  \}   }{ \lceil |S| h_{|S|} \mathbf{p}_i / \alpha  \rceil \vee 1 }
= \sum_{i\in S \cap R} \frac{1} {\lceil |S| h_{|S|} \mathbf{p}_i / \alpha  \rceil \vee 1 }
\geq \sum_{i\in S \cap R} \frac{1} {\lceil |S| h_{|S|} |R| / m h_m  \rceil \vee 1}
\geq \frac{|S \cap R|} {|R| }.
\end{align}
This shows that $R \in \rejcol_\alpha^\cBY$. The first inequality follows from the definition $\evalue_S$ from \eqref{eq: e for BY} and the second inequality is by $|S|h_{|S|} \leq mh_m$ as $S \subseteq [m]$. The same is trivially true if $\rejset^\BY_\alpha=\emptyset$.

\revise{For strictness for every $m>1$, take ordered p-values satisfying}
\[
\revise{\pvalue_{(1)}=\cdots=\pvalue_{(m-1)}=0,
\qquad \frac{\alpha}{h_m}<\pvalue_{(m)}\leq\alpha.
}
\]
\revise{Then BY rejects the first $m-1$ hypotheses, and $\textcBY$ rejects all $m$ hypotheses: for every nonempty $S$, each zero p-value in $S$ contributes $1$ to $\alpha\evalue_S$, while if $S$ contains only the final index, its contribution is also $1$ because $\pvalue_{(m)}\leq\alpha$. Hence $\alpha\evalue_S\geq |S|/m=\FDP_S([m])$ for every $S$.}

The following example demonstrates the power improvement of $\cBY$.
\end{proof}

We now construct an example where BY fails to reject any hypothesis, while $\textcBY$ rejects all.

\begin{example}\label{ex: closed by example}
Let $m \geq 4$, \revise{and relabel hypotheses in increasing p-value order.} Then $\rejset_\alpha^\BY$ is empty, but $\rejset_\alpha^\cBY = [m]$ when the following event holds
\[
    \frac{i\alpha}{m h_m} < \revise{\mathbf{p}_{(i)}} \leq \frac{(i+1)\alpha}{m h_m} \quad \text{for } i = 1, \ldots, m-1, \quad \text{and} \quad \frac{\alpha}{h_{m - 1}} \geq \revise{\mathbf{p}_{(m)}} > \frac{\alpha}{h_m}.
\]
\end{example}
We defer a proof of this to \Cref{sec: closed by example proof}. 
The improvement of $\textcBY$ upon BY is similar in spirit to the improvement of $\CeBH$ relative to eBH.
However, $\textcBY$ will never reject hypotheses $H_i$ with $\mathbf{p}_i > \alpha$.
\ifarver{
We illustrate the aforementioned example for $m = 20$ in \Cref{fig:BY}.
\begin{figure}[!ht]
\centering
\begin{tikzpicture}[scale=.7]
\begin{axis}[
	xmin = 1,
	xmax = 20,
	xtick = {5,10,15,20},
	ymin = 0,
	ymax = 0.05,
	ytick = {0,0.01,0.02,0.03,0.04,0.05},
	yticklabels = {0,0.01,0.02,0.03,0.04,0.05},
	 ytick scale label code/.code={},
	ylabel=$p$-value,
	xlabel=rank,
        height=10cm,
	width=10cm,
		legend style={legend columns=1, at={(0.275,0.975)}}
]
 \addplot[color=blue, very thick,  mark size=2pt, mark=*, fill=blue, fill opacity=0.5, draw opacity=0.75] coordinates {
( 1,7e-04 ) ( 2,0.0014 ) ( 3,0.0021 ) ( 4,0.0028 ) ( 5,0.0035 ) ( 6,0.0042 ) ( 7,0.0049 ) ( 8,0.0056 ) ( 9,0.0063 ) ( 10,0.0069 ) ( 11,0.0076 ) ( 12,0.0083 ) ( 13,0.009 ) ( 14,0.0097 ) ( 15,0.0104 ) ( 16,0.0111 ) ( 17,0.0118 ) ( 18,0.0125 ) ( 19,0.0132 ) ( 20,0.0139 ) ( 20,1 ) ( 19,1 ) ( 18,1 ) ( 17,1 ) ( 16,1 ) ( 15,1 ) ( 14,1 ) ( 13,1 ) ( 12,1 ) ( 11,1 ) ( 10,1 ) ( 9,1 ) ( 8,1 ) ( 7,0.0077 ) ( 6,0.0063 ) ( 5,0.0053 ) ( 4,0.0044 ) ( 3,0.0044 ) ( 2,0.0038 ) ( 1,0.003 )
};
 \addlegendentry{$k=7$}
\addplot[color=red, very thick,  mark size=2pt, mark=*, fill=red, fill opacity=0.5, draw opacity=0.75] coordinates {
( 1,7e-04 ) ( 2,0.0014 ) ( 3,0.0021 ) ( 4,0.0028 ) ( 5,0.0035 ) ( 6,0.0042 ) ( 7,0.0049 ) ( 8,0.0056 ) ( 9,0.0063 ) ( 10,0.0069 ) ( 11,0.0076 ) ( 12,0.0083 ) ( 13,0.009 ) ( 14,0.0097 ) ( 15,0.0104 ) ( 16,0.0111 ) ( 17,0.0118 ) ( 18,0.0125 ) ( 19,0.0132 ) ( 20,0.0139 ) ( 20,1 ) ( 19,0.0333 ) ( 18,0.0273 ) ( 17,0.024 ) ( 16,0.0208 ) ( 15,0.0129 ) ( 14,0.0129 ) ( 13,0.0129 ) ( 12,0.0129 ) ( 11,0.0129 ) ( 10,0.0129 ) ( 9,0.0129 ) ( 8,0.0129 ) ( 7,0.0129 ) ( 6,0.0129 ) ( 5,0.0123 ) ( 4,0.0057 ) ( 3,0.0057 ) ( 2,0.0057 ) ( 1,0.0057 )
};
 \addlegendentry{$k=19$}
 \addplot[color=black, very thick,  mark size=2pt, mark=*] coordinates {
( 1,7e-04 ) ( 2,0.0014 ) ( 3,0.0021 ) ( 4,0.0028 ) ( 5,0.0035 ) ( 6,0.0042 ) ( 7,0.0049 ) ( 8,0.0056 ) ( 9,0.0063 ) ( 10,0.0069 ) ( 11,0.0076 ) ( 12,0.0083 ) ( 13,0.009 ) ( 14,0.0097 ) ( 15,0.0104 ) ( 16,0.0111 ) ( 17,0.0118 ) ( 18,0.0125 ) ( 19,0.0132 ) ( 20,0.0139 )
};
\end{axis}
\end{tikzpicture}~
\begin{tikzpicture}[scale=.7, define rgb/.code={\definecolor{mycolor}{RGB}{#1}},
                    rgb color/.style={define rgb={#1},mycolor}]
\begin{axis}[
xmin = 1,
	xmax = 20,
	xtick = {5,10,15,20},
	ymin = 0,
	ymax = 0.05,
	ytick = {0,0.01,0.02,0.03,0.04,0.05},
	yticklabels = {0,0.01,0.02,0.03,0.04,0.05},
	 ytick scale label code/.code={},
	ylabel=,
	xlabel=rank,
        height=10cm,
	width=10cm
]
\addplot[rgb color={255, 0, 0}, very thick,  mark size=2pt, mark=., fill opacity=0.5, draw opacity=0.75] coordinates {
( 1,7e-04 ) ( 2,0.0014 ) ( 3,0.0021 ) ( 4,0.0028 ) ( 5,0.0035 ) ( 6,0.0042 ) ( 7,0.0049 ) ( 8,0.0056 ) ( 9,0.0063 ) ( 10,0.0069 ) ( 11,0.0076 ) ( 12,0.0083 ) ( 13,0.009 ) ( 14,0.0097 ) ( 15,0.0104 ) ( 16,0.0111 ) ( 17,0.0118 ) ( 18,0.0125 ) ( 19,0.0132 ) ( 20,0.0139 ) ( 20,1 ) ( 19,0.0333 ) ( 18,0.0273 ) ( 17,0.024 ) ( 16,0.0208 ) ( 15,0.0129 ) ( 14,0.0129 ) ( 13,0.0129 ) ( 12,0.0129 ) ( 11,0.0129 ) ( 10,0.0129 ) ( 9,0.0129 ) ( 8,0.0129 ) ( 7,0.0129 ) ( 6,0.0129 ) ( 5,0.0123 ) ( 4,0.0057 ) ( 3,0.0057 ) ( 2,0.0057 ) ( 1,0.0057 )
};
\addplot[rgb color={255, 195, 0}, very thick,  mark size=2pt, mark=., fill opacity=0.5, draw opacity=0.75] coordinates {
( 1,7e-04 ) ( 2,0.0014 ) ( 3,0.0021 ) ( 4,0.0028 ) ( 5,0.0035 ) ( 6,0.0042 ) ( 7,0.0049 ) ( 8,0.0056 ) ( 9,0.0063 ) ( 10,0.0069 ) ( 11,0.0076 ) ( 12,0.0083 ) ( 13,0.009 ) ( 14,0.0097 ) ( 15,0.0104 ) ( 16,0.0111 ) ( 17,0.0118 ) ( 18,0.0125 ) ( 19,0.0132 ) ( 20,0.0139 ) ( 20,1 ) ( 19,1 ) ( 18,1 ) ( 17,1 ) ( 16,1 ) ( 15,0.0204 ) ( 14,0.0193 ) ( 13,0.0115 ) ( 12,0.0115 ) ( 11,0.0115 ) ( 10,0.0115 ) ( 9,0.0115 ) ( 8,0.0115 ) ( 7,0.0115 ) ( 6,0.0115 ) ( 5,0.0078 ) ( 4,0.0067 ) ( 3,0.0065 ) ( 2,0.0062 ) ( 1,0.0051 )
};
\addplot[rgb color={255, 255, 0}, very thick,  mark size=2pt, mark=., fill opacity=0.5, draw opacity=0.75] coordinates {
( 1,7e-04 ) ( 2,0.0014 ) ( 3,0.0021 ) ( 4,0.0028 ) ( 5,0.0035 ) ( 6,0.0042 ) ( 7,0.0049 ) ( 8,0.0056 ) ( 9,0.0063 ) ( 10,0.0069 ) ( 11,0.0076 ) ( 12,0.0083 ) ( 13,0.009 ) ( 14,0.0097 ) ( 15,0.0104 ) ( 16,0.0111 ) ( 17,0.0118 ) ( 18,0.0125 ) ( 19,0.0132 ) ( 20,0.0139 ) ( 20,1 ) ( 19,1 ) ( 18,1 ) ( 17,1 ) ( 16,1 ) ( 15,1 ) ( 14,1 ) ( 13,0.0184 ) ( 12,0.0128 ) ( 11,0.0128 ) ( 10,0.0128 ) ( 9,0.0128 ) ( 8,0.0124 ) ( 7,0.009 ) ( 6,0.0084 ) ( 5,0.0068 ) ( 4,0.0068 ) ( 3,0.0052 ) ( 2,0.0047 ) ( 1,0.0047 )
};
\addplot[rgb color={170, 213, 0}, very thick,  mark size=2pt, mark=., fill opacity=0.5, draw opacity=0.75] coordinates {
( 1,7e-04 ) ( 2,0.0014 ) ( 3,0.0021 ) ( 4,0.0028 ) ( 5,0.0035 ) ( 6,0.0042 ) ( 7,0.0049 ) ( 8,0.0056 ) ( 9,0.0063 ) ( 10,0.0069 ) ( 11,0.0076 ) ( 12,0.0083 ) ( 13,0.009 ) ( 14,0.0097 ) ( 15,0.0104 ) ( 16,0.0111 ) ( 17,0.0118 ) ( 18,0.0125 ) ( 19,0.0132 ) ( 20,0.0139 ) ( 20,1 ) ( 19,1 ) ( 18,1 ) ( 17,1 ) ( 16,1 ) ( 15,1 ) ( 14,1 ) ( 13,1 ) ( 12,0.0177 ) ( 11,0.0102 ) ( 10,0.0102 ) ( 9,0.0102 ) ( 8,0.0102 ) ( 7,0.0102 ) ( 6,0.0078 ) ( 5,0.0067 ) ( 4,0.0067 ) ( 3,0.0063 ) ( 2,0.0055 ) ( 1,0.0044 )
};
\addplot[rgb color={85, 170, 0}, very thick,  mark size=2pt, mark=., fill opacity=0.5, draw opacity=0.75] coordinates {
( 1,7e-04 ) ( 2,0.0014 ) ( 3,0.0021 ) ( 4,0.0028 ) ( 5,0.0035 ) ( 6,0.0042 ) ( 7,0.0049 ) ( 8,0.0056 ) ( 9,0.0063 ) ( 10,0.0069 ) ( 11,0.0076 ) ( 12,0.0083 ) ( 13,0.009 ) ( 14,0.0097 ) ( 15,0.0104 ) ( 16,0.0111 ) ( 17,0.0118 ) ( 18,0.0125 ) ( 19,0.0132 ) ( 20,0.0139 ) ( 20,1 ) ( 19,1 ) ( 18,1 ) ( 17,1 ) ( 16,1 ) ( 15,1 ) ( 14,1 ) ( 13,1 ) ( 12,1 ) ( 11,0.0171 ) ( 10,0.0083 ) ( 9,0.0083 ) ( 8,0.0083 ) ( 7,0.0083 ) ( 6,0.0083 ) ( 5,0.0083 ) ( 4,0.0067 ) ( 3,0.0049 ) ( 2,0.0045 ) ( 1,0.0045 )
};
\addplot[rgb color={0, 128, 0}, very thick,  mark size=2pt, mark=., fill opacity=0.5, draw opacity=0.75] coordinates {
( 1,7e-04 ) ( 2,0.0014 ) ( 3,0.0021 ) ( 4,0.0028 ) ( 5,0.0035 ) ( 6,0.0042 ) ( 7,0.0049 ) ( 8,0.0056 ) ( 9,0.0063 ) ( 10,0.0069 ) ( 11,0.0076 ) ( 12,0.0083 ) ( 13,0.009 ) ( 14,0.0097 ) ( 15,0.0104 ) ( 16,0.0111 ) ( 17,0.0118 ) ( 18,0.0125 ) ( 19,0.0132 ) ( 20,0.0139 ) ( 20,1 ) ( 19,1 ) ( 18,1 ) ( 17,1 ) ( 16,1 ) ( 15,1 ) ( 14,1 ) ( 13,1 ) ( 12,1 ) ( 11,1 ) ( 10,0.0151 ) ( 9,0.0115 ) ( 8,0.0098 ) ( 7,0.0087 ) ( 6,0.0072 ) ( 5,0.0045 ) ( 4,0.0045 ) ( 3,0.0045 ) ( 2,0.0045 ) ( 1,0.0045 )
};
\addplot[rgb color={0, 85, 85}, very thick,  mark size=2pt, mark=., fill opacity=0.5, draw opacity=0.75] coordinates {
( 1,7e-04 ) ( 2,0.0014 ) ( 3,0.0021 ) ( 4,0.0028 ) ( 5,0.0035 ) ( 6,0.0042 ) ( 7,0.0049 ) ( 8,0.0056 ) ( 9,0.0063 ) ( 10,0.0069 ) ( 11,0.0076 ) ( 12,0.0083 ) ( 13,0.009 ) ( 14,0.0097 ) ( 15,0.0104 ) ( 16,0.0111 ) ( 17,0.0118 ) ( 18,0.0125 ) ( 19,0.0132 ) ( 20,0.0139 ) ( 20,1 ) ( 19,1 ) ( 18,1 ) ( 17,1 ) ( 16,1 ) ( 15,1 ) ( 14,1 ) ( 13,1 ) ( 12,1 ) ( 11,1 ) ( 10,1 ) ( 9,0.0121 ) ( 8,0.0099 ) ( 7,0.0073 ) ( 6,0.007 ) ( 5,0.0065 ) ( 4,0.005 ) ( 3,0.0046 ) ( 2,0.0046 ) ( 1,0.0038 )
};
\addplot[rgb color={0, 43, 170}, very thick,  mark size=2pt, mark=., fill opacity=0.5, draw opacity=0.75] coordinates {
( 1,7e-04 ) ( 2,0.0014 ) ( 3,0.0021 ) ( 4,0.0028 ) ( 5,0.0035 ) ( 6,0.0042 ) ( 7,0.0049 ) ( 8,0.0056 ) ( 9,0.0063 ) ( 10,0.0069 ) ( 11,0.0076 ) ( 12,0.0083 ) ( 13,0.009 ) ( 14,0.0097 ) ( 15,0.0104 ) ( 16,0.0111 ) ( 17,0.0118 ) ( 18,0.0125 ) ( 19,0.0132 ) ( 20,0.0139 ) ( 20,1 ) ( 19,1 ) ( 18,1 ) ( 17,1 ) ( 16,1 ) ( 15,1 ) ( 14,1 ) ( 13,1 ) ( 12,1 ) ( 11,1 ) ( 10,1 ) ( 9,1 ) ( 8,0.0097 ) ( 7,0.0079 ) ( 6,0.0067 ) ( 5,0.0053 ) ( 4,0.0049 ) ( 3,0.0045 ) ( 2,0.0037 ) ( 1,0.0037 )
};
\addplot[rgb color={0, 0, 255}, very thick,  mark size=2pt, mark=., fill opacity=0.5, draw opacity=0.75] coordinates {
( 1,7e-04 ) ( 2,0.0014 ) ( 3,0.0021 ) ( 4,0.0028 ) ( 5,0.0035 ) ( 6,0.0042 ) ( 7,0.0049 ) ( 8,0.0056 ) ( 9,0.0063 ) ( 10,0.0069 ) ( 11,0.0076 ) ( 12,0.0083 ) ( 13,0.009 ) ( 14,0.0097 ) ( 15,0.0104 ) ( 16,0.0111 ) ( 17,0.0118 ) ( 18,0.0125 ) ( 19,0.0132 ) ( 20,0.0139 ) ( 20,1 ) ( 19,1 ) ( 18,1 ) ( 17,1 ) ( 16,1 ) ( 15,1 ) ( 14,1 ) ( 13,1 ) ( 12,1 ) ( 11,1 ) ( 10,1 ) ( 9,1 ) ( 8,1 ) ( 7,0.0077 ) ( 6,0.0063 ) ( 5,0.0053 ) ( 4,0.0044 ) ( 3,0.0044 ) ( 2,0.0038 ) ( 1,0.003 )
};
\addplot[rgb color={25, 0, 213}, very thick,  mark size=2pt, mark=., fill opacity=0.5, draw opacity=0.75] coordinates {
( 1,7e-04 ) ( 2,0.0014 ) ( 3,0.0021 ) ( 4,0.0028 ) ( 5,0.0035 ) ( 6,0.0042 ) ( 7,0.0049 ) ( 8,0.0056 ) ( 9,0.0063 ) ( 10,0.0069 ) ( 11,0.0076 ) ( 12,0.0083 ) ( 13,0.009 ) ( 14,0.0097 ) ( 15,0.0104 ) ( 16,0.0111 ) ( 17,0.0118 ) ( 18,0.0125 ) ( 19,0.0132 ) ( 20,0.0139 ) ( 20,1 ) ( 19,1 ) ( 18,1 ) ( 17,1 ) ( 16,1 ) ( 15,1 ) ( 14,1 ) ( 13,1 ) ( 12,1 ) ( 11,1 ) ( 10,1 ) ( 9,1 ) ( 8,1 ) ( 7,1 ) ( 6,0.006 ) ( 5,0.0049 ) ( 4,0.0041 ) ( 3,0.0034 ) ( 2,0.0034 ) ( 1,0.0034 )
};
\addplot[rgb color={50, 0, 172}, very thick,  mark size=2pt, mark=., fill opacity=0.5, draw opacity=0.75] coordinates {
( 1,7e-04 ) ( 2,0.0014 ) ( 3,0.0021 ) ( 4,0.0028 ) ( 5,0.0035 ) ( 6,0.0042 ) ( 7,0.0049 ) ( 8,0.0056 ) ( 9,0.0063 ) ( 10,0.0069 ) ( 11,0.0076 ) ( 12,0.0083 ) ( 13,0.009 ) ( 14,0.0097 ) ( 15,0.0104 ) ( 16,0.0111 ) ( 17,0.0118 ) ( 18,0.0125 ) ( 19,0.0132 ) ( 20,0.0139 ) ( 20,1 ) ( 19,1 ) ( 18,1 ) ( 17,1 ) ( 16,1 ) ( 15,1 ) ( 14,1 ) ( 13,1 ) ( 12,1 ) ( 11,1 ) ( 10,1 ) ( 9,1 ) ( 8,1 ) ( 7,1 ) ( 6,1 ) ( 5,0.0046 ) ( 4,0.0037 ) ( 3,0.0034 ) ( 2,0.0029 ) ( 1,0.0024 )
};
\addplot[rgb color={75, 0, 130}, very thick,  mark size=2pt, mark=., fill opacity=0.5, draw opacity=0.75] coordinates {
( 1,7e-04 ) ( 2,0.0014 ) ( 3,0.0021 ) ( 4,0.0028 ) ( 5,0.0035 ) ( 6,0.0042 ) ( 7,0.0049 ) ( 8,0.0056 ) ( 9,0.0063 ) ( 10,0.0069 ) ( 11,0.0076 ) ( 12,0.0083 ) ( 13,0.009 ) ( 14,0.0097 ) ( 15,0.0104 ) ( 16,0.0111 ) ( 17,0.0118 ) ( 18,0.0125 ) ( 19,0.0132 ) ( 20,0.0139 ) ( 20,1 ) ( 19,1 ) ( 18,1 ) ( 17,1 ) ( 16,1 ) ( 15,1 ) ( 14,1 ) ( 13,1 ) ( 12,1 ) ( 11,1 ) ( 10,1 ) ( 9,1 ) ( 8,1 ) ( 7,1 ) ( 6,1 ) ( 5,1 ) ( 4,0.0034 ) ( 3,0.0026 ) ( 2,0.0025 ) ( 1,0.0023 )
};
\addplot[rgb color={129, 43, 166}, very thick,  mark size=2pt, mark=., fill opacity=0.5, draw opacity=0.75] coordinates {
( 1,7e-04 ) ( 2,0.0014 ) ( 3,0.0021 ) ( 4,0.0028 ) ( 5,0.0035 ) ( 6,0.0042 ) ( 7,0.0049 ) ( 8,0.0056 ) ( 9,0.0063 ) ( 10,0.0069 ) ( 11,0.0076 ) ( 12,0.0083 ) ( 13,0.009 ) ( 14,0.0097 ) ( 15,0.0104 ) ( 16,0.0111 ) ( 17,0.0118 ) ( 18,0.0125 ) ( 19,0.0132 ) ( 20,0.0139 ) ( 20,1 ) ( 19,1 ) ( 18,1 ) ( 17,1 ) ( 16,1 ) ( 15,1 ) ( 14,1 ) ( 13,1 ) ( 12,1 ) ( 11,1 ) ( 10,1 ) ( 9,1 ) ( 8,1 ) ( 7,1 ) ( 6,1 ) ( 5,1 ) ( 4,1 ) ( 3,0.0024 ) ( 2,0.0018 ) ( 1,0.0017 )
};
\addplot[rgb color={184, 87, 202}, very thick,  mark size=2pt, mark=., fill opacity=0.5, draw opacity=0.75] coordinates {
( 1,7e-04 ) ( 2,0.0014 ) ( 3,0.0021 ) ( 4,0.0028 ) ( 5,0.0035 ) ( 6,0.0042 ) ( 7,0.0049 ) ( 8,0.0056 ) ( 9,0.0063 ) ( 10,0.0069 ) ( 11,0.0076 ) ( 12,0.0083 ) ( 13,0.009 ) ( 14,0.0097 ) ( 15,0.0104 ) ( 16,0.0111 ) ( 17,0.0118 ) ( 18,0.0125 ) ( 19,0.0132 ) ( 20,0.0139 ) ( 20,1 ) ( 19,1 ) ( 18,1 ) ( 17,1 ) ( 16,1 ) ( 15,1 ) ( 14,1 ) ( 13,1 ) ( 12,1 ) ( 11,1 ) ( 10,1 ) ( 9,1 ) ( 8,1 ) ( 7,1 ) ( 6,1 ) ( 5,1 ) ( 4,1 ) ( 3,1 ) ( 2,0.0015 ) ( 1,0.001 )
};
\addplot[color=black, very thick,  mark size=2pt, mark=.] coordinates {
( 1,7e-04 ) ( 2,0.0014 ) ( 3,0.0021 ) ( 4,0.0028 ) ( 5,0.0035 ) ( 6,0.0042 ) ( 7,0.0049 ) ( 8,0.0056 ) ( 9,0.0063 ) ( 10,0.0069 ) ( 11,0.0076 ) ( 12,0.0083 ) ( 13,0.009 ) ( 14,0.0097 ) ( 15,0.0104 ) ( 16,0.0111 ) ( 17,0.0118 ) ( 18,0.0125 ) ( 19,0.0132 ) ( 20,0.0139 )
};
\end{axis}
\end{tikzpicture}
 \caption{\revise{Regions of ordered p-value sequences, for $m=20$ and $\alpha=0.05$.
The $x$-axis gives the rank and the $y$-axis gives the ordered p-values $\pvalue_{(1)}\leq \cdots \leq \pvalue_{(m)}$.
The black curve is the BY rejection boundary, i.e., an ordered sequence that lies entirely above the black curve leads BY to reject no hypotheses.
The colored regions show ordered sequences for which $\textcBY$ rejects at least $k=2, 3, \ldots, 13,15,19$ hypotheses while BY rejects none.
These regions were constructed by greedily choosing the largest $\pvalue_{(m)}, \ldots, \pvalue_{(1)}$, in that order, to achieve the minimal desired local e-values.}} \label{fig:BY}
\end{figure}
 }{
Further examples of improvements of $\cBY$ over BY are given in \Cref{sec: p-value power improvements}.
}

\begin{remark}
\citet{blanchard_two_simple_2008} introduced a reshaped BH procedure (we call it BR in the following) for FDR control under arbitrary dependence that generalizes the BY procedure. \citet{wang_false_discovery_2022} showed that BR is a particular instance of eBH. While they used a slightly different proof technique, their claim can be shown using the p-to-e calibrator
    $$e(\mathbf{p}) \coloneqq \sum_{i=1}^m \frac{m}{\alpha i}  1\left\{\frac{\alpha \beta(i-1)}{m}<\mathbf{p}\leq \frac{\alpha \beta(i)}{m}\right\},$$
    where $\beta$ is a so-called shape function, meaning $\beta(i)=\int_0^i x \, d\nu(x)$, $i\in [m]$, $\beta(0)=0$, for some probability measure $\nu$.
    Therefore, calibrating p-values into e-values with this calibrator and plugging the calibrated e-values into $\textCeBH$ leads to a uniform improvement in simultaneity and power over the BR procedure. One could also adjust the calibrator for each $S\subseteq [m]$ according to $|S|$, as we have done for the BY procedure in \eqref{eq: e for BY}. \revise{By this adaptation, we mean using $|S|$ instead of $m$ when applying the reshaping construction to define the local e-value $\evalue_S$, since $|S|$ is the number of hypotheses relevant to $H_S$.}
\end{remark}

\begin{remark}
We can also uniformly improve the BY procedure in simultaneity and power via applying $\CeBH$ to the e-values $(e_m(\pvalue_1), \ldots, e_m(\pvalue_m))$ constructed through the p-to-e calibrator \eqref{eq: by calibrator} with $k=m$. This procedure is different from $\cBY$, and neither procedure dominates the other in general. We provide an explicit example where $\CeBH$ applied to the calibrated p-values is more powerful than $\cBY$ in \Cref{sec: cebh calibrated vs cby example}. However, in practice, we note that $\cBY$ is more powerful, as we shall see in the results of \Cref{tab:by_real_data_discoveries_full}.
\end{remark}

\subsection{$\textcSu$: FDR control under the PRDN assumption} \label{sec:Su}

\citet{su2018fdr} investigated FDR control by BH under the PRDN assumption. PRDN is a weaker variant of the PRDS assumption that is sufficient for BH to be valid \citep{benjamini_control_false_2001}. The p-values $\mathbf{p}_1\ldots, \mathbf{p}_m$ satisfy PRDS in $M$ if, for every increasing set $A \subseteq \mathbb{R}^m$, for every $\mathrm{P} \in M$, and for every $i \in N_\mathrm{P}$, we have that
\begin{equation} \label{eq: PRDS}
    \mathrm{P}((\mathbf{p}_1, \ldots, \mathbf{p}_m) \in A \mid \mathbf{p}_i \leq t) \text{ is weakly increasing in }t.
\end{equation}
PRDS, therefore, is an assumption both on the p-values of true and false hypotheses. PRDN, in contrast, is an analogous assumption only on the p-values of true hypotheses. The p-values $\mathbf{p}_1\ldots, \mathbf{p}_m$ satisfy PRDN in $M$ if, for every $\mathrm{P} \in M$, for every increasing set $A \subseteq \mathbb{R}^{|N_\mathrm{P}|}$, and for every $i \in N_\mathrm{P}$, we have that
$
\mathrm{P}((\mathbf{p}_j)_{j \in N_\mathrm{P}} \in A \mid \mathbf{p}_i \leq t) \text{ is weakly increasing in }t.
$
As argued by \cite{su2018fdr}, PRDN is a more attractive assumption than PRDS, since generally we do not want to assume anything about the distribution of p-values of false hypotheses.

Where PRDS is the assumption under which BH was proven \citep{benjamini_control_false_2001}, PRDN is the sufficient assumption of the Simes test \citep{simes1986improved, su2018fdr}:
\[
\mathbf{p}_S = \min_{1\leq i \leq |S|} \frac{|S|\pvalue_{i: S}}i,
\]
is a bona-fide p-value for testing $H_S$, where $\mathbf{p}_{i:S}$ is the $i$th smallest p-value among $\pvalue_i$, $i \in S$. \citet{su2018fdr} showed that there is a close connection between the BH procedure and the Simes test.

Moreover, \citet{su2018fdr} proved that when BH is applied at level $\alpha'$ under the assumption of PRDN, rather than PRDS, it achieves an FDR of at most $\alpha'(1 + \log(1 / \alpha'))$ rather than $\alpha'$. Solving
\[
\alpha = \alpha'(1 + \log(1 / \alpha')),
\]
we obtain $\alpha' = -\alpha/w(-\alpha/e)$,\footnote{Though we use $e$ both for p-to-e calibrators (as a function) and for $\exp(1)$ (as a constant), this should lead to no confusion.} where $w$ is the $-1$ branch of the Lambert W function. We let $\ell_\alpha \coloneqq -w(-\alpha / e)$.
This, in turn, implies that the BH procedure applied at the more conservative
level $\alpha / \ell_\alpha$, rather than $\alpha$, controls FDR under PRDN. We call this the Su procedure; let $\rejset^\Su_\alpha$ denote this specific discovery set.
Numerically, for $\alpha = 0.01, 0.05, 0.1$, we have $1 /\ell_\alpha = 0.131, 0.174, 0.205$, respectively.

In this section we will improve the Su procedure uniformly using the e-Closure Principle. First, we will define a useful p-to-e calibrator, which will allow us to calibrate the Simes p-value.

\begin{lemma}\label{lemma: su-calibrator}
    $e(x) \coloneqq (\ell_\alpha x \vee \alpha)^{-1}$ is a p-to-e calibrator for all $\alpha \in (0,1]$.
\end{lemma}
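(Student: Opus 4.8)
The plan is to lean on the standard characterization of p-to-e calibrators: a nonnegative, non-increasing function $f$ on $[0,1]$ with $\int_0^1 f(u)\,du \le 1$ is automatically a calibrator, since for any p-value $\mathbf{p}$ (i.e.\ $\mathrm{P}(\mathbf{p}\le t)\le t$ for all $t$) super-uniformity combined with the monotonicity of $f$ gives $\expect_\mathrm{P}[f(\mathbf{p})]\le \int_0^1 f(u)\,du$ via a layer-cake/stochastic-dominance argument (see \citet{vovk_e-values_calibration_2021}). Hence it suffices to verify two facts about $e(x)=(\ell_\alpha x \vee \alpha)^{-1}$: that it is non-increasing on $[0,1]$, and that its integral over $[0,1]$ is at most $1$.

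Monotonicity is immediate. Since $\ell_\alpha \ge 1 > 0$ (the $-1$ branch of the Lambert $W$ maps $[-1/e,0)$ into $(-\infty,-1]$, so $\ell_\alpha=-w(-\alpha/e)\ge 1$), the map $x\mapsto \ell_\alpha x \vee \alpha$ is non-decreasing and bounded below by $\alpha>0$, so its reciprocal $e$ is non-increasing and finite on $[0,1]$.

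For the integral I would split the domain at the point where the two branches of the maximum agree. Writing $\alpha'=\alpha/\ell_\alpha$, we have $\ell_\alpha x \ge \alpha \iff x \ge \alpha'$, and because $\ell_\alpha\ge 1$ the threshold $\alpha'=\alpha/\ell_\alpha\le \alpha\le 1$ lies in $(0,1]$, so the split is legitimate. Then $e(u)=1/\alpha$ on $[0,\alpha')$ and $e(u)=1/(\ell_\alpha u)$ on $[\alpha',1]$, which yields
\[
\int_0^1 e(u)\,du = \frac{\alpha'}{\alpha} + \frac{1}{\ell_\alpha}\big(-\log\alpha'\big) = \frac{1-\log\alpha'}{\ell_\alpha}.
\]

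The crux is recognizing that this quantity equals exactly $1$, and this is precisely where the Lambert-$W$ definition of $\ell_\alpha$ enters. By construction $\alpha'=\alpha/\ell_\alpha$ is the solution of $\alpha=\alpha'(1+\log(1/\alpha'))$, which rearranges to $\ell_\alpha = \alpha/\alpha' = 1+\log(1/\alpha') = 1-\log\alpha'$. Substituting this identity makes the numerator above equal to $\ell_\alpha$, so the integral is $\ell_\alpha/\ell_\alpha = 1$, and the calibrator property follows. I expect the only genuine subtlety to be the bookkeeping around this defining relation: one must use exactly the identity $\ell_\alpha = 1-\log(\alpha/\ell_\alpha)$ characterizing $\alpha'$ as the root of $\alpha=\alpha'(1+\log(1/\alpha'))$, and confirm $\alpha'\le 1$ so that the two-piece integral is valid. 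Everything else is routine.
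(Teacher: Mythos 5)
Your proof is correct and follows essentially the same route as the paper's: both verify that $e$ is non-increasing with $\int_0^1 e(u)\,du \le 1$, split the integral at $\alpha/\ell_\alpha$, and reduce the conclusion to the identity $\ell_\alpha = 1 - \log(\alpha/\ell_\alpha)$. The only cosmetic difference is that you obtain this identity from the defining relation $\alpha = \alpha'(1+\log(1/\alpha'))$ whereas the paper derives the equivalent statement $\log(e\ell_\alpha/\alpha)=\ell_\alpha$ from the Lambert $W$ functional equation $\exp(w(x))=x/w(x)$; you also make the super-uniformity/monotonicity step explicit where the paper leaves it implicit.
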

\begin{corollary}
For all $\mathrm{P} \in H_S$, the following is an e-value under PRDN.
\begin{equation} \label{eq: e Su}
    \mathbf{e}_S =  (\mathbf{p}_S \ell_\alpha \vee \alpha)^{-1}
\end{equation}
\end{corollary}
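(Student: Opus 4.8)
The plan is to compose two facts that are already in hand: the validity of the Simes combination $\mathbf{p}_S$ as a p-value for $H_S$ under PRDN, and the p-to-e calibrator of \Cref{lemma: su-calibrator}. The corollary then follows immediately, since applying a calibrator to a valid p-value yields a valid e-value. No new machinery is required beyond a short compatibility check.

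First I would fix $\mathrm{P} \in H_S$ and note that $H_S = \cap_{i \in S} H_i$ forces $S \subseteq N_\mathrm{P}$; hence every $\mathbf{p}_i$ with $i \in S$ is a true-null p-value. The PRDN assumption governs precisely the joint behaviour of the true-null p-values $(\mathbf{p}_j)_{j \in N_\mathrm{P}}$, so I would verify that this property is inherited by the sub-collection $(\mathbf{p}_i)_{i \in S}$: for any increasing $A \subseteq \mathbb{R}^{|S|}$ and any $i \in S$, the conditional law of $(\mathbf{p}_i)_{i \in S}$ given $\mathbf{p}_i \leq t$ is a marginal of the conditional law of $(\mathbf{p}_j)_{j \in N_\mathrm{P}}$, and weak monotonicity in $t$ is preserved under marginalisation. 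With PRDN holding on the index set $S$, the Simes result cited in the text guarantees that $\mathbf{p}_S = \min_{1 \leq i \leq |S|} |S|\pvalue_{i:S}/i$ is a bona-fide p-value for $H_S$, i.e.\ $\mathrm{P}(\mathbf{p}_S \leq t) \leq t$ for all $t \in [0,1]$ and all $\mathrm{P} \in H_S$.

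Second I would invoke \Cref{lemma: su-calibrator}, which asserts that $e(x) = (\ell_\alpha x \vee \alpha)^{-1}$ is a p-to-e calibrator, so that $e(\mathbf{p})$ is an e-value whenever $\mathbf{p}$ is a p-value. Applying $e$ to the valid p-value $\mathbf{p}_S$ yields $\mathbf{e}_S = e(\mathbf{p}_S) = (\mathbf{p}_S \ell_\alpha \vee \alpha)^{-1}$, which is therefore an e-value for $H_S$ under PRDN, exactly as claimed.

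The substantive content sits entirely in the two ingredients, both already established elsewhere: the Simes validity under PRDN and the calibrator lemma (whose own proof is deferred). The only genuine step in assembling them is the inheritance of PRDN by the subset $S$, and this is the point I would treat most carefully, since PRDN is phrased for the full true-null vector $(\mathbf{p}_j)_{j \in N_\mathrm{P}}$ rather than for an arbitrary subset; one must confirm that restricting to $S \subseteq N_\mathrm{P}$ does not disturb the conditional-monotonicity requirement. Everything else is a one-line composition of a valid p-value with a valid calibrator.
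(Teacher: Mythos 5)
Your proposal is correct and matches the paper's (implicit) argument: the corollary is stated as an immediate consequence of composing the Simes p-value's validity for $H_S$ under PRDN with the p-to-e calibrator of \Cref{lemma: su-calibrator}. Your extra care about PRDN being inherited by the sub-collection indexed by $S \subseteq N_\mathrm{P}$ (via cylinder sets over increasing $A'$) is a sensible check the paper leaves unstated, but it does not change the route.
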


We can apply the e-Closure procedure from the e-collection formed through (\ref{eq: e Su}) to get the simultaneous discovery set $\rejcol^\cSu_\alpha \coloneqq \Rcal_\alpha^\FDR(\mathbf{E})$. One of the largest discovery sets in $\cSu$ is
\begin{align}
    \rejset^\cSu_\alpha \coloneqq \rejset^{(\mathbf{r}_\alpha^\cSu)}, \text{ where }\mathbf{r}_\alpha^\cSu &\coloneqq \max\bigl(\{r \in [m]: \rejset^{(r)} \in \rejcol^\cSu_\alpha\} \cup \{0\}\bigr).
\end{align}
\begin{theorem}\label{thm: su-improvement}
    \revise{Under PRDN, $\rejcol^\cSu_\alpha$ controls FDR at level $\alpha$. For $\alpha \in (0, 1)$ and $m>1$, $\rejcol^\cSu_\alpha$ uniformly improves $\rejset^\Su_\alpha$ in both power and simultenaity.}
\end{theorem}

\ifarver{
    We provide an example where the $\cSu$\ procedure makes multiple discoveries but the Su procedure only makes a single one, and illustrate it in \Cref{fig:Su}.
    \begin{example}\label{ex: closed su example}
Let $m \geq 3$. Then, we will have $\rejset^\Su_\alpha = [1]$, but $\rejset^\cSu_\alpha = [m]$ under the following event:   
\[
    \pvalue_1 = \frac{ \alpha}{m\ell_\alpha}, \quad \pvalue_2 = \cdots = \pvalue_m = \frac{ \alpha}{\ell_\alpha} + \delta, \quad \revise{0 < \delta \leq \min\left\{\frac{\alpha}{(m-1)\ell_\alpha},1-\frac{\alpha}{\ell_\alpha}\right\}.}
\]
\end{example}
\begin{proof}
That Su rejects $[1]$ only is immediate from the definition:  $p_i > \alpha i / (m\ell_\alpha)$ for $i = 2, \ldots, m$. $\textcSu$ rejects $[m]$, since we have $\mathbf{e}_S = \alpha^{-1}$ if $1 \in S$. If $1 \not\in S$, the constraint on $\delta$ ensures $\alpha e_S = \alpha / (\alpha + \ell_\alpha \delta) \geq (m-1)/m \geq |S|/m$.\end{proof}
    \begin{figure}[!ht]
\centering
\begin{tikzpicture}[scale=.7]
\begin{axis}[
	xmin = 1,
	xmax = 20,
	xtick = {5,10,15,20},
	ymin = 0,
	ymax = 0.1,
	ytick = {0,0.02,0.04,0.06,0.08,0.1},
	yticklabels = {0,0.02,0.04,0.06,0.08,0.1},
	 ytick scale label code/.code={},
	ylabel=$p$-value,
	xlabel=rank,
        height=10cm,
	width=10cm,
		legend style={legend columns=1, at={(0.275,0.975)}}
]
 \addplot[color=blue, very thick,  mark size=2pt, mark=*, fill=blue, fill opacity=0.5, draw opacity=0.75] coordinates {
( 1,0 ) ( 2,0 ) ( 3,0 ) ( 4,0 ) ( 5,0 ) ( 6,0 ) ( 7,0.003 ) ( 8,0.0035 ) ( 9,0.0039 ) ( 10,0.0044 ) ( 11,0.0048 ) ( 12,0.0052 ) ( 13,0.0057 ) ( 14,0.0061 ) ( 15,0.0065 ) ( 16,0.007 ) ( 17,0.0074 ) ( 18,0.0078 ) ( 19,0.0083 ) ( 20,0.0087 ) ( 20,1 ) ( 19,1 ) ( 18,1 ) ( 17,1 ) ( 16,1 ) ( 15,1 ) ( 14,1 ) ( 13,1 ) ( 12,1 ) ( 11,1 ) ( 10,1 ) ( 9,1 ) ( 8,1 ) ( 7,0.0044 ) ( 6,0.002 ) ( 5,0.002 ) ( 4,0.002 ) ( 3,0.002 ) ( 2,0.002 ) ( 1,0.002 )
};
 \addlegendentry{$k=7$}
 \addplot[color=red, very thick,  mark size=2pt, mark=*, fill=red, fill opacity=0.5, draw opacity=0.75] coordinates {
( 1,0 ) ( 2,0 ) ( 3,0 ) ( 4,0 ) ( 5,0 ) ( 6,0 ) ( 7,0 ) ( 8,0 ) ( 9,0 ) ( 10,0 ) ( 11,0 ) ( 12,0 ) ( 13,0 ) ( 14,0 ) ( 15,0 ) ( 16,0 ) ( 17,0 ) ( 18,0 ) ( 19,0.0083 ) ( 20,0.0087 ) ( 20,1 ) ( 19,0.0827 ) ( 18,0.0276 ) ( 17,0.0276 ) ( 16,0.0083 ) ( 15,0.0083 ) ( 14,0.0083 ) ( 13,0.0083 ) ( 12,0.0083 ) ( 11,0.0083 ) ( 10,0.0083 ) ( 9,0.0083 ) ( 8,0.0083 ) ( 7,0.0083 ) ( 6,0.0083 ) ( 5,0.0083 ) ( 4,6e-04 ) ( 3,6e-04 ) ( 2,6e-04 ) ( 1,6e-04 )
};
\addlegendentry{$k=19$}
\addplot[black, dashed, very thick, domain=0:20] {0.05*x/20};
\addlegendentry{BH}
\end{axis}
\end{tikzpicture}~
\begin{tikzpicture}[scale=.7, define rgb/.code={\definecolor{mycolor}{RGB}{#1}},
                    rgb color/.style={define rgb={#1},mycolor}]
\begin{axis}[
xmin = 1,
	xmax = 20,
	xtick = {5,10,15,20},
	ymin = 0,
	ymax = 0.1,
	ytick = {0,0.02,0.04,0.06,0.08,0.1},
	yticklabels = {0,0.02,0.04,0.06,0.08,0.1},
	 ytick scale label code/.code={},
	ylabel=,
	xlabel=rank,
        height=10cm,
	width=10cm
]
\addplot[rgb color={129, 43, 166}, very thick,  mark size=2pt, mark=., fill opacity=0.5, draw opacity=0.75] coordinates {
( 1,0 ) ( 2,0 ) ( 3,0.0013 ) ( 4,0.0017 ) ( 5,0.0022 ) ( 6,0.0026 ) ( 7,0.003 ) ( 8,0.0035 ) ( 9,0.0039 ) ( 10,0.0044 ) ( 11,0.0048 ) ( 12,0.0052 ) ( 13,0.0057 ) ( 14,0.0061 ) ( 15,0.0065 ) ( 16,0.007 ) ( 17,0.0074 ) ( 18,0.0078 ) ( 19,0.0083 ) ( 20,0.0087 ) ( 20,1 ) ( 19,1 ) ( 18,1 ) ( 17,1 ) ( 16,1 ) ( 15,1 ) ( 14,1 ) ( 13,1 ) ( 12,1 ) ( 11,1 ) ( 10,1 ) ( 9,1 ) ( 8,1 ) ( 7,1 ) ( 6,1 ) ( 5,1 ) ( 4,1 ) ( 3,0.0015 ) ( 2,7e-04 ) ( 1,7e-04 )
};
\addplot[rgb color={50, 0, 172}, very thick,  mark size=2pt, mark=., fill opacity=0.5, draw opacity=0.75] coordinates {
( 1,0 ) ( 2,0 ) ( 3,0 ) ( 4,0 ) ( 5,0.0022 ) ( 6,0.0026 ) ( 7,0.003 ) ( 8,0.0035 ) ( 9,0.0039 ) ( 10,0.0044 ) ( 11,0.0048 ) ( 12,0.0052 ) ( 13,0.0057 ) ( 14,0.0061 ) ( 15,0.0065 ) ( 16,0.007 ) ( 17,0.0074 ) ( 18,0.0078 ) ( 19,0.0083 ) ( 20,0.0087 ) ( 20,1 ) ( 19,1 ) ( 18,1 ) ( 17,1 ) ( 16,1 ) ( 15,1 ) ( 14,1 ) ( 13,1 ) ( 12,1 ) ( 11,1 ) ( 10,1 ) ( 9,1 ) ( 8,1 ) ( 7,1 ) ( 6,1 ) ( 5,0.0027 ) ( 4,0.0013 ) ( 3,0.0013 ) ( 2,0.0013 ) ( 1,0.0013 )
};
\addplot[rgb color={0, 0, 255}, very thick,  mark size=2pt, mark=., fill opacity=0.5, draw opacity=0.75] coordinates {
( 1,0 ) ( 2,0 ) ( 3,0 ) ( 4,0 ) ( 5,0 ) ( 6,0 ) ( 7,0.003 ) ( 8,0.0035 ) ( 9,0.0039 ) ( 10,0.0044 ) ( 11,0.0048 ) ( 12,0.0052 ) ( 13,0.0057 ) ( 14,0.0061 ) ( 15,0.0065 ) ( 16,0.007 ) ( 17,0.0074 ) ( 18,0.0078 ) ( 19,0.0083 ) ( 20,0.0087 ) ( 20,1 ) ( 19,1 ) ( 18,1 ) ( 17,1 ) ( 16,1 ) ( 15,1 ) ( 14,1 ) ( 13,1 ) ( 12,1 ) ( 11,1 ) ( 10,1 ) ( 9,1 ) ( 8,1 ) ( 7,0.0044 ) ( 6,0.002 ) ( 5,0.002 ) ( 4,0.002 ) ( 3,0.002 ) ( 2,0.002 ) ( 1,0.002 )
};
\addplot[rgb color={0, 85, 85}, very thick,  mark size=2pt, mark=., fill opacity=0.5, draw opacity=0.75] coordinates {
( 1,0 ) ( 2,0 ) ( 3,0 ) ( 4,0 ) ( 5,0 ) ( 6,0 ) ( 7,0 ) ( 8,0 ) ( 9,0.0039 ) ( 10,0.0044 ) ( 11,0.0048 ) ( 12,0.0052 ) ( 13,0.0057 ) ( 14,0.0061 ) ( 15,0.0065 ) ( 16,0.007 ) ( 17,0.0074 ) ( 18,0.0078 ) ( 19,0.0083 ) ( 20,0.0087 ) ( 20,1 ) ( 19,1 ) ( 18,1 ) ( 17,1 ) ( 16,1 ) ( 15,1 ) ( 14,1 ) ( 13,1 ) ( 12,1 ) ( 11,1 ) ( 10,1 ) ( 9,0.0065 ) ( 8,0.003 ) ( 7,0.003 ) ( 6,0.003 ) ( 5,0.003 ) ( 4,0.003 ) ( 3,0.003 ) ( 2,0.003 ) ( 1,0.003 )
};
\addplot[rgb color={85, 170, 0}, very thick,  mark size=2pt, mark=., fill opacity=0.5, draw opacity=0.75] coordinates {
( 1,0 ) ( 2,0 ) ( 3,0 ) ( 4,0 ) ( 5,0 ) ( 6,0 ) ( 7,0 ) ( 8,0 ) ( 9,0 ) ( 10,0 ) ( 11,0.0048 ) ( 12,0.0052 ) ( 13,0.0057 ) ( 14,0.0061 ) ( 15,0.0065 ) ( 16,0.007 ) ( 17,0.0074 ) ( 18,0.0078 ) ( 19,0.0083 ) ( 20,0.0087 ) ( 20,1 ) ( 19,1 ) ( 18,1 ) ( 17,1 ) ( 16,1 ) ( 15,1 ) ( 14,1 ) ( 13,1 ) ( 12,1 ) ( 11,0.0096 ) ( 10,0.0044 ) ( 9,0.0044 ) ( 8,0.0044 ) ( 7,0.0044 ) ( 6,0.0044 ) ( 5,0.0044 ) ( 4,0.0044 ) ( 3,0.0044 ) ( 2,0.0044 ) ( 1,0.0044 )
};
\addplot[rgb color={255, 255, 0}, very thick,  mark size=2pt, mark=., fill opacity=0.5, draw opacity=0.75] coordinates {
( 1,0 ) ( 2,0 ) ( 3,0 ) ( 4,0 ) ( 5,0 ) ( 6,0 ) ( 7,0 ) ( 8,0 ) ( 9,0 ) ( 10,0 ) ( 11,0 ) ( 12,0 ) ( 13,0.0057 ) ( 14,0.0061 ) ( 15,0.0065 ) ( 16,0.007 ) ( 17,0.0074 ) ( 18,0.0078 ) ( 19,0.0083 ) ( 20,0.0087 ) ( 20,1 ) ( 19,1 ) ( 18,1 ) ( 17,1 ) ( 16,1 ) ( 15,1 ) ( 14,1 ) ( 13,0.0141 ) ( 12,0.0063 ) ( 11,0.0063 ) ( 10,0.0063 ) ( 9,0.0063 ) ( 8,0.0063 ) ( 7,0.0063 ) ( 6,0.0063 ) ( 5,0.0063 ) ( 4,7e-04 ) ( 3,7e-04 ) ( 2,7e-04 ) ( 1,7e-04 )
};
\addplot[rgb color={255, 195, 0}, very thick,  mark size=2pt, mark=., fill opacity=0.5, draw opacity=0.75] coordinates {
( 1,0 ) ( 2,0 ) ( 3,0 ) ( 4,0 ) ( 5,0 ) ( 6,0 ) ( 7,0 ) ( 8,0 ) ( 9,0 ) ( 10,0 ) ( 11,0 ) ( 12,0 ) ( 13,0 ) ( 14,0 ) ( 15,0.0065 ) ( 16,0.007 ) ( 17,0.0074 ) ( 18,0.0078 ) ( 19,0.0083 ) ( 20,0.0087 ) ( 20,1 ) ( 19,1 ) ( 18,1 ) ( 17,1 ) ( 16,1 ) ( 15,0.0218 ) ( 14,0.0093 ) ( 13,0.0093 ) ( 12,0.0093 ) ( 11,0.0093 ) ( 10,0.0093 ) ( 9,0.0093 ) ( 8,0.0013 ) ( 7,0.0013 ) ( 6,0.0013 ) ( 5,0.0013 ) ( 4,0.0013 ) ( 3,0.0013 ) ( 2,0.0013 ) ( 1,0.0013 )
};
\addplot[rgb color={255, 110, 0}, very thick,  mark size=2pt, mark=., fill opacity=0.5, draw opacity=0.75] coordinates {
( 1,0 ) ( 2,0 ) ( 3,0 ) ( 4,0 ) ( 5,0 ) ( 6,0 ) ( 7,0 ) ( 8,0 ) ( 9,0 ) ( 10,0 ) ( 11,0 ) ( 12,0 ) ( 13,0 ) ( 14,0 ) ( 15,0 ) ( 16,0 ) ( 17,0 ) ( 18,0 ) ( 19,0.0083 ) ( 20,0.0087 ) ( 20,1 ) ( 19,0.0827 ) ( 18,0.0276 ) ( 17,0.0276 ) ( 16,0.0083 ) ( 15,0.0083 ) ( 14,0.0083 ) ( 13,0.0083 ) ( 12,0.0083 ) ( 11,0.0083 ) ( 10,0.0083 ) ( 9,0.0083 ) ( 8,0.0083 ) ( 7,0.0083 ) ( 6,0.0083 ) ( 5,0.0083 ) ( 4,6e-04 ) ( 3,6e-04 ) ( 2,6e-04 ) ( 1,6e-04 )
};
\addplot[rgb color={255, 0, 0}, very thick,  mark size=2pt, mark=., fill opacity=0.5, draw opacity=0.75] coordinates {
( 1,0 ) ( 2,0 ) ( 3,0 ) ( 4,0 ) ( 5,0 ) ( 6,0 ) ( 7,0 ) ( 8,0 ) ( 9,0 ) ( 10,0 ) ( 11,0 ) ( 12,0 ) ( 13,0 ) ( 14,0 ) ( 15,0 ) ( 16,0 ) ( 17,0 ) ( 18,0 ) ( 19,0.0083 ) ( 20,0.0087 ) ( 20,1 ) ( 19,0.0827 ) ( 18,0.0276 ) ( 17,0.0276 ) ( 16,0.0083 ) ( 15,0.0083 ) ( 14,0.0083 ) ( 13,0.0083 ) ( 12,0.0083 ) ( 11,0.0083 ) ( 10,0.0083 ) ( 9,0.0083 ) ( 8,0.0083 ) ( 7,0.0083 ) ( 6,0.0083 ) ( 5,0.0083 ) ( 4,6e-04 ) ( 3,6e-04 ) ( 2,6e-04 ) ( 1,6e-04 )
};

\end{axis}
\end{tikzpicture}
 \caption{\revise{Regions of ordered p-value sequences, for $m=20$ and $\alpha=0.05$.
The $x$-axis gives the rank and the $y$-axis gives the ordered p-values $p_{(1)}\leq \cdots \leq p_{(m)}$.
The colored regions show ordered sequences for which $\textcSu$ rejects at least $k=3, 5, \ldots,19$ hypotheses while Su rejects none.
These regions were constructed by greedily choosing the largest $p_{(m)}, \ldots, p_{(1)}$, in that order, to achieve the minimal desired local e-values.
The black dashed line is the rejection threshold of the BH procedure, i.e., the BH procedure rejects the $r$ smallest p-values if $r$ is the largest rank such that $p_{(r)}$ is below this line.
The upper bound on the region of p-value sequences where the $\textcSu$ procedure contains p-value sequences that make 19 or more rejections is higher than the BH threshold, which means that the BH procedure makes strictly fewer rejections for those sequences of p-values.}} \label{fig:Su}
\end{figure}
     
}{
We provide concrete examples of improvement in \Cref{sec: p-value power improvements}, where we give an event in which Su rejects only one hypothesis, but $\textcSu$ contains the discovery set that rejects all, and an example where $\cSu$ makes more discoveries than BH.
}

\ifarver{
\section{Benjamini-Hochberg and its adaptive variants}\label{sec: bh and variants}

The seminal FDR control method by \citet{benjamini_controlling_false_1995} remains the most popular by far. It rejects the following discovery set:
$$
\rejset_\alpha^{\BH} \coloneqq \rejset^{(\mathbf{r}_\alpha^\BH)}, \text{ where }\mathbf{r}_\alpha^\BH \coloneqq \max\{r \in [m]: \pvalue_{(r)} \leq \alpha r / m \},$$
\revise{where $\mathbf{r}_\alpha^\BH$ is taken as 0 if the maximum does not exist. However, it has some puzzling aspects. First, BH is not guaranteed under arbitrary dependence. Its classical finite-sample validity result holds under PRDS \eqref{eq: PRDS}, which constrains the joint distribution of the p-values of true and false hypotheses, although other sufficient conditions are known \citep{goeman_uniform_improvement_2026}. Such an assumption is unusual among multiple testing procedures, which generally make assumptions only on the distribution of true hypotheses, and \citet{su2018fdr} made a forceful argument against it. Second, \BH\ is known to be inadmissible under PRDS. The minimally adaptive BH (MABH) procedure constructed by \citet{solari2017minimally} gives a small uniform improvement; more recently, \citet{goeman_uniform_improvement_2026} constructed a closed BH procedure that uniformly improves BH under PRDS and a weaker sufficient condition. MABH rejects}
\begin{align}
    \mathbf{r}_\alpha^{\MABH} \coloneqq \max\left\{i \in [m]: i^{-1}\cdot \pvalue_{(i)} \leq \frac{\alpha}{m - 1}\right\}, \rejset_\alpha^{\MABH} \coloneqq \begin{cases}
        \emptyset & \text{if }\rejset_\alpha^\BH = \emptyset;\\
        \rejset^{(\mathbf{r}_\alpha^{\MABH})} & \text{otherwise.}
    \end{cases}
\end{align}
For the \BH\ procedure, we can define the following local e-value
\begin{equation} \label{eq: BH local e-value}
\evalue_S = \frac{1}{ |S|} \sum_{i \in S} \widetilde{\evalue}_i \text{ where } \widetilde{\evalue}_i =  \frac{m}{\alpha |\rejset_\alpha^\BH|} \mathbf{1}\Big\{\pvalue_i \leq \frac{\alpha |\rejset_\alpha^\BH|}{m} \Big\}.
\end{equation}
\revise{\citet{li_note_e-values_2025} showed that $\widetilde\evalue_i$ is a compound e-value under PRDS and that \eBH\ applied to these compound e-values reconstructs \BH, but they did not directly show it is an e-value as well. The validity of $\widetilde\evalue_i$, and hence of $\evalue_S$, can also be seen directly from the dependency control condition of \citet{blanchard_two_simple_2008}. For each null $i$, take $U=\pvalue_i$, $V=|\rejset_\alpha^\BH|$, and the linear shape function $\beta(v)=v$. Proposition 3.6 of \citet{blanchard_two_simple_2008} gives
\[
\expect_\pdist\left[\frac{\mathbf{1}\{\pvalue_i\leq \alpha |\rejset_\alpha^\BH|/m\}}{|\rejset_\alpha^\BH|}\right]\leq \frac{\alpha}{m}
\]
under PRDS, with the ratio defined as zero when $|\rejset_\alpha^\BH|=0$. Multiplying by $m/\alpha$ proves that $\widetilde\evalue_i$ is an e-value. Averaging over $i\in S$ then proves that $\evalue_S$ is an e-value for $H_S$.}
\revise{We then define $\rejcol_\alpha^{\cBH}$ as the e-Closure procedure resulting from this e-collection.}

The e-collection derived from \eqref{eq: BH local e-value} depends on all $m$ p-values through $|\rejset_\alpha^\BH|$ instead of only on p-values pertaining to hypotheses $S$ (as is the case for the e-collections of $\cBY$ and $\cSu$). Thus, it is clear that an assumption on the joint distribution of p-values of true and false hypotheses is needed.

We can also define the corresponding local e-values for the minimally adaptive BH procedure as follows.
\[
\evalue_{[m]} = 1\{\rejset^\BH_\alpha \neq \emptyset\}/\alpha, \qquad
\evalue_S = \frac{m-1}{ |S|} \sum_{i \in S} \frac{1}{\alpha |\rejset_\alpha^\MABH|} \mathbf{1}\Big\{\pvalue_i \leq \frac{\alpha |\rejset_\alpha^\MABH|}{m-1} \Big\}\text{ for }S \subset [m]\label{eq: mabh local evalue}
\]
\begin{proposition} \label{thm: e for MABH}
    \revise{Under PRDS, $\ecol = (\evalue_S)_{S\in 2^{[m]}}$ with e-values defined in \eqref{eq: mabh local evalue} is a valid e-collection.}
\end{proposition}
We define $\rejcol_\alpha^{\cMABH}$ as the e-Closure procedure resulting from the e-collection comprising these local e-values.

\revise{This approach generalizes to other adaptive versions of the BH procedure. Unlike MABH, the validity claim below assumes that the p-values $(\pvalue_1, \dots, \pvalue_m)$ are independent and that the estimator satisfies the stated monotonicity and moment conditions.} For any $i \in N_\pdist$, define
\begin{align}
\hat{\boldsymbol{\pi}}_0^i \coloneqq \hat\pi(\pvalue_1, \dots, \pvalue_{i - 1}, 0, \pvalue_{i + 1}, \dots, \pvalue_m), \qquad \expect_\pdist\left(\frac{1}{\hat{\boldsymbol{\pi}}_0^i}\right) \leq \frac{m}{|N_\pdist|}, \label{eq: pi0 condition}
\end{align}
where $\hat\pi$ is an estimator of the null proportion and is also required to be coordinatewise nondecreasing.
Let $\hat{\boldsymbol{\pi}}_0 \coloneqq \max_{i \in [m]} \hat{\boldsymbol{\pi}}_0^i$. Now, we define the adaptive BH discovery set as
\begin{align}
    \mathbf{r}_\alpha^{\adaBH} \coloneqq \max \left(\left\{i \in [m]: i^{-1} \cdot \pvalue_{(i)} \leq \frac{\alpha}{\hat{\boldsymbol{\pi}}_0m}\right\} \cup \{0\}\right), \qquad \rejset_\alpha^{\adaBH} \coloneqq \rejset^{(\mathbf{r}_\alpha^\adaBH)}.
\end{align}
Storey's procedure \citep{storey_direct_approach_2002a, storey2004strong, benjamini2006adaptive}, which sets $\hat\pi(p_1, \dots, p_m) = (1 + \sum_{i=1}^m \mathbf{1}\{p_i > \lambda\}) / ((1 - \lambda)m)$ for a fixed tuning parameter $\lambda \in (0, 1)$, is an instance of this adaptive BH procedure. Methods of this type have been well studied because they can estimate the null proportion and increase power when the number of non-null hypotheses is substantial \citep{benjamini2006adaptive,sarkar_methods_controlling_2008,blanchard2009adaptive,dohler_unified_class_2023,gao_adaptive_null_2025}.

We can define the e-collection for improving the adaptive BH procedure. We now formulate the compound e-values and resulting local e-values that correspond to an adaptive procedure.
\begin{align}
    \mathbf{e}_S = \frac{1}{m}\sum_{i \in S}\widetilde\evalue_i \text{ where }\widetilde\evalue_i = \frac{m}{\alpha (|\rejset_\alpha^{\adaBH}| \vee 1)} \mathbf{1}\Big\{\pvalue_i \leq \frac{\alpha (|\rejset_\alpha^{\adaBH}| \vee 1)}{m \hat{\boldsymbol{\pi}}_0} \Big\}, \label{eq: adaBH local e-value}
\end{align}
\revise{The validity of these compound e-values follows directly from the argument used to prove Theorem 11 of \citet{blanchard2009adaptive}. Write $G(\mathbf{p})=1/\hat{\boldsymbol{\pi}}_0(\mathbf{p})$. For each null $i$, condition on the other p-values and apply their Lemma 27 with $U=\pvalue_i$, $g(U)=|\rejset_\alpha^{\adaBH}(\pvalue_1,\ldots,\pvalue_{i-1},U,\pvalue_{i+1},\ldots,\pvalue_m)|\vee 1$, and $c=\alpha G(\mathbf{p}_{0,i})/m$, where $\mathbf{p}_{0,i}$ replaces $\pvalue_i$ by zero. Monotonicity of $G$ in each of its arguments yields
\[
\expect_\pdist[\widetilde\evalue_i]\leq \expect_\pdist[G(\mathbf{p}_{0,i})]\leq \frac{m}{|N_\pdist|}.
\]
Consequently, under $H_S$, $\expect_\pdist[\mathbf{e}_S]\leq |S|/|N_\pdist|\leq 1$, proving that the local e-values are valid. Equivalently, Theorem 11 establishes FDR control for this adaptive BH procedure, after which \eqref{eq: universal compound evalue} gives the same compound e-values.}
The adaptive closed BH procedure, $\cadaBH$, is then defined as $\rejcol^{\cadaBH}_\alpha \coloneqq \Rcal^\FDR_\alpha(\mathbf{E})$ for the e-collection based on \eqref{eq: adaBH local e-value}. \revise{The theorem below shows that this e-collection exactly reconstructs adaptive BH, with no additional nonempty discovery sets.}
\revise{Unfortunately, the closed \BH\ procedure $\cBH$ resulting from \eqref{eq: BH local e-value} never gives a larger rejected set. Moreover, it gives no nontrivial simultaneity unless $\rejset_\alpha^{\BH}= [m]$. On that full-rejection branch, the simultaneous collection contains every singleton; by \Cref{thm: FWER}, the global selector formed by taking the union of all available singletons controls FWER. This is an unconditional guarantee for that selector, not a claim that BH conditionally controls FWER given the event $\rejset_\alpha^{\BH}=[m]$. In contrast, the corresponding closed MABH and adaptive BH procedures have no such exception: they exactly reconstruct their original procedures even when all hypotheses are rejected.} \begin{theorem}\label{thm: BH-no-simultaneity}
\revise{Assume PRDS for the BH and MABH claims. For the adaptive BH claim, assume independent p-values and that $\hat{\boldsymbol\pi}_0$ satisfies \eqref{eq: pi0 condition} and is coordinatewise nondecreasing. For the BH procedure, if $\rejset_\alpha^\BH\neq [m]$, then $\rejcol_\alpha^{\cBH}=\{\emptyset,\rejset_\alpha^\BH\}$. Otherwise, if $\rejset_\alpha^\BH=[m]$, then $\rejcol_\alpha^{\cBH}=2^{[m]}$. For minimally adaptive BH and adaptive BH, we always have $\rejcol_\alpha^{\cMABH}=\{\emptyset,\rejset_\alpha^\MABH\}$ and $\rejcol_\alpha^{\cadaBH}=\{\emptyset,\rejset_\alpha^\adaBH\}$.}
\end{theorem}
\begin{proof}
    First, consider the case where $\rejset = \rejset^\BH_\alpha$ and $\rejcol = \rejcol^\cBH_\alpha$. We can rewrite each local e-value as follows:
\begin{eqnarray*}
\mathbf{e}_S &=&  \frac{m |\rejset \cap S|}{\alpha |S|(|\rejset| \vee 1)}.
\end{eqnarray*}
If
$\rejset =[m]$, then for all $S$, $ \mathbf{e}_S = 1/ \alpha$, which is always greater than or equal to $\textnormal{FDP}_S(R)/\alpha$ for every $R$ and $S$, thus  $\rejcol =2^{[m]}$.

\revise{If $\rejset=\emptyset$, all the local e-values are zero and the conclusion is immediate.} Suppose $\rejset \neq \emptyset$ and $\rejset \neq [m]$.
We show that for every non-empty $R \neq \rejset$, there exists an $S$ such that
\begin{eqnarray*}
\frac{m|\rejset \cap S|}{ |S||\rejset|}&<& \frac{|R \cap S|}{ |R|}.
\end{eqnarray*}
Let $R \subset \rejset$. Then
\begin{align}
    0 < (m - |\rejset|)(|\rejset|- |R|).
\end{align}
This implies that
\begin{align}
    m|\rejset| + |R||\rejset| - |\rejset|^2 = (|R| + m - |\rejset|)|\rejset| > m|R|.
\end{align}
Now, we take $S = R \cup ([m] \setminus \rejset)$.
\begin{align}
    \frac{m|\rejset \cap S|}{|S||\rejset|} = \frac{m |R|}{(|R|+m-|\rejset|) |\rejset|}  < 1 = \frac{|R \cap S|}{|R|}.
\end{align}

Let $R \not\subseteq \rejset$, that is, $R$ contains some  $i \notin \rejset$, and
take $S=[m] \setminus \rejset$. Then,
\begin{eqnarray*}
    \frac{m|\rejset \cap S|}{|S||\rejset|} = 0<   \frac{|\{i\}|}{|R|} \leq  \frac{|R \cap ( [m] \setminus \rejset )|}{|R|}.\\
\end{eqnarray*}
Thus, we have shown the desired result for $\rejset_\alpha^\BH$ and $\rejcol_\alpha^\cBH$.

\revise{Now let $\rejset=\rejset_\alpha^\MABH$ and $\rejcol=\rejcol_\alpha^{\cMABH}$. The case $\rejset=\emptyset$ is immediate, so suppose $\rejset\neq\emptyset$. The original set $\rejset$ belongs to $\rejcol$: for $S=[m]$ this follows from $\evalue_{[m]}=1/\alpha$, while for every proper $S$ it follows from $(m-1)/|S|\geq1$. Consider any nonempty $R\neq\rejset$. If $|R|>|\rejset|$, choose $j\in\rejset$ and set $S=[m]\setminus\{j\}$. Then
\[
\alpha\evalue_S=\frac{|\rejset|-1}{|\rejset|}
<1-\frac{1}{|R|}
\leq\frac{|R\cap S|}{|R|}.
\]
If instead $|R|\leq|\rejset|$, choose $i\in\rejset\setminus R$ and again set $S=[m]\setminus\{i\}$. In this case,
\[
\alpha\evalue_S=\frac{|\rejset|-1}{|\rejset|}<1=\frac{|R\cap S|}{|R|}.
\]
Thus, for every nonempty $R\neq\rejset$ there exists some $S$ s.t. $\alpha \evalue_S < \FDP_S(R)$, including when $\rejset=[m]$, and hence $\rejcol_\alpha^{\cMABH}=\{\emptyset,\rejset_\alpha^\MABH\}$.}

\revise{Finally, let $\rejset=\rejset_\alpha^{\adaBH}$. The indicator in \eqref{eq: adaBH local e-value} equals one precisely for the hypotheses in $\rejset$. 
\[
\alpha\evalue_S=\frac{|\rejset\cap S|}{|\rejset|\vee1}.
\]
Otherwise, consider a nonempty $R\neq\rejset$. If $R\not\subseteq\rejset$, choose $i\in R\setminus\rejset$ and $S=\{i\}$, yielding $\alpha\evalue_S=0<|R\cap S|/|R|$. If $R\subsetneq\rejset$, take $S=R$, yielding $\alpha\evalue_S=|R|/|\rejset|<1=|R\cap S|/|R|$. Therefore $\rejcol_\alpha^{\cadaBH}=\{\emptyset,\rejset_\alpha^\adaBH\}$, including when $\rejset_\alpha^\adaBH=[m]$.}
\end{proof}

\revise{The particular e-collections considered in this section therefore do not substantially improve \BH\ or its adaptive variants. \citet{goeman_uniform_improvement_2026}, however, develops a different e-Closure construction that does uniformly improve over \BH\ in ways that are different from minimally adaptive \BH.}
 }{
\begin{remark}
\revise{The e-Closure Principle can be applied to the Benjamini-Hochberg procedure and its adaptive variants (e.g., MABH and Storey's procedure). However, the resulting closed procedures do not improve upon the original methods. For MABH and adaptive BH, the closed procedure always yields only $\rejcol=\{\emptyset,\rejset\}$, even when $\rejset=[m]$. For BH, the same conclusion holds when $\rejset\neq[m]$, whereas $\rejset=[m]$ yields $\rejcol=2^{[m]}$. See \Cref{sec: bh and variants} for details.}
\revise{However, a different choice of e-collection can improve the BH procedure: \citet{goeman_uniform_improvement_2026} constructs a closed BH procedure that uniformly improves BH under PRDS and under a weaker sufficient condition.}
\end{remark}
}

\section{Post hoc control in multiple testing} \label{sec: flexible}

Classical multiple testing requires the researcher to prespecify the error rate to be controlled and the nominal level at which to control this error rate. The procedure then returns a single random set that has such control. This set-up is rigorous but highly inflexible. \citet{goeman2011multiple} argued that there are many situations in which the researcher might want to deviate from this set-up. For example, the rejected set may be too large for planned follow-up experiments, and the researcher may prefer to have a smaller rejection set that guarantees the same error rate or possibly a more stringent one. In other cases, the rejected set may be too small, or a subset of the rejected set may be of interest.

Error control according to \Cref{def: single ER}, however, does not allow any deviation from the rejected set. This has been noted in the context of FDR control by \citet{finner2001false}, who showed how researchers can, intentionally or not, ``cheat with FDR.'' Reducing the set $\mathbf{R}$ while retaining its prespecified form, e.g., rejecting a set of the $\mathbf{s} < \mathbf{r}$ smallest p-values in BH, compromises FDR control if $\mathbf{s}$ is chosen post hoc \citep{katsevich2020simultaneous}. Deviations from the set $\mathbf{R}$ of the $\mathbf{r}$ smallest p-values have been investigated in prior work, but such methods usually require the researcher to commit to an algorithm for reducing or changing $\mathbf{R}$ before seeing the data \citep{lei2018adapt, katsevich2023filtering}. Some methods allow interactive post hoc amendments to this prespecified algorithm based on progressively revealing parts of the data \citep{lei2021general, katsevich2020simultaneous}. As far as we are aware, no FDR controlling methods have so far been proposed that allow researchers to reduce or amend $\mathbf{R}$ after looking at all of the data. Notably, methods controlling tail probabilities of the FDP, rather than the FDR, obtain some post hoc flexibility directly from the Closure Principle \citep{goeman2011multiple, goeman2021only}.

The e-Closure Principle extends the possibilities for post hoc flexibility in the rejected set to general error rates including FDR. \revise{Here, post hoc flexibility is the operational interpretation of simultaneity: simultaneous control gives a collection of valid choices, and the researcher may choose from that collection after seeing the data.} It also allows post hoc flexibility in error loss, and in some cases even the nominal level. We will explore these properties of e-Closure in this section.

\subsection{Post hoc choice of discovery set} \label{sec: post hoc sets}

Simultaneous control of error rates over several candidate sets as offered by \Cref{def: simultaneous ER} is useful whenever information external to the p-values may influence the conclusions of an experiment. In such cases it is often difficult for researchers to specify exactly, before seeing the data, how such external information will be used, and post hoc flexibility is desirable. Such post hoc flexibility, argued for by \citet{goeman2011multiple}, was thus far only available with FDP control. The e-Closure Principle extends this to other error rates, most importantly FDR. We give two brief use cases from the literature for such flexibility.

In bioinformatics, a larger rejected set is not always preferable, since there is limited budget for follow-up experiments. Moreover, significance is not the only consideration for rejection, and researchers prefer less significant results with larger effect size estimates. A popular methodology, called the `Volcano Plot', reduces the initial FDR-significant set by discarding findings with low effect size. Done naively, such discarding may compromise FDR control \citep{ebrahimpoor2021inflated}, and such a reduction retains FDR control as long as the reduced set is one of the candidate discovery sets via the e-Closure Principle.

In neuroimaging, hypotheses correspond to voxels (3d generalizations of pixels) in the brain, and the significant set $\mathbf{R}$ often splits naturally into several brain areas (``clusters''). In such situations, FDR control on the total set $\mathbf{R}$ is not very informative, since researchers will interpret their results in terms of the clusters \citep{rosenblatt2018all}. Simultaneous FDR control using the e-Closure Principle allows researchers to claim that these clusters have FDR at most $\alpha$ if and only if these clusters are among the candidate discovery sets. Similar considerations apply in bioinformatics, where hypotheses correspond to molecular markers, which can be meaningfully grouped into sets called pathways \citep{ebrahimpoor2020simultaneous}.
\paragraph{A simultaneous discovery version of the knockoff filter.} We show that simultaneity can be achieved in the knockoff filter \citep{barber_controlling_2015,candes_panning_gold_2018}. A key component of knockoff procedures is that the procedure forms a test statistic $\mathbf{w}_i$ for each $i \in [m]$ such that the sign of $\mathbf{w}_i$ is independent of all other statistics and has an equal chance of being in $\{\pm 1\}$ if $i \in N_\pdist$. The knockoff filter is defined by:
\begin{align}
\revise{\mathbf{c}_\alpha^\Kn \coloneqq \min\left\{c\in\{|\mathbf w_i|:|\mathbf w_i|>0\}: \frac{1+\sum_{i \in [m]}\ind\{\mathbf{w}_i \leq -c\}}{\sum_{i \in [m]}\ind\{\mathbf{w}_i \geq c\}} \leq \alpha \right\}, \qquad \rejset^{\Kn}_\alpha \coloneqq \{i \in [m]: \mathbf{w}_i \geq \mathbf{c}_\alpha^\Kn\},}
\end{align}
\revise{with the convention that the minimum of the empty set is $+\infty$. Assume the knockoff statistics satisfy the usual null sign-flip property. Using the coin-flip argument underlying Lemma 1 of \citet{barber_supplement_2015} and eq. (8) of \citet{ren2024derandomised}, we define the following local e-values}:
\begin{align}
    \mathbf{e}_S = \frac{\sum_{i \in S} \mathbf{1}\{\mathbf{w}_i \geq \mathbf{c}_\alpha^\Kn\}}{1 + \sum_{j \in S} \mathbf{1}\{\mathbf{w}_j \leq -\mathbf{c}_\alpha^\Kn\}}. \label{eq: knockoff local e-value}
\end{align} We refer to the resulting simultaneous discovery procedure as $\cKnockoffs$, i.e., $\rejcol_\alpha^{\cKn}  \coloneqq \Rcal_\alpha^\FDR(\mathbf{E})$. 
\begin{theorem}\label{thm: simultaneity improvement}
    \revise{Under the knockoff null sign-flip property, the set $\rejcol_\alpha^{\cKn}$ is given by}
    \begin{align}
        \rejcol_\alpha^{\cKn}=\left\{R\subseteq [m]: \mathbf{w}_i\geq \mathbf{c}_\alpha^\Kn \text{ for all } i\in R \text{ and } \frac{|R|}{1 + \sum_{j \in [m]} \mathbf{1}\{\mathbf{w}_j \leq -\mathbf{c}_\alpha^\Kn\}}\geq 1/\alpha \right\} \cup \{\emptyset\}.\label{eq: knockoff_char}
    \end{align}
\end{theorem}
Hence, $\rejcol_\alpha^{\cKn}$ is a uniform improvement in simultaneity over $\rejset_\alpha^{\Kn}$, although it contains no strict superset of $\rejset_\alpha^{\Kn}$.
To illustrate a concrete example where simultaneity is improved by $\rejcol_\alpha^{\cKn}$, consider the following instance: let $m = 6$ and $\alpha = 0.4$.
\revise{Now, let $\mathbf{w}=(6,5,4,3,-2,-1)$. Then $\mathbf c_\alpha^\Kn=3$ and the knockoffs discovery set is $\rejset_\alpha^\Kn = [4]$.} In addition, $\rejcol_\alpha^{\cKn}$ also includes all subsets of size 3 of $[4]$. Thus, closed knockoffs can provide simultaneity improvements in such settings.

\subsection{Post hoc choice of loss}\label{sec: post hoc loss}

In the classical setting of multiple testing, researchers have to choose the error rate to be controlled before seeing the data. This may lead to disappointment if the number or strength of the signals differs from what was expected. For example, a researcher may set out to control FWER but obtain no discoveries and, in hindsight, prefer FDR control. Conversely, a researcher who controls FDR and obtains hundreds of discoveries might prefer a smaller number of discoveries with an FWER guarantee. In other situations, researchers may want two tiers of discoveries simultaneously: an inner, more certain set of FWER discoveries and an outer, more tentative FDR significant set. Such situations call for simultaneity or post hoc flexibility in the choice of error rate.

Simultaneous control of FDR (for some outer set) and FWER (for some inner set) is actually allowed as a direct consequence of \Cref{def: simultaneous ER} of simultaneous FDR control, and the relationship between the two error rates. To see this, let $\rejcol$ control FDR, and consider the singleton sets in $\rejcol$. Since singleton sets have an FDP of either 0 or 1, there is no distinction between FDR and FWER for such sets.
The researcher may now choose to reject the union of all these singleton sets and control FWER for this set, in addition to FDR. This switch from FDR control to FWER control may be made fully post hoc, after observing all the data and the resulting rejected collection $\boldsymbol{\mathcal{R}}$. It is made explicit in Theorem \ref{thm: FWER}.

\begin{theorem} \label{thm: FWER}
Suppose $\boldsymbol{\mathcal{R}}$ has simultaneous control of the FDR at level $\alpha$. Define $
\mathbf{R} = \{i \in [m]\colon \{i\} \in \boldsymbol{\mathcal{R}}\}$.
Then $\mathbf{R}$ controls FWER at level $\alpha$, that is, for every $\mathrm{P} \in M$, $
\mathrm{P}(\mathbf{R} \cap N_\mathrm{P} = \emptyset) \geq 1-\alpha$.
\end{theorem}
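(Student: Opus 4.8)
The plan is to exploit the elementary fact that a singleton rejection has false discovery proportion equal to either $0$ or $1$, so that the \emph{presence} of any true null among the rejected singletons immediately pins the simultaneous FDP at its ceiling of $1$. First I would record the identity $\FDP_{N_\mathrm{P}}(\{i\}) = \ind\{i \in N_\mathrm{P}\}$ for every singleton $\{i\}$, which holds because $|\{i\}| \vee 1 = 1$ while $|\{i\} \cap N_\mathrm{P}| \in \{0,1\}$. By the definition $\mathbf{R} = \{i \in [m] : \{i\} \in \boldsymbol{\mathcal{R}}\}$, every index of $\mathbf{R}$ corresponds to a singleton that actually belongs to the random collection $\boldsymbol{\mathcal{R}}$.

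The core step is to establish the almost-sure pointwise domination
$$\max_{R \in \boldsymbol{\mathcal{R}}} \FDP_{N_\mathrm{P}}(R) \;\geq\; \ind\{\mathbf{R} \cap N_\mathrm{P} \neq \emptyset\}.$$
On the event $\{\mathbf{R} \cap N_\mathrm{P} \neq \emptyset\}$ there exists an index $i$ with both $i \in N_\mathrm{P}$ and $\{i\} \in \boldsymbol{\mathcal{R}}$; for this $i$ the identity above gives $\FDP_{N_\mathrm{P}}(\{i\}) = 1$, so the maximum over $\boldsymbol{\mathcal{R}}$ equals $1$ and matches the right-hand side. On the complementary event the right-hand side is $0$, and the inequality is trivial since $\FDP$ is nonnegative.

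Finally I would take $\expect_\mathrm{P}$ on both sides for an arbitrary $\mathrm{P} \in M$ and invoke the simultaneous FDR control hypothesis from \Cref{def: simultaneous ER}:
$$\mathrm{P}(\mathbf{R} \cap N_\mathrm{P} \neq \emptyset) = \expect_\mathrm{P}\big(\ind\{\mathbf{R} \cap N_\mathrm{P} \neq \emptyset\}\big) \leq \expect_\mathrm{P}\Big(\max_{R \in \boldsymbol{\mathcal{R}}} \FDP_{N_\mathrm{P}}(R)\Big) \leq \alpha,$$
which rearranges to the claimed $\mathrm{P}(\mathbf{R} \cap N_\mathrm{P} = \emptyset) \geq 1 - \alpha$.

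I do not anticipate a genuine analytic obstacle: the whole argument reduces to the observation that the FWER of the union of singleton rejections is dominated by the simultaneous FDP. The only point meriting a sentence of care is measurability of the data-dependent set $\mathbf{R}$; I would note that $\{\mathbf{R} \cap N_\mathrm{P} \neq \emptyset\}$ equals the measurable event $\bigcup_{i \in N_\mathrm{P}} \{\{i\} \in \boldsymbol{\mathcal{R}}\}$, so no additional structure on $\boldsymbol{\mathcal{R}}$ beyond \Cref{def: simultaneous ER} is required, and the bound uses only the singletons in $\boldsymbol{\mathcal{R}}$.
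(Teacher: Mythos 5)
Your proof is correct and is essentially the paper's own argument: both reduce the FWER event to the maximum of $\FDP_{N_\mathrm{P}}$ over the singleton members of $\boldsymbol{\mathcal{R}}$, dominate that by the maximum over all of $\boldsymbol{\mathcal{R}}$, and apply the simultaneous FDR control hypothesis. The only cosmetic difference is that the paper writes the singleton step as a chain of equalities over the subcollection $\boldsymbol{\mathcal{S}} = \{S \in \boldsymbol{\mathcal{R}} : |S|=1\}$ while you phrase it as a pointwise inequality, which is the same content.
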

\Cref{thm: FWER} is a special case of the more general result that the e-Closure Principle allows post hoc choice of the error metric. Let $\mathcal{F}$ be the set of all error functions under consideration.
\begin{definition}
    A collection of discovery sets $(\rejcol^{\ER_\F}_\alpha)_{\F \in \mathcal{F}}$ provides \emph{simultaneous error rate control} over $\mathcal{F}$ if
\begin{align}
    \expect_{\pdist}\left[\sup_{\F\in \mathcal{F}} \max_{R \in \rejcol^{\ER_\F}_\alpha}\  \F_{N_\pdist}(R)\right]
    \leq \alpha \text{ for all }\pdist \in M
    \label{eq:sup-fcond}
\end{align}
\end{definition}

A collection of discovery sets with simultaneous error rate control over $\mathcal{F}$ allows one to choose any $\boldsymbol{\F}\in \mathcal{F}$ and a corresponding rejection set $\rejset_\alpha^{\ER_{\boldsymbol{\F}}}\in \rejcol_\alpha^{\ER_{\boldsymbol{\F}}}$ post hoc, while ensuring valid error rate control. \revise{This is an unconditional post hoc guarantee over the data-dependent choice of $\boldsymbol{\F}$ and $\rejset_\alpha^{\ER_{\boldsymbol{\F}}}$; it should not be interpreted as error control conditional on the realized choice of loss.} It turns out that the e-Closure Principle is necessary and sufficient for such simultaneous error rate control.
\begin{theorem}[The e-Closure Principle for simultaneous error rate control]\label{thm: multiloss}
    For any collection of nonnegative losses $ \mathcal{F}$ and collection of discovery sets $(\rejcol_\alpha^{\ER_\F})_{\F \in \mathcal{F}}$ with simultaneous error rate control over $\mathcal{F}$, there exists 
    an e-collection $\mathbf{E}$ and resulting collection of simultaneous discovery sets $(\Rcal^{\ER_\F}_\alpha(\mathbf{E}))_{\F \in \mathcal{F}}$ such that $\rejcol^{\ER_\F}_\alpha \subseteq \Rcal^{\ER_{\F}}_\alpha(\mathbf{E})$ for all $\F \in \mathcal{F}$.
Further, $(\Rcal^{\ER_\F}_\alpha(\mathbf{E}))_{\F \in \mathcal{F}}$ provides simultaneous error rate control over $\mathcal{F}$ for every e-collection $\mathbf{E}$.

\end{theorem}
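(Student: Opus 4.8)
The plan is to prove the second (unconditional) claim first, because the first claim then follows from it immediately by monotonicity. The entire content of the theorem rests on a single observation that is already implicit in the proof of \Cref{thm: e-closure}: the e-value $\evalue_{N_\pdist}$ bounds \emph{every} loss at \emph{every} admissible set at once, and it does so \emph{independently} of which $\F \in \mathcal{F}$ we consider — precisely because an e-collection $\mathbf{E}$ is defined once and for all, with no reference to any loss function. This loss-independence of $\evalue_{N_\pdist}$ is what makes post hoc choice of the error metric come for free.

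Concretely, I would fix an arbitrary $\pdist \in M$ and an arbitrary e-collection $\mathbf{E} = (\evalue_S)_{S \in 2^{[m]}}$. For any $\F \in \mathcal{F}$ and any $R \in \Rcal^{\ER_\F}_\alpha(\mathbf{E})$, the membership condition defining $\Rcal^{\ER_\F}_\alpha(\mathbf{E})$ in \eqref{eq: e-closure}, instantiated at the single index set $S = N_\pdist$, gives
\begin{equation}
\F_{N_\pdist}(R) \leq \alpha\, \evalue_{N_\pdist}.
\end{equation}
Since the right-hand side depends neither on $\F$ nor on $R$, taking the maximum over $R \in \Rcal^{\ER_\F}_\alpha(\mathbf{E})$ and then the supremum over $\F \in \mathcal{F}$ preserves the bound, so that
\begin{equation}
\sup_{\F \in \mathcal{F}}\ \max_{R \in \Rcal^{\ER_\F}_\alpha(\mathbf{E})} \F_{N_\pdist}(R) \leq \alpha\, \evalue_{N_\pdist}
\end{equation}
almost surely. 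Taking expectations under $\pdist$ and using $\expect_\pdist[\evalue_{N_\pdist}] \leq 1$ then yields \eqref{eq:sup-fcond} with bound $\alpha$, establishing the second claim. For the first claim, I would invoke the domination hypothesis $\rejcol^{\ER_\F}_\alpha \subseteq \Rcal^{\ER_\F}_\alpha(\mathbf{E})$: a maximum over a subcollection never exceeds the maximum over the whole, so $\max_{R \in \rejcol^{\ER_\F}_\alpha} \F_{N_\pdist}(R) \leq \max_{R \in \Rcal^{\ER_\F}_\alpha(\mathbf{E})} \F_{N_\pdist}(R)$ for each $\F$, hence also after taking $\sup_{\F \in \mathcal{F}}$; chaining with the displayed bound and taking expectations gives \eqref{eq:sup-fcond} for $(\rejcol^{\ER_\F}_\alpha)_{\F \in \mathcal{F}}$.

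I expect the only genuine subtlety to be measurability when $\mathcal{F}$ is uncountable, so that $\sup_{\F} \max_{R} \F_{N_\pdist}(R)$ is a bona fide random variable whose expectation is defined. This is not an obstacle to the inequality itself, since the quantity is dominated almost surely by the measurable random variable $\alpha\,\evalue_{N_\pdist}$; at worst one passes to an outer expectation, and in all the paper's intended applications (such as the finite FWER/FDR switch of \Cref{thm: FWER}) the set $\mathcal{F}$ is finite and the issue does not arise. Beyond this point there is no calculation to carry out: the proof is two lines, and all of its force comes from the fact that the same $\evalue_{N_\pdist}$ serves as the controlling e-value for every loss simultaneously.
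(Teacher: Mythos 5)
Your proof is correct and takes essentially the same approach as the paper: the unconditional claim follows from $\sup_{\F \in \mathcal{F}} \max_{R} \F_{N_\pdist}(R) \leq \alpha\,\evalue_{N_\pdist}$ (since the bound at $S = N_\pdist$ depends on neither $\F$ nor $R$) followed by taking expectations, and the conditional claim follows by monotonicity of the maximum under set inclusion. The paper's proof additionally constructs a converse e-collection $\evalue_S = \alpha^{-1}\sup_{\F \in \mathcal{F}}\max_{R \in \rejcol^{\ER_\F}} \F_S(R)$ from any post hoc controlling collection, but that completeness direction is not part of the literal statement, so your proposal covers everything the theorem asserts.
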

\begin{example}[Post hoc choice of the FDX threshold]
\revise{False discovery exceedance control at level $\gamma$, also called $\gamma$-FDX control, has been a well studied problem in the multiple testing literature, with early contributions by \citet{korn_controlling_number_2004} and
\citet{van_augmentation_2004}
more recent developments including \citet{lehmann_romano_generalizations_2005,romano_shaikh_stepdown_2006,guo_romano_generalized_2007,romano_wolf_generalized_2007,guo_he_sarkar_fdp_2014,delattre_roquain_new_2015}. While such methods have generally been constructed via FWER controlling procedures, \citet{katsevich2020simultaneous} showed that $\gamma$-FDX guarantees could also be produced through pathwise FDR controlling procedures. Here, we show that simultaneous (over $\gamma$) $\gamma$-FDX control is a special case of simultaneous error control. For a fixed $\gamma\in[0,1)$, define the loss $\F^\gamma_N(R)=\ind\{\FDP_N(R)>\gamma\}$.}

\revise{In particular, the usual fixed $\gamma$-FDX guarantee is recovered by taking $\mathcal{F}$ to contain a single value of $\gamma$; taking $\alpha=1/2$ yields the corresponding median-FDP interpretation (e.g., as is studied by \citet{hemerik}), connecting this example to quantile-FDP control \citep{genovese2006exceedance}.}

\revise{Otherwise, setting $\mathcal{F} = \{\F^\gamma: \gamma \in (0, 1)\}$ permits simultaneous error rate control over $\mathcal{F}$ and enables the researcher to choose $\gamma$ post hoc while retaining the aggregate guarantee in \eqref{eq:sup-fcond}, which recovers the simultaneous control of \citep{goeman2011multiple,goeman2021only}. Formally, if we have a family of discovery sets $(\rejset^\gamma)_{\gamma \in (0, 1)}$ such that $\rejset^\gamma \in \rejcol^{\ER_{\F^\gamma}}_\alpha$ for all $\gamma \in (0, 1)$, then \eqref{eq:sup-fcond} implies the following uniform guarantee:
\begin{align}
\prob(\text{exists }\gamma \in (0, 1) \text{ s.t. }\FDP_{N_\pdist}(\rejset^\gamma) > \gamma) \leq \alpha.
\end{align}}
\end{example}
\begin{example}[Deciding between e-Holm and $\CeBH$ post hoc\label{example:Holm}]
    Let $\evalue_S$ be defined as the mean e-value in \Cref{sec: e-combining}. Let our collection of losses be $\Fcal = \{\FDP, \F\}$ where we define $\F_S(R)=\ind\{|S\cap R|> 0\}$ as the loss for FWER control. Note that $\rejset^{\textnormal{e-Holm}}_\alpha \coloneqq R^\FWER_\alpha(\ecol)$ is precisely the e-Holm procedure \citep{vovk_e-values_calibration_2021, hartog_family-wise_error_2025}. Consequently, we could decide based on the data to control the FDR with the $\CeBH$ procedure or to control the FWER with the e-Holm procedure. \revise{An example of such a rule is to choose a fixed threshold $r^* \in [m]$, and we choose to only use the $\CeBH$ discovery set if $|\rejset^\CeBH_\alpha| \geq r^*$, and the e-Holm discovery set otherwise. The guarantee we would get is as follows:
    \begin{align}
        \expect_\pdist\left[\FDP_{N_\pdist}(\rejset^\CeBH_\alpha) \cdot \ind\{|\rejset^\CeBH_\alpha| \geq r^*\} + \F_{N_\pdist}(\rejset_\alpha^{\textnormal{eHolm}})\cdot \ind\{|\rejset^\CeBH_\alpha| < r^*\}\right] \leq \alpha.
    \end{align}}
    \revise{A practitioner may want post hoc control under this decision rule because FDR may not be meaningful for small discovery sets, yet the size of the data driven discovery set is not known in advance. Here, the practitioner controls FWER for small discovery sets and FDR for large discovery sets. This provides a meaningful guarantee for distributions that yield only discovery sets smaller than $r^*$ (e.g., in a strong, sparse regime): in this case, the procedure guarantees FWER control. In every case, FDR is controlled because FWER control implies FDR control. Thus, a post hoc choice of loss allows the practitioner to adapt the error guarantee to the unknown underlying distribution.}
    
    \revise{A related multi-resolution procedure appears in \citet{hemerik_multiresolution_2025} for FWER and $\gamma$-FDX control that generalizes the well-known maxT procedure \citep{westfall_young_resampling_1993} for FWER control. For a fixed $\gamma$, when the maxT procedure rejects fewer than $\gamma^{-1}$ hypotheses, it coincides with the multi-resolution procedure and this multi-resolution procedure retains the same FWER control. When at least $\gamma^{-1}$ hypotheses are rejected by maxT, the multi-resolution procedure can make additional rejections while retaining $\gamma$-FDX control. Indeed, a promising direction for future work is to investigate what e-Closure based procedures can be derived for hybrid $
\gamma$-FDX and FWER control as well.}
\end{example}

We can also utilize our result attaining control for multiple error rates to recover the Closure Principle for simultaneous FDP control \citep{genovese2006exceedance,goeman2011multiple, goeman2021only}.
A random function $\mathbf{q}:2^{[m]}\to [0,1]$ is said to provide simultaneous FDP control, if for all $\mathrm{P} \in M$,
\begin{align}
            \pdist(\mathbf{q}(R) \geq \FDP_{N_\mathrm{P}}(R) \text{ for all }R\in 2^{[m]}) \geq 1 - \alpha.
\end{align}
Similar to \citet{goeman2021only}, we will focus on simultaneous true discovery guarantee in the following, which is equivalent to simultaneous FDP control but easier to handle. In contrast to $\mathbf{q}$, a simultaneous true discovery procedure $\mathbf{d}:2^{[m]}\to \mathbb{N}_0$ provides a simultaneous lower bound for the number of true discoveries in $R$, $|R\setminus N_\mathrm{P}| $, over all $R\in 2^{[m]}$.

Let $\boldsymbol{\Phi}$ be a family of local intersection tests.
\citet{genovese2006exceedance} and \citet{goeman2011multiple} showed that the
closed procedure $\mathrm{d}_{\boldsymbol{\Phi}}:2^{[m]}\to \mathbb{N}_0$ defined by
    \begin{align}
        \mathrm{d}_{\boldsymbol{\Phi}}(R)
                     \coloneqq \min\{|R\setminus S|:  S \in 2^{[m]}, \boldsymbol{\phi}_S=0\} \label{eq:closed_simult_FDP}
    \end{align}
    provides simultaneous true discovery guarantee.
We now show how this procedure can be reconstructed by our e-Closure Principle.  Let $\mathbf{E} = (\mathbf{e}_S)_{S \in 2^{[m]}}$ be an e-collection and $\F^d$, $d\in [m]_0 \coloneqq \{0\} \cup [m]$, be the error function defined by $\F_N^d(R)\coloneqq 1\{|R\setminus N|<d\}$. For a fixed $R\in 2^{[m]}$, we define $\mathrm{d}_\mathbf{E}$ as the largest $d\in [m]_0$ such that $R$ is contained in $\Rcal^{\ER_{\F^d}}_\alpha (\mathbf{E})$;
    \begin{align}
        \mathrm{d}_\mathbf{E}(R)\coloneqq \max\left\{d\in [m]_0: \mathbf{e}_S \geq \frac{\F_S^d(R)}{\alpha} \quad \forall S\in 2^{[m]}\right\}. \label{eq:e-closed_simult_FDP}
    \end{align}

\begin{theorem}\label{thm:simult_fdp}

    The procedure $\mathrm{d}_\mathbf{E}$ defined by \eqref{eq:e-closed_simult_FDP} provides simultaneous true discovery guarantee,
    $$
    \pdist(\mathrm{d}_\mathbf{E}(R) \leq |R\setminus N_\mathrm{P}| \text{ for all }R\in 2^{[m]}) \geq 1 - \alpha \quad \forall \mathrm{P} \in M.
    $$
    Furthermore, if $\mathbf{e}_S = \boldsymbol{\phi}_S/\alpha$, $S\in 2^{[m]}$, for some family of level $\alpha$ local tests $\boldsymbol{\Phi}$, then $\mathrm{d}_\mathbf{E}(R)=\mathrm{d}_{\boldsymbol{\Phi}}(R)$ for all $R\in 2^{[m]}$, where $\mathrm{d}_{\boldsymbol{\Phi}}$ is given by \eqref{eq:closed_simult_FDP}.
\end{theorem}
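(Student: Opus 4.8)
My plan is to reduce the whole statement to a single observation: because each $\F^d_S(R) = \ind\{|R\setminus S| < d\}$ is an indicator, the family of constraints defining $\mathrm{d}_\mathbf{E}(R)$ collapses to a threshold test on the e-collection. First I would note that, for fixed $R$ and $d$, the constraint contributed by $S$, namely $\evalue_S \geq \F^d_S(R)/\alpha$, is vacuous when $|R\setminus S| \geq d$ and otherwise demands $\evalue_S \geq \alpha^{-1}$. Introducing the induced local test $\boldsymbol{\phi}_S \coloneqq \ind\{\evalue_S \geq \alpha^{-1}\}$, the feasible $d$ are exactly those with $d \leq |R\setminus S|$ for every $S$ having $\boldsymbol{\phi}_S = 0$, so I would establish the identity
\[
\mathrm{d}_\mathbf{E}(R) = \min\{|R\setminus S| : S \in 2^{[m]},\ \boldsymbol{\phi}_S = 0\}.
\]
Since $\F^0_S(R)\equiv 0$, the value $d = 0$ is always feasible, so the defining maximum is over a nonempty set and $\mathrm{d}_\mathbf{E}$ is well-defined. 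This identity exhibits $\mathrm{d}_\mathbf{E}$ as the classical closed procedure $\mathrm{d}_{\boldsymbol{\Phi}}$ of \eqref{eq:closed_simult_FDP} built from $\boldsymbol{\Phi}$, and will carry most of the argument.

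For the probabilistic guarantee I would exploit that a single null witness, $S = N_\pdist$, certifies the bound for every $R$ at once. On the event $\{\evalue_{N_\pdist} < \alpha^{-1}\}$ we have $\boldsymbol{\phi}_{N_\pdist} = 0$, so plugging $S = N_\pdist$ into the identity gives $\mathrm{d}_\mathbf{E}(R) \leq |R\setminus N_\pdist|$ for every $R$ simultaneously (equivalently, $d = |R\setminus N_\pdist|+1$ is infeasible, as it forces $\evalue_{N_\pdist}\geq\alpha^{-1}$). As the witness does not depend on $R$, I would then only need to bound the complementary event; by Markov's inequality together with $\expect_\pdist[\evalue_{N_\pdist}] \leq 1$,
\[
\pdist(\evalue_{N_\pdist} \geq \alpha^{-1}) \leq \alpha\,\expect_\pdist[\evalue_{N_\pdist}] \leq \alpha,
\]
which yields $\pdist(\mathrm{d}_\mathbf{E}(R) \leq |R\setminus N_\pdist|\text{ for all }R) \geq 1-\alpha$.

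The ``furthermore'' part would then follow instantly from the identity: when $\evalue_S = \boldsymbol{\phi}_S/\alpha$ with $\boldsymbol{\phi}_S \in \{0,1\}$, we have $\evalue_S \geq \alpha^{-1} \iff \boldsymbol{\phi}_S = 1$, so the induced test coincides with $\boldsymbol{\Phi}$ and both minima range over the identical set $\{S : \boldsymbol{\phi}_S = 0\}$, giving $\mathrm{d}_\mathbf{E}(R) = \mathrm{d}_{\boldsymbol{\Phi}}(R)$ for all $R$. If one prefers to avoid the general identity, the same conclusion can be reached directly by two inequalities --- infeasibility of $d = \mathrm{d}_{\boldsymbol{\Phi}}(R)+1$ using the minimizing $S$, and feasibility of $d = \mathrm{d}_{\boldsymbol{\Phi}}(R)$ by checking the constraint separately for $\boldsymbol{\phi}_S = 1$ and $\boldsymbol{\phi}_S = 0$.

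I do not anticipate a serious obstacle; the proof is short once the indicator structure is used. The one genuine conceptual step --- and where I would be most careful --- is recognizing that simultaneity over all $R$ is decoupled from $R$, so that the single null set $N_\pdist$ works as a witness for every $R$ and a single Markov bound on $\evalue_{N_\pdist}$ controls the whole family. The only remaining technicalities are boundary conventions: the feasibility of $d=0$ (ensuring the maximum is nonempty) and verifying that the minimum over $\{S : \boldsymbol{\phi}_S = 0\}$ is read identically in $\mathrm{d}_\mathbf{E}$ and $\mathrm{d}_{\boldsymbol{\Phi}}$, which is automatic since both range over the same $S$.
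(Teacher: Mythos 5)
Your proof is correct, and it reaches the result by a somewhat more elementary and self-contained route than the paper. The paper proves the coverage claim by invoking its general post-hoc-loss result (Theorem \ref{thm: multiloss}) applied to the family $\{\F^d\}_{d\in[m]_0}$, observing that the expected supremum of these indicator losses is exactly the probability of the bad event; the algebraic identity rewriting the max as a min is carried out only in the special case $\evalue_S=\boldsymbol{\phi}_S/\alpha$ needed for the ``furthermore'' part. You instead first prove the identity $\mathrm{d}_\mathbf{E}(R)=\min\{|R\setminus S|:\evalue_S<\alpha^{-1}\}$ for an \emph{arbitrary} e-collection --- a mild but genuine strengthening of the paper's chain of equalities, since it exhibits every $\mathrm{d}_\mathbf{E}$ as a classical closed procedure with induced tests $\ind\{\evalue_S\geq\alpha^{-1}\}$ --- and then run the classical closed-testing witness argument: the single set $S=N_\pdist$ certifies the bound for all $R$ simultaneously, and Markov's inequality controls $\pdist(\evalue_{N_\pdist}\geq\alpha^{-1})\leq\alpha$. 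At bottom the probabilistic content is identical (both reduce to the validity of the one e-value $\evalue_{N_\pdist}$), but your version makes the mechanism transparent without routing through Theorem \ref{thm: multiloss}, and the up-front identity makes the ``furthermore'' clause immediate. The only loose end, which the paper shares, is the convention for $\min\emptyset$ when no $S$ has $\boldsymbol{\phi}_S=0$ (both definitions should then return $m$), so nothing is lost.
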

Since \citet{goeman2021only} proved that every admissible procedure with simultaneous true discovery guarantee can be constructed by \eqref{eq:closed_simult_FDP}, \Cref{thm:simult_fdp} immediately implies that the same also holds for our procedure \eqref{eq:e-closed_simult_FDP} based on the e-Closure Principle.

\begin{remark}
    For particular choices of uncountable $\mathcal{F}$, the expectation in \eqref{eq:sup-fcond} might not be well-defined. One could avoid this by modifying the definition of simultaneous error rate control to require that  $${\sup_{\F\in \mathcal{F}} \max_{R \in \Rcal^{\ER_\F}(x)}\  \F_{S}(R)\leq U_{S}(x)} \quad \text{for all possible data }x \text{ and for each }S\subseteq [m], $$ where $\boldsymbol{U}_S=U_S(X)$, $S\subseteq [m]$, are nonnegative random variables with $\expect_{\pdist}\left[ \boldsymbol{U}_{N_{\pdist}} \right]\leq \alpha$. Alternatively, one could make weak measure-theoretic assumptions on the error functions and data (see, e.g., \citet{chugg2025admissibility}). With this, it is also possible to extend the equivalence of simultaneous and post hoc control (Lemma~\ref{lemma:simult_posthoc}) to error functions. However, such technical subtleties are beyond the scope of this work.
\end{remark}

\subsection{Post hoc choice of nominal level} \label{sec: alpha}
Theorem \ref{thm: FWER} gives researchers the exciting option of switching from one error rate to another after seeing the data. In this section, we shall see that, for certain e-Closure procedures, there is the additional option of adapting the nominal error rate $\alpha$ to the data. To achieve this, we will extend \Cref{def: simultaneous ER} for simultaneous error rate control further, building on the post hoc validity definitions of \cite{grunwald_neyman-pearson_e-values_2024} and \cite{koning2023post}. This lets us select the \emph{nominal level} $\alpha >0$ in a post hoc fashion and retain a validity guarantee.

\begin{definition}[Simultaneous $\ER_\F$ and $\alpha$ control] \label{def: simultaneous ER alpha}
    Let $\mathcal{F}$ be a family of loss functions. For every $\alpha >0$ and $\F \in \mathcal{F}$, let $\boldsymbol{\mathcal{R}}_\alpha^{\ER_\F} \subseteq 2^{[m]}$ be a simultaneous discovery set. Then $\rejcol^{\mathcal{F}} \coloneqq (\boldsymbol{\mathcal{R}}_\alpha^{\ER_\F})_{\F \in \mathcal{F}, \alpha >0}$, ensures simultaneous control over the family of error functions $\mathcal{F}$ and nominal levels $\alpha>0$ if, for every $\mathrm{P} \in M$,
\[
\expect_\mathrm{P}\bigg( \sup_{\alpha >0} \sup_{\F \in \mathcal{F}} \max_{R \in \boldsymbol{\mathcal{R}}_\alpha^{\ER_\F}} \frac{\F_{N_\pdist}(R)}{\alpha}\bigg)   \leq 1.
\] where we let $0 / 0 = 0$ and $x / 0 = \infty$ for any positive $x$.
\end{definition}
\revise{For uncountable loss families or level sets, we assume the displayed supremum is measurable (for example through joint measurability and separability). 
}

Where $\ER_\F$ control in the simultaneous sense according to \Cref{def: simultaneous ER} allows the researcher to choose $R \in \boldsymbol{\mathcal{R}}_\alpha$ freely for a pre-chosen $\alpha$, $\ER_\F$ control according to \Cref{def: simultaneous ER alpha} implies that the researcher may choose freely a post hoc $R \in \boldsymbol{\Rcal_\alpha^{\ER_\F}}$ for each value of $\alpha$ (and for each choice of $\F$). Note that we do not directly bound the expectation of the loss by $\alpha$ anymore in the post hoc definition --- however, when $\alpha$ is fixed to be a single value, the standard error control of \Cref{def: simultaneous ER} is equivalent to the post hoc definition in \Cref{def: simultaneous ER alpha}. Furthermore, FDR control with post hoc $\alpha$ implies FDR control with data-dependent $\boldsymbol{\alpha}$.
\begin{corollary}
    Say we have a collection of simultaneous discovery sets $(\rejcol^{\ER_\F}_\alpha)_{\F \in \mathcal{F}, \alpha >0}$. \revise{Let $\mathbf{f}$ be a data-dependent loss function in $\mathcal{F}$, let $\boldsymbol{\alpha}>0$ be a data-dependent nominal level, and let $\mathbf R$ be a measurable selector satisfying $\mathbf R\in\rejcol^{\ER_{\mathbf f}}_{\boldsymbol\alpha}$ almost surely. These objects can be arbitrarily dependent with each other and with $(\rejcol^{\ER_\F}_\alpha)_{\F \in \mathcal{F}, \alpha >0}$. If $(\rejcol^{\ER_\F}_\alpha)_{\F \in \mathcal{F}, \alpha >0}$ has simultaneous control in the sense of \Cref{def: simultaneous ER alpha}, then, for all $\pdist \in M$,}
    \begin{align}
        \revise{\expect_\pdist\left(\frac{\mathbf{f}_{N_\pdist}(\mathbf R)}{\boldsymbol{\alpha}}\right) \leq 1.}
    \end{align} 
\end{corollary}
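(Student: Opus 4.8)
The plan is to reduce the claim to a single \emph{pointwise} (almost sure) domination and then integrate. The right-hand side of the defining inequality in \Cref{def: simultaneous ER alpha} is a supremum taken over \emph{all} admissible triples consisting of a loss $\F \in \mathcal{F}$, a level $\alpha \in (0,1]$, and a set $R \in \boldsymbol{\mathcal{R}}_\alpha^{\ER_\F}$. The data-dependent objects $\mathbf{f}$, $\boldsymbol{\alpha}$, and the data-dependent set realizing $\mathbf{f}_{N_\pdist}$ simply select, for each outcome $\omega$, one such admissible triple. Hence the random ratio $\mathbf{f}_{N_\pdist}/\boldsymbol{\alpha}$ is, outcome by outcome, no larger than the supremum already appearing inside the expectation of \Cref{def: simultaneous ER alpha}, and the result then follows by monotonicity of the expectation together with the hypothesized post hoc control bound.

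First I would make the implicit selection explicit: write $\mathbf{f}_{N_\pdist} = \mathbf{f}_{N_\pdist}(\mathbf{R})$, where $\mathbf{R}$ is the data-dependent discovery set chosen from $\rejcol^{\ER_{\mathbf{f}}}_{\boldsymbol{\alpha}}$ (if one instead reads $\mathbf{f}_{N_\pdist}$ as the simultaneous maximum $\max_{R \in \rejcol^{\ER_{\mathbf{f}}}_{\boldsymbol{\alpha}}} \mathbf{f}_{N_\pdist}(R)$, the argument goes through verbatim, since this maximum is itself one of the terms in the inner $\max_R$). Then for $\mathrm{P}$-almost every $\omega$, because $\mathbf{f}(\omega) \in \mathcal{F}$, $\boldsymbol{\alpha}(\omega) \in (0,1]$, and $\mathbf{R}(\omega) \in \boldsymbol{\mathcal{R}}_{\boldsymbol{\alpha}(\omega)}^{\ER_{\mathbf{f}(\omega)}}$, the value $\mathbf{f}_{N_\pdist}(\mathbf{R})/\boldsymbol{\alpha}$ is one of the quantities over which the double supremum and the inner maximum are taken. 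This delivers the pointwise bound
\[
\frac{\mathbf{f}_{N_\pdist}}{\boldsymbol{\alpha}}
\;\le\;
\sup_{\alpha \in (0,1]}\ \sup_{\F \in \mathcal{F}}\ \max_{R \in \boldsymbol{\mathcal{R}}_\alpha^{\ER_\F}} \frac{\F_{N_\pdist}(R)}{\alpha}.
\]
Applying $\expect_\pdist$ to both sides, invoking monotonicity of the expectation, and then using the post hoc control hypothesis (the right-hand side has expectation at most $1$) completes the argument.

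The hard part will not be any genuine inequality but rather the measure-theoretic bookkeeping. I must ensure that $\omega \mapsto \mathbf{f}_{N_\pdist}(\mathbf{R})/\boldsymbol{\alpha}$ is a bona fide random variable, which follows from $\mathbf{f}$, $\boldsymbol{\alpha}$, and $\mathbf{R}$ all being measurable functions of the data $X$ while each $\F_N(\cdot)$ is a fixed measurable loss; and that the supremum on the right-hand side is measurable with well-defined integral, which is exactly what \Cref{def: simultaneous ER alpha} already presupposes when it asserts that this expectation is at most $1$. The convention $x/0 = \infty$ is never triggered here, since $\boldsymbol{\alpha}$ takes values in $(0,1]$, so the ratio is always well-defined; consequently no further estimates are required.
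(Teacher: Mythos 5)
Your proof is correct and is exactly the argument the paper intends (the corollary is stated without proof precisely because it follows by this pointwise-domination-plus-monotonicity step from \Cref{def: simultaneous ER alpha}). Your handling of the notational ambiguity in $\mathbf{f}_{N_\pdist}$ and the measurability remarks are appropriate and add nothing that would change the argument.
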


FDR control with post hoc $\alpha$, though seemingly more complicated, follows more or less directly from the e-Closure Principle. It only adds the additional requirement that the e-collection $\mathbf{E}$ does not depend on $\alpha$, since the resulting objects would no longer be valid e-values. \revise{For example, for our $\cBY$ calibrator in \eqref{eq: by calibrator}, one must fix a single $\alpha$ level in the calibrator, separate from the post hoc nominal level $\boldsymbol{\alpha}$. We cannot allow the calibrator's $\alpha$ level to vary in a data-dependent fashion, since the resulting random variable would no longer be a valid e-value.} The proof is similar to that of \Cref{thm: e-closure}. \Cref{thm: alpha} extends the corresponding result of \citet[Theorem 9.10]{ramdas_hypothesis_testing_2025} that $\alpha$ can be chosen post hoc in eBH to general e-Closure procedures and simultaneous FDR control.

\begin{theorem}[The e-Closure Principle for general simultaneous error control] \label{thm: alpha}
Suppose $\mathbf{E}$ does not depend on $\alpha$. Then $(\Rcal_\alpha^\F(\mathbf{E}))_{\F \in \mathcal{F}, \alpha >0}$ guarantees simultaneous control over $ \mathcal{F}$ and $\alpha >0$ (\Cref{def: simultaneous ER alpha}).
Furthermore for every collection $(\rejcol_\alpha^\F)_{\F \in \mathcal{F}, \alpha >0}$ with simultaneous control, there exists an e-collection $\mathbf{E}$ (that does not depend on $\alpha$) such that $\rejcol^{\ER_\F}_\alpha \subseteq \Rcal^{\ER_\F}_\alpha(\mathbf{E})$ for all $\F \in \mathcal{F}$ and $\alpha >0$.
\end{theorem}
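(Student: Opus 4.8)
The plan is to mirror the proof of \Cref{thm: e-closure} essentially verbatim, carrying two extra suprema (over $\alpha \in (0,1]$ and over $\F \in \mathcal{F}$) through the argument; the single new ingredient is that the hypothesis ``$\mathbf{E}$ does not depend on $\alpha$'' forces the dominating e-value $\evalue_{N_\pdist}$ to be free of both $\alpha$ and $\F$, so that pushing these suprema through costs nothing.

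For the sufficiency direction I would fix $\pdist \in M$ and take any $\alpha \in (0,1]$, any $\F \in \mathcal{F}$, and any $R \in \Rcal_\alpha^{\ER_\F}(\mathbf{E})$. Instantiating the defining inequality of $\Rcal_\alpha^{\ER_\F}(\mathbf{E})$ in \eqref{eq: e-closure} at $S = N_\pdist$ gives $\F_{N_\pdist}(R)/\alpha \le \evalue_{N_\pdist}$. Because $\evalue_{N_\pdist}$ is independent of $\alpha$, $\F$, and $R$, the triple supremum on the left remains dominated by it:
\[
\sup_{\alpha \in (0,1]}\ \sup_{\F \in \mathcal{F}}\ \max_{R \in \Rcal_\alpha^{\ER_\F}(\mathbf{E})} \frac{\F_{N_\pdist}(R)}{\alpha} \le \evalue_{N_\pdist}.
\]
Taking $\expect_\pdist$ and using $\expect_\pdist(\evalue_{N_\pdist}) \le 1$ then reproduces exactly the inequality of \Cref{def: simultaneous ER alpha}, establishing post hoc control.

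For the completeness direction I would define, for each $S \in 2^{[m]}$, the $\alpha$-free random variable
\[
\evalue_S \coloneqq \sup_{\alpha \in (0,1]}\ \sup_{\F \in \mathcal{F}}\ \max_{R \in \rejcol_\alpha^{\ER_\F}} \frac{\F_S(R)}{\alpha}.
\]
Setting $S = N_\pdist$ turns this into precisely the random variable bounded by the post hoc control assumption, so $\expect_\pdist(\evalue_{N_\pdist}) \le 1$ for every $\pdist$ and $\mathbf{E} = (\evalue_S)_{S}$ is a valid e-collection that does not depend on $\alpha$. Containment is then immediate from the construction: for fixed $\alpha$, $\F$, and $R \in \rejcol_\alpha^{\ER_\F}$, the term indexed by $(\alpha,\F,R)$ sits inside the supremum defining every $\evalue_S$, so $\evalue_S \ge \F_S(R)/\alpha$ for all $S \in 2^{[m]}$, which by \eqref{eq: e-closure} says exactly that $R \in \Rcal_\alpha^{\ER_\F}(\mathbf{E})$; hence $\rejcol_\alpha^{\ER_\F} \subseteq \Rcal_\alpha^{\ER_\F}(\mathbf{E})$ for all $\F$ and $\alpha$.

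I expect the work to be conceptual rather than computational, and the main point requiring care is recognizing that \Cref{def: simultaneous ER alpha} is \emph{literally} the assertion that the $\alpha$-free envelope $\evalue_{N_\pdist}$ is an e-value; once this is seen, both directions collapse to a single line each. The two genuine subtleties are (i) that the $\alpha$-independence of $\mathbf{E}$ is exactly what licenses pulling the suprema over $\alpha$ and $\F$ outside the e-value bound in the forward direction, and (ii) measurability of the supremum defining $\evalue_S$; the latter is harmless since the inner maximum ranges over the finite set $2^{[m]}$, and the e-collection property invokes only $\evalue_{N_\pdist}$, which coincides with the quantity already treated (or handled by outer expectation) in \Cref{def: simultaneous ER alpha}. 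The conventions $0/0 = 0$ and $x/0 = \infty$ cause no trouble because the outer supremum runs only over $\alpha \in (0,1]$.
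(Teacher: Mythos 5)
Your proof is correct and follows essentially the same route as the paper's: the forward direction bounds the triple supremum by $\evalue_{N_\pdist}$ using the $\alpha$-independence of $\mathbf{E}$, and the completeness direction defines $\evalue_S$ as exactly the same supremum envelope $\sup_{\alpha}\sup_{\F}\max_{R \in \rejcol_\alpha^{\ER_\F}} \F_S(R)/\alpha$ that appears in equation \eqref{eq:post-hoc-closed-evalue}. The only difference is cosmetic: you spell out the containment argument and the measurability remark slightly more explicitly than the paper does.
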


\begin{proof}
Choose any $\mathrm{P} \in M$. Then,
$$
\expect_\mathrm{P}\left( \sup_{\alpha >0} \sup_{\F \in \mathcal{F}} \max_{R \in \Rcal^{\ER_\F}_\alpha(\mathbf{E})} \frac{\F_{N_\pdist}(R)}{\alpha}\right) \leq
\expect_\mathrm{P} (\mathbf{e}_{N_\mathrm{P}}) \leq 1.
$$
This shows the validity of the post hoc controlling collection defined from an e-collection.
Now, we can define the following e-values for each $S\subseteq [m]$ from any $\rejcol^\mathcal{F}$:
\begin{align}
    \evalue_S = \sup_{\alpha >0} \sup_{\F \in \mathcal{F}} \max_{R \in \rejcol_\alpha^{\ER_\F}} \frac{\F_{S}(R)}{\alpha}. \label{eq:post-hoc-closed-evalue}
\end{align}

We can then use these e-values to define a post hoc valid procedure again as the following application of the e-Closure procedure: $\Rcal^{\mathcal{F}}(\mathbf{E}) \coloneqq (\Rcal^{\ER_\F}_\alpha(\mathbf{E}))_{\F \in \mathcal{F}, \alpha >0}$.
The e-collection with e-values defined in \eqref{eq:post-hoc-closed-evalue} is valid via the fact that $\rejcol^\mathcal{F}$ is post hoc valid as defined in \Cref{def: simultaneous ER alpha}. Then by definition of $\Rcal^{\ER_\F}_\alpha(\mathbf{E})$, we know for any $R \in \rejcol^{\ER_\F}_\alpha$, we have $R \in \Rcal^{\ER_\F}_\alpha(\mathbf{E})$ as well, and we get our desired dominance result.
\end{proof}
\begin{remark}
\revise{\Cref{thm: alpha} could still be used with methods such as $\cBY$ whose calibrator depends on a fixed level: denote that calibration level by $\alpha_0$, and keep it fixed while the nominal error-control level $\alpha$ is chosen post hoc.}
\end{remark}
Thus, in this section, we have shown that the e-Closure Principle not only increases power but also allows greater variety in the choice of discovery set and error control guarantees.

\section{Restricted combinations} \label{sec: Shaffer}

Thus far, we have always constructed an e-collection simply by constructing e-values for intersection hypotheses $H_S$. This is sufficient for making an e-collection, but not necessary (see \Cref{rem: partitoning}). We can instead make e-values for all partitioning hypotheses $\bar{H}_S \coloneqq \{\pdist \in M: N_\pdist = S\}$ \citep{finner2002partitioning}. If each $\mathbf{e}_S$ is an e-value for $\bar{H}_S$, then in particular $\mathbf{e}_{N_\pdist}$ is a valid e-value, since $\bar{H}_{N_\pdist}$ is true for every $\pdist$. Since $\bar{H}_S \subseteq H_S$, there is a potential power gain in switching from $H_S$ to $\bar{H}_S$.

One situation in which this power gain might materialize is in the context of restricted combinations
\citep{shaffer1986modified}.
For example, suppose we have $m=3$ and $H_1\colon \theta_A=\theta_B$, $H_2\colon \theta_A=\theta_C$ and $H_3\colon \theta_B=\theta_C$. Then we have $H_{\{1,3\}}: \theta_A=\theta_B=\theta_C$, but
$\bar H_{\{1,3\}} = \{\mathrm{P} \in M\colon \theta_A=\theta_B=\theta_C  \textrm{ and } \theta_A \neq \theta_C\} = \emptyset$. \revise{Consequently, validity imposes no constraint on $\mathbf e_{\{1,3\}}$: its e-Closure constraint may be omitted, or one may use an arbitrarily large constant. Another way to look at this is that there is no $\pdist$ for which $N_\pdist = \{1,3\}$.}

Taking restricted combinations into account can therefore lead to substantial gains in power. \revise{For example, if we test the $m=6$ pairwise comparisons among four parameters $\theta_A, \theta_B, \theta_C, \theta_D$, attainable true null patterns correspond to the $B_4=15$ equivalence relations on four labels. Thus, among all $2^6=64$ subsets, 15 partitioning hypotheses are nonempty and 49 are empty; excluding the empty true null set leaves 14 attainable and 49 impossible subsets among the 63 nonempty subsets. Suppose that the hypothesis $H_1: \theta_A=\theta_B$ has e-value $4\alpha^{-1}$, $H_2: \theta_A=\theta_C$ and $H_3: \theta_A=\theta_D$ both have e-values of $\alpha^{-1}$, while the other three hypotheses have e-values of 0. Then $\CeBH$ would only reject $\mathbf{R} = \{1\}$ if restricted combinations were ignored, but could additionally reject $\mathbf{R} = \{1,2,3\}$ (and all its subsets) when they are used.}
\begin{remark}
    \revise{When $\F_N(R)$ is nondecreasing in $N$ under set inclusion, every simultaneously valid procedure admits a reconstructing e-collection whose $\evalue_S$ is valid for the intersection hypothesis $H_S$. Thus intersection valid local e-values suffice for completeness. Partition valid local e-values can nevertheless be strictly more powerful when some null configurations are impossible. Without monotonicity of the loss, validity of $\evalue_{N_\pdist}$ is the necessary and sufficient condition.} \label{rem: partitoning}
\end{remark}
\section{Computation} \label{sec: computation}
For an arbitrary e-collection $\mathbf{E}$, checking whether $R \in \Rcal_\alpha^\FDR(\mathbf{E})$ involves computing and using exponentially many e-values. However, in special cases, most notably for $\CeBH$, $\textcBY$ and $\textcSu$ methods proposed above, computation reduces to polynomial time.

It is easy to check that the local e-values of $\textcBY$ and $\textcSu$ have the property that they are weakly decreasing in the per-hypothesis p-values. We can exploit this in the spirit of the shortcuts for closed testing by \citet{dobriban2020fast}. To check for a set $R$ whether $\alpha\mathbf{e}_S \geq |R \cap S|/|R|$, it suffices to check, for each $1\leq a \leq |R|$ and each $0 \leq b \leq m-|R|$, whether the condition holds for the set $S$ consisting of the indices of $a$ largest p-values in $R$ and of the $b$ largest p-values in $[m]\setminus R$. This takes $O(m^2)$ evaluations of $\mathbf{e}_{S}$. Then, finding the largest set $[r]$ that is rejected by the method takes $O(m^3)$ such evaluations, or $O(m^4)$ time. Precalculation of sums of e-values can reduce this to $O(m^3)$ for $\cBY$.

For $\CeBH$ we can do the calculations even faster. We will illustrate this for sets $R$ of the form $[r]$. Define
\[
\mathbf{s}_k \coloneqq \sum_{i=1}^k \mathbf{e}_i, \text{ and }\mathbf{g}(a,r,b)  \coloneqq \mathbf{s}_m - \mathbf{s}_b + \mathbf{s}_r - \mathbf{s}_a - \frac{(m-b+r-a)(r-a)}{r\alpha} .
\]
We have $\alpha\mathbf{e}_S \geq |R \cap S|/|R|$, for all $S \in 2^{[m]}$, if and only if, for all $a \in \{0, \dots, r - 1\}$ and $b \in \{r, \dots, m\}$, $\mathbf{g}(a,r,b) \geq 0$.
It is easily checked that $\mathbf{g}(a,r,b)$ is convex in $b$. Therefore, for fixed $r$ and $a$, the minimum of $\mathbf{g}$ can be found in $O(\log m) $ time. We can therefore check whether $R \in \Rcal_\alpha^\FDR(\mathbf{E})$ in $O(m\log m)$ time, and find the largest such $R$ in $O(m^2\log m)$ time by trying all values of $r$ from $m$ downward. In practice, if $r$ is substantially larger than the size of the largest rejected $R$, one quickly finds $a,b$ for which $\mathbf{g}(a,r,b)<0$, so that computation time for finding the largest rejected $R$ is often closer to $m\log m$ in practice. Checking whether $R\in \Rcal_\alpha^\FDR(\mathbf{E})$ for sets $R$ not of the form $[r]$ can also be done in $O(m\log m)$ time following essentially the same reasoning as for sets of the form $[r]$, adapting the definition of $\mathbf{g}$ as appropriate.

\revise{The $\CeBH$, $\cBY$ and $\cSu$ methods have been efficiently implemented in the \texttt{eClosure} R package, available on CRAN, and the \texttt{eclosure} Python package on PyPI.}

\section{Additional real data applications} \label{sec: real data}

To better understand the relationship between $\cBY$ and other possible FDR controlling procedures, we produce in \Cref{tab:by_real_data_discoveries_full} a more comprehensive version of \Cref{tab:by_real_data_discoveries} that also includes the number of discoveries made by the standard BH procedure and the $\CeBH$ procedure applied to p-values calibrated via \eqref{eq: by calibrator}, which we denote by ``'$\CeBH$ + cal.''. We can see that the BH procedure still makes many more discoveries than all other procedures. Further, we do see that $\cBY$ practically is more powerful than $\CeBH$ with calibration.

\begin{table}[htbp]\ifarver{}{\fontsize{10}{12}\selectfont}
\caption{\revise{Number of discoveries made by existing procedures, \citet{benjamini_control_false_2001} (BY) and \citet{benjamini_controlling_false_1995} (BH), and by the e-Closure-derived procedures $\cBY$ and $\CeBH$ applied to p-values calibrated via \eqref{eq: by calibrator}, for the same datasets as \Cref{tab:by_real_data_discoveries}. The BH procedure is used as a benchmark here, as it is invalid under arbitrary dependence (unlike the 3 other procedures), and requires strong assumptions on positive dependence among the p-values.}
\label{tab:by_real_data_discoveries_full}
}
    \centering
\begin{tabular}{lccccccccc}
\toprule
Dataset & $m$ & \multicolumn{4}{c}{$\alpha = 0.05$} & \multicolumn{4}{c}{$\alpha = 0.1$} \\ 
\cmidrule(lr){3-6} \cmidrule(lr){7-10}
 &  & BY & $\CeBH$ + cal. & $\cBY$ & BH & BY & $\CeBH$ + cal. & $\cBY$ & BH \\
\midrule
APSAC & 15 & 3 & 3 & 3 & 4 & 3 & 4 & 5 & 9 \\
NAEP & 34 & 6 & 7 & 8 & 11 & 8 & 11 & 11 & 12 \\
PADJUST & 50 & 12 & 14 & 15 & 20 & 17 & 20 & 20 & 21 \\
PVALUES & 4,289 & 129 & 144 & 145 & 767 & 225 & 270 & 275 & 1,139 \\
VANDEVIJVER & 4,919 & 614 & 672 & 677 & 1,340 & 779 & 859 & 866 & 1,728 \\
GOLUB & 7,128 & 617 & 640 & 648 & 1,249 & 743 & 791 & 799 & 1,605 \\
\bottomrule
\end{tabular}
\end{table}
In addition to the results for BY and $\cBY$ in \Cref{tab:by_real_data_discoveries}, we also compare the eBH procedure and the $\CeBH$ procedure on a real cryptocurrency dataset.
To evaluate the efficacy of eBH, \citet{wang_false_discovery_2022} introduced a cryptocurrency dataset and e-values for monthly coin returns. The return data ranged from 2015 to 2021, and the e-values were constructed to test whether each coin's return over the period was positive. Three scenarios are considered, corresponding to the coins with the 426 (all coins), 200, and 100 highest market capitalizations. Three strategies are also considered for generating the e-values, corresponding to three investment strategies: (1) Buy and Hold, (2) Rebalance 30\%, and (3) Rebalance 50\%.
The resulting number of discoveries for each procedure is shown in \Cref{tab:ebh_real_data_discoveries}. Thus, we can see that $\CeBH$ always makes at least as many discoveries as eBH (as guaranteed), and often substantially more, especially at higher values of $\alpha$.

Our results on real datasets empirically confirm the theoretical improvements of these closed procedures over their respective standard counterparts.
\begin{table}[htbp]\ifarver{}{\fontsize{10}{12}\selectfont}
\centering
\caption{Number of discoveries for eBH and $\CeBH$ made on monthly return data of cryptocurrency coins introduced in \citet{wang_false_discovery_2022}.}
\label{tab:ebh_real_data_discoveries}
\begin{tabular}{lcccccc}
\toprule
Strategy & $m$ & \multicolumn{2}{c}{$\alpha = 0.05$} & \multicolumn{2}{c}{$\alpha = 0.2$} \\
\cmidrule(lr){3-4} \cmidrule(lr){5-6}
&  & eBH & $\CeBH$ & eBH & $\CeBH$ \\
\midrule
Buy and Hold & 100 & 0 & 1 & 19 & 34 \\
& 200 & 4 & 7 & 21 & 52 \\
& 426 & 3 & 9 & 18 & 68 \\
Rebalance 30\% & 100 & 6 & 11 & 21 & 64 \\
& 200 & 12 & 28 & 44 & 105 \\
& 426 & 21 & 34 & 55 & 124 \\
Rebalance 50\% & 100 & 13 & 29 & 40 & 65 \\
& 200 & 28 & 55 & 65 & 108 \\
& 426 & 42 & 62 & 83 & 140 \\
\bottomrule
\end{tabular}
\end{table}
\section{Discussion}
The e-Closure Principle resolves the disparity between FWER and FDP tail probability methods on one side and FDR control methods on the other. It translates the Closure Principle for these methods directly to the FDR context, bringing many of its benefits into the realm of FDR.

The e-Closure Principle can be used to propose new methods and improve existing ones. In this paper, we have mostly focused on improving existing methods, in order to showcase the power of the principle. However, we believe that its greatest strength is in the development of novel methods. To develop an FDR control method, a researcher only needs to decide how to aggregate the evidence against a partitioning or intersection hypothesis into an e-value. After making that choice, the only remaining problem to be solved is computational. \revise{Designing how to aggregate such evidence into an e-value can require extra work. For example, when one already has access to an e-value for each hypothesis, the e-Closure Principle immediately provides a method for multiple testing that is robust to arbitrary dependence by taking the average of the e-values in each intersection hypothesis. On the other hand, when one has per-hypothesis p-values, it may seem an attractive option to convert such p-values to per-hypothesis e-values and apply the e-Closure Principle. However, for the p-value based methods $\textcBY$ and $\textcSu$, we have seen further improvement by deriving more powerful local e-values directly for the intersection hypotheses.}

The e-Closure Principle also brings post hoc flexibility to FDR control on a scale that was previously only known in FDP control. Rather than only a single rejection set, researchers have many rejection sets to choose from, and they may use all the data to decide which one to report, while still retaining FDR control. If signal is very strong, an attractive option is to switch to FWER control post hoc, or for methods in which the e-values do not depend on $\alpha$, to adjust the target FDR level to match the amount of signal in the data. Finally, e-Closure is not restricted to the classical error rate, but allows methods to be formulated for novel error rates, as long as they can be written as expected loss.

There are several ways in which this work may be extended. We assumed a finite number of hypotheses, and the e-Closure Principle could also be used to derive procedures for infinite hypotheses (i.e., in online multiple testing).There are also many more FDR controlling methods than we have considered in this paper, but we believe that some of them can be improved, or at least made simultaneous, using e-Closure. Finding polynomial time algorithms for some such improvements will be challenging, but the literature on advanced shortcuts in closed testing may help. For example, the branch and bound algorithm may be as useful in e-Closure as it is in closed testing \citep{vesely2023permutation}.
 
\paragraph{Acknowledgments} We thank Aur\`ele Mingam for pointing out an error with our theorem relating to appyling e-Closure to BH variants and correcting our proof.
\paragraph{Declaration of funding} Rianne de Heide's work was supported by funding from NWO Veni grant number VI.Veni.222.018. Lasse Fischer acknowledges funding by the Deutsche Forschungsgemeinschaft (DFG, German Research Foundation) – Project number 281474342/GRK2224/2. Aaditya Ramdas acknowledges funding from NSF grant DMS-2310718. \bibliographystyle{plainnat}

\bibliography{ref}

\appendix

\section*{Appendix}
We organize the appendix as follows. 
In \Cref{sec: extensions}, we discuss several extensions to the $\textCeBH$ procedure based on modern developments in e-value based multiple testing, including randomization, boosting, and alternative e-merging functions.
\revise{We include a regularity-qualified representation result for admissible multiple testing procedures on e-values in \Cref{sec: admissibility} and, in the separate \Cref{sec: ordered closure sets}, formalize the ordering principle for symmetric local e-values.}
\Cref{sec:related-work} covers related work on closed testing with e-values and boosting the power of the eBH procedure. 
In \Cref{sec: more MT}, we show how our framework can be used to control more general multiple testing metrics such as mFDR and pFDR. 
\revise{\Cref{sec: minimally adaptive eBH} shows, for $m>4$, that $\textCeBH$ is a uniform improvement in both simultaneity and power over minimally adaptive eBH.}
\ifarver{
\Cref{sec: cebh calibrated vs cby example} provides a concrete example where $\textCeBH$\ with calibration is more powerful than $\cBY$.
}{
\Cref{sec: power examples} provides concrete examples illustrating differences in power among e-Closure based procedures.
}
\ifarver{}
{
\Cref{sec: bh and variants} presents the e-Closure formulation for the Benjamini-Hochberg procedure and its adaptive variants, \revise{showing that the particular e-collections considered there do not improve these procedures in terms of power or simultaneity.}
}
Lastly, in \Cref{sec: deferred proofs}, we provide deferred proofs for results stated in the main text.

\section{Extensions to $\textCeBH$}\label{sec: extensions}

We discuss some extensions to the $\textCeBH$ procedure. These extensions primarily highlight the fact that our procedure retains the advantages of e-value procedures. Notably, two of the improvements we discuss, randomization and boosting, apply to any method based on the e-Closure Principle, not just the methods in this section that directly use input e-values.

\paragraph{Randomization.} \citet{xu_more_powerful_2023} introduced stochastic rounding of e-values to improve e-value based hypothesis tests and the eBH procedure through randomization. We can apply stochastic rounding individually to each intersection e-value $\evalue_S$ for $S \subseteq [m]$, as follows. This technique can improve any $\ER_\F$-controlling procedure through its formulation as an e-Closure procedure; here, we illustrate a simple application to the $\CeBH$ procedure.
\revise{Define data-dependent lower support points}
\begin{align}
\revise{a_S \coloneqq \max_{R \in \rejcol^{\CeBH}_\alpha}\ \frac{\FDP_S(R)}{\alpha},
    \qquad b\coloneqq\alpha^{-1},\qquad \evalue_S^\star\coloneqq \evalue_S\wedge b}
\end{align}
\revise{for each $S \in 2^{[m]}$. Thus $0\leq a_S\leq b$. For $a_S<b$, define the stochastically rounded e-value}
\begin{align}
    \bar{\evalue}_S =
    \revise{a_S + (b-a_S) \cdot \mathbf{1}\left\{U \leq \frac{\evalue_S^\star-a_S}{b-a_S}\right\},}
\end{align}
\revise{and set $\bar\evalue_S=b$ when $a_S=b$. Here $U$ is uniform on $[0,1]$ and independent of everything else. Since membership of the original e-Closure collection gives $a_S\leq \evalue_S$, the displayed probability lies in $[0,1]$, and}
\[
\revise{\expect[\bar\evalue_S\mid\mathbf E]=\evalue_S^\star\leq\evalue_S.}
\]
\revise{Hence $\bar\evalue_{N_\pdist}$ is an e-value for every $\pdist\in M$.}

\revise{Note that $\bar{\evalue}_S \geq a_S$ almost surely, which means that $\Rcal_\alpha^\FDR(\bar{\mathbf{E}}) \supseteq \rejcol^\CeBH_\alpha$, where $\bar{\mathbf{E}} \coloneqq (\bar{\evalue}_S)_{S \subseteq [m]}$; that is, stochastic rounding can only increase the collection of valid discovery sets. Strict enlargement occurs on any event where the rounded local e-values make an additional e-Closure inequality hold, and whether such an event has positive probability depends on the true distribution.}

\paragraph{Boosting.} \citet{wang_false_discovery_2022} observed that one can increase the power of an e-value when testing against a finite set of rejection thresholds. In the most general setting of assuming arbitrary dependence, if one knows the marginal distribution of an e-value under the null, one can calculate a boosting factor $b$ for an e-value $E$ as follows $$b \coloneqq \max\ \{b' \in [1, \infty): \expect_\pdist[T(b' E)] \leq 1\},$$ where the expectation is taken under the known null distribution $\pdist \in M$. Here, $T$ is a truncation function that takes the largest element in a predetermined set that is less than or equal to the input.

In our setting, one simple way to boost e-values is to define a different truncation function $T_A$ for each $A \subseteq [m]$ as follows: $$T_A(x) \coloneqq \max\ \left( \{r / (\alpha k) : r / (\alpha k) \leq x, k \in [m], r \in [k \wedge |A|]\} \cup \{0\} \right).$$
This truncation function differs from those used to boost the eBH procedure and is derived by restricting the support of $\evalue_S$ to the values that $\FDP_S(R) / \alpha$ can take. \revise{For an intersection null, define}
\[
\revise{b_S\coloneqq\sup\left\{b\geq1:\sup_{\pdist\in H_S}\expect_\pdist[T_S(b\evalue_S)]\leq1\right\}.}
\]
\revise{Then $T_S(b_S \evalue_S)$ is a valid e-value (provided the supremum is attained; otherwise use any feasible $b<b_S$) and is at least as powerful as $\evalue_S$ in the sense that $\FDP_S(R) / \alpha \leq \evalue_S$ implies $\FDP_S(R) / \alpha \leq T_S(b_S \evalue_S)$ for all $R \subseteq [m]$. When a simple null distribution is completely known, the inner supremum reduces to expectation under that single distribution.}

Using the above boosted e-values immediately yields that the boosted $\textCeBH$ procedure dominates the original.
Other boosting techniques may be possible given more dependence assumptions or knowledge about the joint distribution of the entire e-collection $(\evalue_S)_{S \subseteq [m]}$.

\paragraph{Alternative e-merging functions.} While we used the average e-merging function in our formulation of the $\textCeBH$ procedure, one can use other e-merging functions when stronger dependence assumptions, i.e., independence or sequential dependence, are made about the e-values.
For example, if we assume all null e-values are independent, one can let the e-values for each intersection hypothesis be the product of the base hypothesis e-values, i.e.,
\begin{align}
    \evalue_S = \prod\limits_{i \in S} \evalue_i.
\end{align} One can then proceed to define an alternative $\textCeBH$ procedure for the product e-merging function in a similar fashion as we did in \Cref{sec: e-combining}, along with the corresponding algorithm of Section \ref{sec: computation}, by using the logarithm to turn the products into sums when computing e-values for the intersection hypotheses.

\section{\revise{Admissibility under arbitrary dependence}}\label{sec: admissibility}
\revise{For this section, call a deterministic simultaneous procedure $\rejcol(x)$ \emph{admissible} if there is no other valid procedure $\rejcol'(x)$ with $\rejcol(x)\subseteq\rejcol'(x)$ for every input $x$ and strict inclusion for at least one input.
}

\begin{proposition}\label{thm: admissible e-values}
\revise{Let $M$ be the class of all probability distributions on $[0,\infty)^m$, and, for each $S\subseteq[m]$, let $\bar H_S$ be the set of distributions under which $\expect_\pdist[\evalue_i]\leq1$ for $i\in S$ and $\expect_\pdist[\evalue_j]>1$ for $j\notin S$. Suppose an admissible deterministic $\ER_\F$-controlling procedure is represented by the e-Closure discovery set \eqref{eq: e-closure} from Borel local e-values}
\[
\revise{\evalue_S=F_S((\evalue_i)_{i\in S},(\evalue_j)_{j\notin S}).}
\]
\revise{Assume, for every nonempty $S$, that the envelope}
\[
\revise{\psi_S((x_i)_{i\in S})\coloneqq\sup_{(x_j)_{j\notin S}}
F_S((x_i)_{i\in S},(x_j)_{j\notin S})}
\]
\revise{is Borel measurable. Then the procedure can be represented by an e-collection (with $\evalue'_\emptyset=0$) where for every nonempty $S$,}
\[
\revise{\evalue'_S=\sum_{i\in S}w_{i,S}\evalue_i+c_S,}
\]
\revise{where $w_{i,S},c_S\geq0$ and $\sum_{i\in S}w_{i,S}+c_S\leq1$.}
\end{proposition}

\begin{proof}
\revise{Fix a nonempty $S$. Because $\evalue_S$ is an e-value under every distribution in $\bar H_S$, we first show that $\psi_S$ is finite. If $\psi_S(x_S)=\infty$ at some finite $x_S$, choose $q\in(0,1)$ such that $qx_i\leq1$ for every $i\in S$, and choose $x_{-S}$ with $F_S(x_S,x_{-S})>1/q$. Put probability $q$ on $(x_S,x_{-S})$ and probability $1-q$ on $(0,z_{-S})$, where the coordinates of $z_{-S}$ are sufficiently large that their expectations exceed one. This distribution belongs to $\bar H_S$, but}
\[
\revise{\expect_\pdist[F_S((\evalue_i)_{i\in S},(\evalue_j)_{j\notin S})]
\geq qF_S(x_S,x_{-S})>1,}
\]
\revise{a contradiction. (When $S=[m]$, the second atom may simply be placed at the origin.)}

\revise{It remains to show that $\psi_S$ is a generalized e-merging function. Let $Q$ be any distribution of $(\evalue_i)_{i\in S}$ satisfying $\expect_Q[\evalue_i]\leq1$ for every $i\in S$. For $\varepsilon\in(0,1)$, the Borel set}
\[
\revise{A_\varepsilon
\coloneqq
\{(x_S,x_{-S}):F_S(x_S,x_{-S})>\psi_S(x_S)-\varepsilon\}}
\]
\revise{has a nonempty section at every $x_S$. A standard measurable-selection theorem (the Jankov--von Neumann theorem; see \citet[Theorem~18.1]{kechris_classical_1995}) therefore lets us choose an $\varepsilon$-near-maximizer $g_\varepsilon(x_S)$ such that $(x_S,g_\varepsilon(x_S))\in A_\varepsilon$, with $g_\varepsilon$ universally measurable. Since $Q$ is arbitrary, this choice is measurable under the completion of $Q$, which is sufficient for the expectation argument below. If $S\neq[m]$, set}
\[
\revise{\delta_\varepsilon(x_S)\coloneqq
\frac{\varepsilon}{1+\psi_S(x_S)},
\qquad
L_\varepsilon(x_S)\coloneqq\frac{2}{\delta_\varepsilon(x_S)}.}
\]
\revise{Construct a distribution $\pdist_\varepsilon$ by drawing $x_S\sim Q$ and, conditionally on $x_S$, setting $x_{-S}=g_\varepsilon(x_S)$ with probability $1-\delta_\varepsilon(x_S)$ and setting every coordinate of $x_{-S}$ equal to $L_\varepsilon(x_S)$ otherwise. The marginal distribution of $x_S$ remains $Q$, while each outside coordinate has expectation at least $\expect_Q[\delta_\varepsilon L_\varepsilon]=2$; hence $\pdist_\varepsilon\in\bar H_S$. Nonnegativity and the selector property give}
\[
\begin{aligned}
\revise{1}
&\revise{\geq \expect_{\pdist_\varepsilon}\!\left[
F_S((\evalue_i)_{i\in S},(\evalue_j)_{j\notin S})\right]} \\
&\revise{\geq \expect_Q\!\left[
\left(1-\delta_\varepsilon((\evalue_i)_{i\in S})\right)
\left(\psi_S((\evalue_i)_{i\in S})-\varepsilon\right)\right]} \\
&\revise{\geq \expect_Q[\psi_S((\evalue_i)_{i\in S})]-2\varepsilon.}
\end{aligned}
\]
\revise{For $S=[m]$, the same conclusion follows directly because $\psi_S=F_S$. Letting $\varepsilon\downarrow0$ shows that $\psi_S((\evalue_i)_{i\in S})$ is an e-value for every arbitrarily dependent collection of e-values indexed by $S$. Thus $\psi_S$ is a Borel-measurable generalized e-merging function.}

\revise{By construction, $F_S(x)\leq\psi_S(x_S)$. Crucially, no monotonicity of $\psi_S$ is required: \citet[Theorem 1]{clerico_simple_geometric_2026} proves that every Borel-measurable generalized e-merging function is dominated by a weighted affine e-merging function. Hence there are weights $w_{i,S}\geq0$ with $\sum_{i\in S}w_{i,S}\leq1$; setting $c_S=1-\sum_{i\in S}w_{i,S}$ gives}
\[
\revise{F_S(x)\leq\psi_S(x_S)\leq\sum_{i\in S}w_{i,S}x_i+c_S.}
\]
\revise{The upper bound is a valid e-value under the corresponding partitioning null and therefore defines an e-collection whose e-Closure contains the original procedure pointwise. Admissibility rules out strict containment, so the e-Closure procedure derived from the e-collection represents the same procedure.}
\end{proof}

\revise{For FDR specifically, \citet[Theorem 5.1]{sun_admissibility_complete_2026} prove the stronger complete-class result that every simultaneous FDR-controlling procedure based on arbitrary input e-values is strongly dominated by an admissible weighted-mean closed eBH procedure. Their FDR-specific theorem complements the general-loss representation above.}

\section{\revise{Ordering symmetric local e-values}}\label{sec: ordered closure sets}

\revise{Call a loss function permutation-invariant if
\[
\F_{\pi(N)}(\pi(R))=\F_N(R)
\]
for every permutation $\pi$ of $[m]$. Equivalently, there is a function $\phi$ such that
\[
\F_N(R)=\phi\bigl(|N|,|R|,|N\cap R|\bigr).
\]
For the ordering result, we additionally assume that $\phi(n,r,k)$ is nondecreasing in $k$ for fixed $n,r$: with the numbers of nulls and rejections fixed, declaring more null hypotheses to be discoveries cannot reduce the loss.}

\revise{Separately, suppose the local e-values are generated symmetrically from base e-values,
\[
\evalue_S=g_{|S|}\bigl((\evalue_i)_{i\in S}\bigr),
\]
where each $g_s$ is symmetric and coordinatewise nondecreasing. The p-value analogue takes
\[
\evalue_S=g_{|S|}\bigl((\pvalue_i)_{i\in S}\bigr)
\]
with each $g_s$ symmetric and coordinatewise nonincreasing.}

\begin{lemma}\label{lem: ordered closure sets}
\revise{Let $\F$ satisfy the permutation-invariance and monotonicity conditions above. If a set $R$ of size $r$ belongs to $\Rcal_\alpha^{\ER_\F}(\mathbf E)$, then the set of the $r$ largest base e-values also belongs to $\Rcal_\alpha^{\ER_\F}(\mathbf E)$ under the e-value condition above. Under the analogous p-value condition, the same holds for the set of the $r$ smallest p-values. Ties may be resolved by any fixed deterministic rule.}
\end{lemma}

\begin{proof}
\revise{We give the e-value argument; the p-value argument is identical with the inequalities reversed. Suppose $i\in R$, $j\notin R$, and $\evalue_j\geq\evalue_i$, and define $R'=(R\setminus\{i\})\cup\{j\}$. We show that $R'\in\Rcal_\alpha^{\ER_\F}(\mathbf E)$. For a set $S$ containing both $i,j$ or neither, its e-Closure inequality is unchanged. If $i\in S$ and $j\notin S$, then $|S\cap R'|\leq|S\cap R|$, so monotonicity of $\phi$ gives $\F_S(R')\leq\F_S(R)$ and the inequality remains valid. Finally, if $j\in S$ and $i\notin S$, let $S'=(S\setminus\{j\})\cup\{i\}$. Permutation invariance gives
\[
\F_S(R')=\F_{S'}(R),
\qquad\text{while}\qquad
\evalue_S\geq\evalue_{S'}
\]
by symmetry and coordinatewise monotonicity of the local e-values. Hence the closure inequality for $(S,R')$ follows from that for $(S',R)$. Repeating such exchanges produces the set of the $r$ largest e-values without losing closure consistency.}
\end{proof}

\section{Related work}\label{sec:related-work}

\revise{We first note the connection of the e-Closure Principle to \citet{grunwald_neyman-pearson_e-values_2024}, and then briefly describe the two lines of research and prior art that are most relevant to our current work. 
}
Aside from these, there is a substantial literature using e-values and the eBH procedure for FDR control --- \citet{ramdas_hypothesis_testing_2025} performs a thorough overview of such developments.

\revise{\paragraph{Multiple testing as a generalized Neyman-Pearson problem.} To see that the general multiple testing problem in \Cref{sec: MT} can be viewed as a composite generalized Neyman-Pearson (GNP) problem \citep{grunwald_neyman-pearson_e-values_2024}, we observe that we can let $\F$ be our loss function, and $2^{[m]}$ is both the parameter and the action space for our GNP problem.
Then, we can view the e-Closure Procedure as an application of \citet[Proposition 3]{grunwald_neyman-pearson_e-values_2024}, and outputting the largest discovery set in $\Rcal_\alpha^{\ER_\F}(\mathbf{E})$ akin to being the maximally compatible decision rule for the multiple testing problem w.r.t.\ the e-collection derived in \Cref{thm: e-closure} for FDR control. However, admissibility results for maximally compatible decision rules from \citet{grunwald_neyman-pearson_e-values_2024} do not directly apply to our setting since their assumption on action ``richness'' (continuity of loss --- for FDP or other discrete multiple testing error metrics  --- w.r.t.\ the action space) is violated in multiple testing, as there is only a finite set of values that the FDP can take. Moreover, we are also concerned with simultaneity to enable flexibility against multiple utility functions rather than optimization w.r.t.\ a single predetermined utility function in the multiple testing setting. 
}

\paragraph{Closed testing and e-values.} There is a vast amount of literature for FWER control and simultaneous (over all potential discovery sets) FDP bounds using closed testing --- see overviews presented in \citet[Chapters 9 and 14]{cui_handbook_multiple_2021} and \citet{henning_closed_testing_2015}. We will point out three specific works that are particularly relevant to ours since they also utilize e-values in the context of closed testing.
The first is \citet{vovk_confidence_discoveries_2023a}, who formulate a method for FWER control via the arithmetic mean e-merging function, and consequently also simultaneous FDP bounds. \citet{fischer2024admissible} show that e-values and their validity under optional continuation are essential for constructing simultaneous FDP bounds in online multiple testing, where a stream of hypotheses is tested sequentially.
\citet{hartog_family-wise_error_2025} extend the benefit of using arithmetic mean e-merging functions to graphical approaches for FWER control \citep{bretz_graphical_approach_2009}, and derive computationally efficient algorithms for e-value versions of the Fallback \citep{wiens_fallback_procedure_2005} and \citet{holm_simple_sequentially_1979} procedures.
Our work proposes new procedures for the FDR metric
which has not been connected to intersection hypothesis testing or e-merging functions before, and we show existing procedures (such as e-Holm) for FWER or tail bounds on the FDP can be formulated using the generalized Closure Principle we develop in \Cref{sec: principle}.

\paragraph{Boosting the power of eBH.} Another pertinent line of work develops approaches to boosting the eBH procedure so that it makes more discoveries. This work generally falls into two areas: (1) boosting with no assumptions on the e-value distributions and (2) boosting e-values using explicit knowledge of the underlying marginal or joint distribution.

For the first area, we have already discussed the connection between our work and the stochastic rounding framework of \citet{xu_more_powerful_2023} in \Cref{sec: extensions}.The second area includes two general types of assumptions.
\begin{itemize}
    \item \emph{Marginal assumptions:} Along with introducing the eBH procedure, \citet{wang_false_discovery_2022} also outlined how one can boost e-values if one knows the explicit marginal distribution of each e-value, and we discuss a simple variant for boosting the $\textCeBH$ procedure in \Cref{sec: extensions}. \citet{blier-wong_improved_thresholds_2024} relaxed the necessary knowledge on marginal distributions for boosting by deriving boosting thresholds for certain nonparametric classes of e-values.
    \item \emph{Conditional assumptions:} \citet{lee_boosting_e-bh_2024a,lee_full-conformal_novelty_2025} improve the power of eBH in the setting where one knows the distribution of all $m$ e-values when conditioned on a sufficient statistic calculated from the data under a singular null (i.e., a specific hypothesis is null, and all other hypotheses may be either null or non-null). This can also be viewed as a framework for deriving compound e-values from e-values when their joint distribution is known. Boosting factors can be calculated for each e-value via exact knowledge of these conditional distributions.
\end{itemize}
\revise{Further, conformal e-values provide a concrete application of e-value-based FDR control. \citet{lee_full-conformal_novelty_2025} construct full-conformal e-values for novelty detection and combine them with conditionally calibrated eBH, while \citet{lee_selection_hierarchical_2025} construct subsampling and hierarchical conformal e-values for selection from hierarchical data and apply eBH to control the FDR. These constructions provide natural settings in which e-Closure may yield additional simultaneity guarantees.}
Developments in this second area are orthogonal to our contribution. These boosting methods all require additional knowledge about the underlying distributions. On the other hand, the $\textCeBH$ procedure requires no assumptions on the e-values, and can be applied with valid FDR control in all situations where the eBH procedure can be applied. Potentially, these boosting methods can be used in conjunction with $\textCeBH$ to further improve its power.

\section{More general multiple testing metrics}\label{sec: more MT}

\revise{Ratio metrics can be brought into the framework by a nonnegative reduction, after relaxing the convenient convention $\F_N(\emptyset)=0$. We say that a discovery set $\rejset$ controls the ratio metric $\textnormal{RER}_{\F, \G, \eta}$ at level $\alpha$ if}
\begin{align}
    \frac{\expect_\pdist(\F_{N_\pdist}(\rejset))}{\expect_\pdist(\G_{N_\pdist}(\rejset)) + \eta}
    \revise{\leq\alpha}
\end{align} \revise{for all $\pdist \in M$, where $\F_N$ and $\G_N$ are nonnegative, $0\leq\G_N(R)\leq G<\infty$, and $\eta\geq0$; when $\eta=0$, the claim is restricted to distributions for which the denominator is positive. Note that mFDR is a ratio metric: $\textnormal{mFDR}_\eta$ for $\eta>0$ \citep{foster_a-investing_procedure_2008} has $\F_N(R) = |N \cap R|$ and $\G_N(R)=|R|$ (with $G=m$). More recently, \citet{dandapanthula_multiple_testing_2025} proposed the notion of error-over-patience (EOP) risk metrics of the form}
\begin{align}
    \frac{\expect_\pdist(\F_{N_\pdist}(\rejset_\tau))}{\expect_\pdist(\tau)}
\end{align} for a user-chosen stopping time $\tau$ and sequence of discovery sets $(\rejset_t)_{t \in \naturals}$. \revise{On a bounded horizon $\tau\leq T$, one may let $\G_N(R,t)=t-1$, $G=T-1$, and $\eta=1$. Extending the reduction to an unbounded stopping time is outside the bounded formulation used here and is left to future work.}

\revise{We now give the nonnegative reduction.}
\begin{proposition}
\revise{For any random discovery set $\rejset$, define the nonnegative, $\alpha$-dependent loss}
\[
\revise{\F^{\mathrm{RER},\alpha}_N(R)
\coloneqq\frac{\F_N(R)+\alpha\{G-\G_N(R)\}}{G+\eta}.}
\]
\revise{Then}
\[
\revise{\frac{\expect_\pdist[\F_{N_\pdist}(\rejset)]}
{\expect_\pdist[\G_{N_\pdist}(\rejset)]+\eta}\leq\alpha
\quad\Longleftrightarrow\quad
\expect_\pdist[\F^{\mathrm{RER},\alpha}_{N_\pdist}(\rejset)]\leq\alpha.
}
\]
\end{proposition}
\revise{The proof is immediate after multiplying both inequalities by their nonnegative denominators. The reduction may give $\F^{\mathrm{RER},\alpha}_N(\emptyset)>0$ (as is the case for mFDR), so this application uses the e-Closure proof without the simplifying empty-set convention; none of the inequalities in \Cref{thm: e-closure} require that convention for validity.}
\revise{For ordinary pFDR, take $\F_N(R)=\FDP_N(R)$, $\G_N(R)=\ind\{R\neq\emptyset\}$, $G=1$, and $\eta=0$. The transformed loss is}
\[
\revise{\FDP_N(R)+\alpha\ind\{R=\emptyset\}.}
\]
\revise{Thus pFDR is covered by this explicitly stated extension whenever $\pdist(R\neq\emptyset)>0$; it is not a member of the $\eta>0$ ratio class.}

\section{$\textCeBH$ uniformly improves minimally adaptive eBH}\label{sec: minimally adaptive eBH}

The minimally adaptive eBH procedure of \citet{ignatiadis2024values} makes the following discoveries:
\begin{align}
    \mathbf{r}_{\text{ma-}\eBH} &\revise{\coloneqq \max\bigl(\{r \in [m]: \evalue_{(r)} \geq (m - 1)/(\alpha r)\}\cup\{0\}\bigr)},\\
    \rejset_\alpha^{\text{ma-}\eBH} &\coloneqq \begin{cases}
    \rejset^{(\mathbf{r}_{\text{ma-}\eBH})} & \text{if } \revise{m^{-1}\sum_{i \in [m]}\evalue_i \geq \alpha^{-1}} \\
    \emptyset & \text{otherwise}
    \end{cases}
\end{align}
Note that this uniformly dominates the eBH procedure since the rejection thresholds are lower, and the global null test of average e-values also dominates the Simes e-value that is implicitly used in the eBH procedure. However, we will show that this procedure is also dominated by $\CeBH$ procedure.

\begin{theorem} \label{thm: minimally adaptive eBH}
\revise{When $m>4$, $\rejcol^\CeBH_\alpha$ is a uniform improvement in both simultaneity and power over $\rejset_\alpha^{\textnormal{ma-}\eBH}$.}
\end{theorem}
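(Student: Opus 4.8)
The plan is to follow the template of the proof of \Cref{thm: eBH}: first establish the containment $\rejset_\alpha^{\textnormal{ma-}\eBH} \in \rejcol^\CeBH_\alpha$ almost surely, which supplies the baseline needed for both the simultaneity and the power comparisons in \Cref{def: simultaneity improvement}, and then exhibit a positive-probability event on which $\rejcol^\CeBH_\alpha$ contains a nonempty set that ma-eBH misses. The two features that distinguish ma-eBH from ordinary eBH --- its lower per-hypothesis threshold $(m-1)/(\alpha r)$ and the extra global gate $m^{-1}\sum_{i\in[m]}\evalue_i > \alpha^{-1}$ --- must both be used, and the crux is that they have to be invoked on disjoint ranges of the intersection index $S$.

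For the containment I would assume $\rejset_\alpha^{\textnormal{ma-}\eBH}$ is nonempty (otherwise containment is trivial, as $\emptyset\in\rejcol^\CeBH_\alpha$), write $R=\rejset^{(\mathbf r)}$ with $\mathbf r=\mathbf r_{\textnormal{ma-}\eBH}$, and verify the defining inequality $\frac1{|S|}\sum_{i\in S}\evalue_i\ge \FDP_S(R)/\alpha$ of \eqref{eq: proc mean e} for every $S$, splitting on whether $S=[m]$. For $S\neq[m]$, so that $|S|\le m-1$, I bound $\sum_{i\in S}\evalue_i\ge |R\cap S|\,\evalue_{(\mathbf r)}\ge |R\cap S|\,\frac{m-1}{\alpha\mathbf r}\ge |R\cap S|\,\frac{|S|}{\alpha\mathbf r}$, using that every index of $R$ has e-value at least $\evalue_{(\mathbf r)}\ge(m-1)/(\alpha\mathbf r)$ and that $|S|\le m-1$; dividing by $|S|$ yields exactly $\FDP_S(R)/\alpha$. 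For $S=[m]$ this chain is short by the factor $(m-1)/m$, so there I instead use $\FDP_{[m]}(R)=1$ together with the global gate $m^{-1}\sum_i\evalue_i>\alpha^{-1}$, which is precisely the hypothesis guaranteeing $\rejset_\alpha^{\textnormal{ma-}\eBH}\neq\emptyset$.

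For the strict improvement I would build an open event on which ma-eBH rejects nothing while $\CeBH$ rejects $[m]$. The construction places each order statistic just below its ma-eBH threshold, $\evalue_{(r)}=\frac{m-1}{\alpha r}-\epsilon_r$ for tiny $\epsilon_r>0$: then no rank $r$ satisfies $\evalue_{(r)}\ge(m-1)/(\alpha r)$, so $\mathbf r_{\textnormal{ma-}\eBH}$ fails to exist and ma-eBH returns $\emptyset$. The total mass is $\sum_r\evalue_{(r)}\approx\frac{(m-1)h_m}{\alpha}$, and whenever $(m-1)h_m>m$ (which $m>4$ guarantees) we get $\sum_i\evalue_i>m/\alpha$. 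I would then check the $\CeBH$ condition for $R=[m]$, namely $\sum_{i\in S}\evalue_i\ge |S|^2/(m\alpha)$ for all $S$; this is tightest on the $|S|$ smallest e-values and reduces to the elementary inequality $(m-1)(h_m-h_{m-|S|})\ge |S|^2/m$, equivalently $-\log(1-\beta)\ge\beta^2$ after writing $|S|=\beta m$, which holds for all $\beta\in(0,1]$. Since on this event ma-eBH yields $\emptyset$ while $\CeBH$ yields the nonempty set $[m]\supsetneq\emptyset$, the single event witnesses both the simultaneity and the power improvement.

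The main obstacle is the containment step at $S=[m]$: unlike in \Cref{thm: eBH}, the naive inequality chain loses exactly the factor $(m-1)/m$ and breaks down at the full intersection, so the argument closes only because ma-eBH carries the separate global-average condition. The proof must therefore be organized so that this condition is used precisely --- and only --- at $S=[m]$, with the threshold bound plus $|S|\le m-1$ disposing of every strict subset.
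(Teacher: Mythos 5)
Your proof is correct, and the containment half is essentially identical to the paper's argument: the paper also splits on $S=[m]$ versus $S\neq[m]$, handles the full intersection via the global-average gate (which must have passed if $\rejset_\alpha^{\textnormal{ma-}\eBH}\neq\emptyset$), and for proper subsets runs the same chain $\frac{1}{|S|}\sum_{i\in S}\evalue_i \geq \frac{|R\cap S|}{|S|}\cdot\frac{m-1}{\alpha|R|}\geq \FDP_S(R)/\alpha$ using $|S|\leq m-1$. You correctly identified the one place the argument differs from \Cref{thm: eBH}, namely that the lost factor $(m-1)/m$ at $S=[m]$ is rescued only by the gate. Where you diverge is the witness event: the paper uses the minimal configuration $\evalue_1=(m-2)/\alpha$, $\evalue_2=\evalue_3=1/\alpha$, $\evalue_4=\cdots=\evalue_m=0$, for which the global gate fails exactly (the average equals $\alpha^{-1}$ but is not strictly greater) while $\overline{\text{eBH}}$ rejects $\{1\}$; you instead place every order statistic just below its rank-wise threshold so that $\mathbf{r}_{\textnormal{ma-}\eBH}$ fails to exist, and then verify via $(m-1)(h_m-h_{m-|S|})\geq |S|^2/m$ that $\overline{\text{eBH}}$ rejects all of $[m]$. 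Both are valid; the paper's example is simpler to verify and pinpoints the gate as the failure mode, while yours defeats the rank thresholds instead and exhibits a more dramatic separation ($\emptyset$ versus $[m]$), in the spirit of \Cref{ex: eBH+}.
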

\begin{proof}
\revise{Abbreviate $\mathbf{R} = \rejset_\alpha^{\textnormal{ma-}\eBH}$. If $\mathbf{R} \neq \emptyset$, then when $S = [m]$, we have $\evalue_{[m]} \geq \alpha^{-1} = \FDP_{[m]}(\rejset)/\alpha$ by the global gate. If $S\cap\mathbf R=\emptyset$, the e-Closure inequality is immediate. For all remaining $S\subsetneq[m]$, we have}
\[
\frac{1}{|S|} \sum_{i \in S} \mathbf{e}_i
\geq
\frac{|\mathbf{R} \cap S|}{|S|} \cdot \min_{i \in  \mathbf{R} \cap S}\mathbf{e}_{i}
\geq
\frac{|\mathbf{R} \cap S|}{|S|} \cdot \min_{i \in  \mathbf{R}}\mathbf{e}_{i}
\geq
\frac{|\mathbf{R} \cap S|}{|S|} \cdot \frac{m - 1}{\alpha |\mathbf{R}|}
\geq
\frac{\FDP_S(\mathbf{R})}{\alpha}
\]
which shows that $\mathbf{R} \in \rejcol_\alpha^\CeBH$. The last inequality is because we are only considering $S$ s.t.\ $|S| \leq m - 1$. Showing this bound is trivially true if $\mathbf{R}=\emptyset$.

To show actual improvement, let $m>4$ and consider the event that $\mathbf{e}_1=(m-2)/\alpha$, $\mathbf{e}_2 = \evalue_3 = 1/\alpha$, $\mathbf{e}_4 = \ldots = \mathbf{e}_m = 0$. Then $\mathbf{R}_\alpha^{\textnormal{ma-}\eBH} = \emptyset$, but $1 \in  \rejset^\CeBH_\alpha$, because $|S|^{-1} \sum_{i \in S} \mathbf{e}_i \geq \alpha^{-1}$ whenever $1 \in S$.
\end{proof}
\ifarver{}
{
\section{Concrete examples of interest}\label{sec: power examples}
In this section, we collect examples in which the methods in our paper demonstrate improvements or other notable behavior.

\subsection{Power improvements in p-value based methods}\label{sec: p-value power improvements}
We illustrate two examples from \Cref{sec: p-combining} in which the e-Closure based procedures improve upon their original counterparts. \Cref{fig:BY} demonstrates the improvement in the rejection thresholds of $\cBY$ over BY as the number of discoveries increases. Similarly, \Cref{fig:Su} illustrates the corresponding improvement of $\cSu$ over Su.
\begin{figure}[!ht]
\centering
\begin{tikzpicture}[scale=.7]
\begin{axis}[
	xmin = 1,
	xmax = 20,
	xtick = {5,10,15,20},
	ymin = 0,
	ymax = 0.05,
	ytick = {0,0.01,0.02,0.03,0.04,0.05},
	yticklabels = {0,0.01,0.02,0.03,0.04,0.05},
	 ytick scale label code/.code={},
	ylabel=$p$-value,
	xlabel=rank,
        height=10cm,
	width=10cm,
		legend style={legend columns=1, at={(0.275,0.975)}}
]
 \addplot[color=blue, very thick,  mark size=2pt, mark=*, fill=blue, fill opacity=0.5, draw opacity=0.75] coordinates {
( 1,7e-04 ) ( 2,0.0014 ) ( 3,0.0021 ) ( 4,0.0028 ) ( 5,0.0035 ) ( 6,0.0042 ) ( 7,0.0049 ) ( 8,0.0056 ) ( 9,0.0063 ) ( 10,0.0069 ) ( 11,0.0076 ) ( 12,0.0083 ) ( 13,0.009 ) ( 14,0.0097 ) ( 15,0.0104 ) ( 16,0.0111 ) ( 17,0.0118 ) ( 18,0.0125 ) ( 19,0.0132 ) ( 20,0.0139 ) ( 20,1 ) ( 19,1 ) ( 18,1 ) ( 17,1 ) ( 16,1 ) ( 15,1 ) ( 14,1 ) ( 13,1 ) ( 12,1 ) ( 11,1 ) ( 10,1 ) ( 9,1 ) ( 8,1 ) ( 7,0.0077 ) ( 6,0.0063 ) ( 5,0.0053 ) ( 4,0.0044 ) ( 3,0.0044 ) ( 2,0.0038 ) ( 1,0.003 )
};
 \addlegendentry{$k=7$}
\addplot[color=red, very thick,  mark size=2pt, mark=*, fill=red, fill opacity=0.5, draw opacity=0.75] coordinates {
( 1,7e-04 ) ( 2,0.0014 ) ( 3,0.0021 ) ( 4,0.0028 ) ( 5,0.0035 ) ( 6,0.0042 ) ( 7,0.0049 ) ( 8,0.0056 ) ( 9,0.0063 ) ( 10,0.0069 ) ( 11,0.0076 ) ( 12,0.0083 ) ( 13,0.009 ) ( 14,0.0097 ) ( 15,0.0104 ) ( 16,0.0111 ) ( 17,0.0118 ) ( 18,0.0125 ) ( 19,0.0132 ) ( 20,0.0139 ) ( 20,1 ) ( 19,0.0333 ) ( 18,0.0273 ) ( 17,0.024 ) ( 16,0.0208 ) ( 15,0.0129 ) ( 14,0.0129 ) ( 13,0.0129 ) ( 12,0.0129 ) ( 11,0.0129 ) ( 10,0.0129 ) ( 9,0.0129 ) ( 8,0.0129 ) ( 7,0.0129 ) ( 6,0.0129 ) ( 5,0.0123 ) ( 4,0.0057 ) ( 3,0.0057 ) ( 2,0.0057 ) ( 1,0.0057 )
};
 \addlegendentry{$k=19$}
 \addplot[color=black, very thick,  mark size=2pt, mark=*] coordinates {
( 1,7e-04 ) ( 2,0.0014 ) ( 3,0.0021 ) ( 4,0.0028 ) ( 5,0.0035 ) ( 6,0.0042 ) ( 7,0.0049 ) ( 8,0.0056 ) ( 9,0.0063 ) ( 10,0.0069 ) ( 11,0.0076 ) ( 12,0.0083 ) ( 13,0.009 ) ( 14,0.0097 ) ( 15,0.0104 ) ( 16,0.0111 ) ( 17,0.0118 ) ( 18,0.0125 ) ( 19,0.0132 ) ( 20,0.0139 )
};
\end{axis}
\end{tikzpicture}~
\begin{tikzpicture}[scale=.7, define rgb/.code={\definecolor{mycolor}{RGB}{#1}},
                    rgb color/.style={define rgb={#1},mycolor}]
\begin{axis}[
xmin = 1,
	xmax = 20,
	xtick = {5,10,15,20},
	ymin = 0,
	ymax = 0.05,
	ytick = {0,0.01,0.02,0.03,0.04,0.05},
	yticklabels = {0,0.01,0.02,0.03,0.04,0.05},
	 ytick scale label code/.code={},
	ylabel=,
	xlabel=rank,
        height=10cm,
	width=10cm
]
\addplot[rgb color={255, 0, 0}, very thick,  mark size=2pt, mark=., fill opacity=0.5, draw opacity=0.75] coordinates {
( 1,7e-04 ) ( 2,0.0014 ) ( 3,0.0021 ) ( 4,0.0028 ) ( 5,0.0035 ) ( 6,0.0042 ) ( 7,0.0049 ) ( 8,0.0056 ) ( 9,0.0063 ) ( 10,0.0069 ) ( 11,0.0076 ) ( 12,0.0083 ) ( 13,0.009 ) ( 14,0.0097 ) ( 15,0.0104 ) ( 16,0.0111 ) ( 17,0.0118 ) ( 18,0.0125 ) ( 19,0.0132 ) ( 20,0.0139 ) ( 20,1 ) ( 19,0.0333 ) ( 18,0.0273 ) ( 17,0.024 ) ( 16,0.0208 ) ( 15,0.0129 ) ( 14,0.0129 ) ( 13,0.0129 ) ( 12,0.0129 ) ( 11,0.0129 ) ( 10,0.0129 ) ( 9,0.0129 ) ( 8,0.0129 ) ( 7,0.0129 ) ( 6,0.0129 ) ( 5,0.0123 ) ( 4,0.0057 ) ( 3,0.0057 ) ( 2,0.0057 ) ( 1,0.0057 )
};
\addplot[rgb color={255, 195, 0}, very thick,  mark size=2pt, mark=., fill opacity=0.5, draw opacity=0.75] coordinates {
( 1,7e-04 ) ( 2,0.0014 ) ( 3,0.0021 ) ( 4,0.0028 ) ( 5,0.0035 ) ( 6,0.0042 ) ( 7,0.0049 ) ( 8,0.0056 ) ( 9,0.0063 ) ( 10,0.0069 ) ( 11,0.0076 ) ( 12,0.0083 ) ( 13,0.009 ) ( 14,0.0097 ) ( 15,0.0104 ) ( 16,0.0111 ) ( 17,0.0118 ) ( 18,0.0125 ) ( 19,0.0132 ) ( 20,0.0139 ) ( 20,1 ) ( 19,1 ) ( 18,1 ) ( 17,1 ) ( 16,1 ) ( 15,0.0204 ) ( 14,0.0193 ) ( 13,0.0115 ) ( 12,0.0115 ) ( 11,0.0115 ) ( 10,0.0115 ) ( 9,0.0115 ) ( 8,0.0115 ) ( 7,0.0115 ) ( 6,0.0115 ) ( 5,0.0078 ) ( 4,0.0067 ) ( 3,0.0065 ) ( 2,0.0062 ) ( 1,0.0051 )
};
\addplot[rgb color={255, 255, 0}, very thick,  mark size=2pt, mark=., fill opacity=0.5, draw opacity=0.75] coordinates {
( 1,7e-04 ) ( 2,0.0014 ) ( 3,0.0021 ) ( 4,0.0028 ) ( 5,0.0035 ) ( 6,0.0042 ) ( 7,0.0049 ) ( 8,0.0056 ) ( 9,0.0063 ) ( 10,0.0069 ) ( 11,0.0076 ) ( 12,0.0083 ) ( 13,0.009 ) ( 14,0.0097 ) ( 15,0.0104 ) ( 16,0.0111 ) ( 17,0.0118 ) ( 18,0.0125 ) ( 19,0.0132 ) ( 20,0.0139 ) ( 20,1 ) ( 19,1 ) ( 18,1 ) ( 17,1 ) ( 16,1 ) ( 15,1 ) ( 14,1 ) ( 13,0.0184 ) ( 12,0.0128 ) ( 11,0.0128 ) ( 10,0.0128 ) ( 9,0.0128 ) ( 8,0.0124 ) ( 7,0.009 ) ( 6,0.0084 ) ( 5,0.0068 ) ( 4,0.0068 ) ( 3,0.0052 ) ( 2,0.0047 ) ( 1,0.0047 )
};
\addplot[rgb color={170, 213, 0}, very thick,  mark size=2pt, mark=., fill opacity=0.5, draw opacity=0.75] coordinates {
( 1,7e-04 ) ( 2,0.0014 ) ( 3,0.0021 ) ( 4,0.0028 ) ( 5,0.0035 ) ( 6,0.0042 ) ( 7,0.0049 ) ( 8,0.0056 ) ( 9,0.0063 ) ( 10,0.0069 ) ( 11,0.0076 ) ( 12,0.0083 ) ( 13,0.009 ) ( 14,0.0097 ) ( 15,0.0104 ) ( 16,0.0111 ) ( 17,0.0118 ) ( 18,0.0125 ) ( 19,0.0132 ) ( 20,0.0139 ) ( 20,1 ) ( 19,1 ) ( 18,1 ) ( 17,1 ) ( 16,1 ) ( 15,1 ) ( 14,1 ) ( 13,1 ) ( 12,0.0177 ) ( 11,0.0102 ) ( 10,0.0102 ) ( 9,0.0102 ) ( 8,0.0102 ) ( 7,0.0102 ) ( 6,0.0078 ) ( 5,0.0067 ) ( 4,0.0067 ) ( 3,0.0063 ) ( 2,0.0055 ) ( 1,0.0044 )
};
\addplot[rgb color={85, 170, 0}, very thick,  mark size=2pt, mark=., fill opacity=0.5, draw opacity=0.75] coordinates {
( 1,7e-04 ) ( 2,0.0014 ) ( 3,0.0021 ) ( 4,0.0028 ) ( 5,0.0035 ) ( 6,0.0042 ) ( 7,0.0049 ) ( 8,0.0056 ) ( 9,0.0063 ) ( 10,0.0069 ) ( 11,0.0076 ) ( 12,0.0083 ) ( 13,0.009 ) ( 14,0.0097 ) ( 15,0.0104 ) ( 16,0.0111 ) ( 17,0.0118 ) ( 18,0.0125 ) ( 19,0.0132 ) ( 20,0.0139 ) ( 20,1 ) ( 19,1 ) ( 18,1 ) ( 17,1 ) ( 16,1 ) ( 15,1 ) ( 14,1 ) ( 13,1 ) ( 12,1 ) ( 11,0.0171 ) ( 10,0.0083 ) ( 9,0.0083 ) ( 8,0.0083 ) ( 7,0.0083 ) ( 6,0.0083 ) ( 5,0.0083 ) ( 4,0.0067 ) ( 3,0.0049 ) ( 2,0.0045 ) ( 1,0.0045 )
};
\addplot[rgb color={0, 128, 0}, very thick,  mark size=2pt, mark=., fill opacity=0.5, draw opacity=0.75] coordinates {
( 1,7e-04 ) ( 2,0.0014 ) ( 3,0.0021 ) ( 4,0.0028 ) ( 5,0.0035 ) ( 6,0.0042 ) ( 7,0.0049 ) ( 8,0.0056 ) ( 9,0.0063 ) ( 10,0.0069 ) ( 11,0.0076 ) ( 12,0.0083 ) ( 13,0.009 ) ( 14,0.0097 ) ( 15,0.0104 ) ( 16,0.0111 ) ( 17,0.0118 ) ( 18,0.0125 ) ( 19,0.0132 ) ( 20,0.0139 ) ( 20,1 ) ( 19,1 ) ( 18,1 ) ( 17,1 ) ( 16,1 ) ( 15,1 ) ( 14,1 ) ( 13,1 ) ( 12,1 ) ( 11,1 ) ( 10,0.0151 ) ( 9,0.0115 ) ( 8,0.0098 ) ( 7,0.0087 ) ( 6,0.0072 ) ( 5,0.0045 ) ( 4,0.0045 ) ( 3,0.0045 ) ( 2,0.0045 ) ( 1,0.0045 )
};
\addplot[rgb color={0, 85, 85}, very thick,  mark size=2pt, mark=., fill opacity=0.5, draw opacity=0.75] coordinates {
( 1,7e-04 ) ( 2,0.0014 ) ( 3,0.0021 ) ( 4,0.0028 ) ( 5,0.0035 ) ( 6,0.0042 ) ( 7,0.0049 ) ( 8,0.0056 ) ( 9,0.0063 ) ( 10,0.0069 ) ( 11,0.0076 ) ( 12,0.0083 ) ( 13,0.009 ) ( 14,0.0097 ) ( 15,0.0104 ) ( 16,0.0111 ) ( 17,0.0118 ) ( 18,0.0125 ) ( 19,0.0132 ) ( 20,0.0139 ) ( 20,1 ) ( 19,1 ) ( 18,1 ) ( 17,1 ) ( 16,1 ) ( 15,1 ) ( 14,1 ) ( 13,1 ) ( 12,1 ) ( 11,1 ) ( 10,1 ) ( 9,0.0121 ) ( 8,0.0099 ) ( 7,0.0073 ) ( 6,0.007 ) ( 5,0.0065 ) ( 4,0.005 ) ( 3,0.0046 ) ( 2,0.0046 ) ( 1,0.0038 )
};
\addplot[rgb color={0, 43, 170}, very thick,  mark size=2pt, mark=., fill opacity=0.5, draw opacity=0.75] coordinates {
( 1,7e-04 ) ( 2,0.0014 ) ( 3,0.0021 ) ( 4,0.0028 ) ( 5,0.0035 ) ( 6,0.0042 ) ( 7,0.0049 ) ( 8,0.0056 ) ( 9,0.0063 ) ( 10,0.0069 ) ( 11,0.0076 ) ( 12,0.0083 ) ( 13,0.009 ) ( 14,0.0097 ) ( 15,0.0104 ) ( 16,0.0111 ) ( 17,0.0118 ) ( 18,0.0125 ) ( 19,0.0132 ) ( 20,0.0139 ) ( 20,1 ) ( 19,1 ) ( 18,1 ) ( 17,1 ) ( 16,1 ) ( 15,1 ) ( 14,1 ) ( 13,1 ) ( 12,1 ) ( 11,1 ) ( 10,1 ) ( 9,1 ) ( 8,0.0097 ) ( 7,0.0079 ) ( 6,0.0067 ) ( 5,0.0053 ) ( 4,0.0049 ) ( 3,0.0045 ) ( 2,0.0037 ) ( 1,0.0037 )
};
\addplot[rgb color={0, 0, 255}, very thick,  mark size=2pt, mark=., fill opacity=0.5, draw opacity=0.75] coordinates {
( 1,7e-04 ) ( 2,0.0014 ) ( 3,0.0021 ) ( 4,0.0028 ) ( 5,0.0035 ) ( 6,0.0042 ) ( 7,0.0049 ) ( 8,0.0056 ) ( 9,0.0063 ) ( 10,0.0069 ) ( 11,0.0076 ) ( 12,0.0083 ) ( 13,0.009 ) ( 14,0.0097 ) ( 15,0.0104 ) ( 16,0.0111 ) ( 17,0.0118 ) ( 18,0.0125 ) ( 19,0.0132 ) ( 20,0.0139 ) ( 20,1 ) ( 19,1 ) ( 18,1 ) ( 17,1 ) ( 16,1 ) ( 15,1 ) ( 14,1 ) ( 13,1 ) ( 12,1 ) ( 11,1 ) ( 10,1 ) ( 9,1 ) ( 8,1 ) ( 7,0.0077 ) ( 6,0.0063 ) ( 5,0.0053 ) ( 4,0.0044 ) ( 3,0.0044 ) ( 2,0.0038 ) ( 1,0.003 )
};
\addplot[rgb color={25, 0, 213}, very thick,  mark size=2pt, mark=., fill opacity=0.5, draw opacity=0.75] coordinates {
( 1,7e-04 ) ( 2,0.0014 ) ( 3,0.0021 ) ( 4,0.0028 ) ( 5,0.0035 ) ( 6,0.0042 ) ( 7,0.0049 ) ( 8,0.0056 ) ( 9,0.0063 ) ( 10,0.0069 ) ( 11,0.0076 ) ( 12,0.0083 ) ( 13,0.009 ) ( 14,0.0097 ) ( 15,0.0104 ) ( 16,0.0111 ) ( 17,0.0118 ) ( 18,0.0125 ) ( 19,0.0132 ) ( 20,0.0139 ) ( 20,1 ) ( 19,1 ) ( 18,1 ) ( 17,1 ) ( 16,1 ) ( 15,1 ) ( 14,1 ) ( 13,1 ) ( 12,1 ) ( 11,1 ) ( 10,1 ) ( 9,1 ) ( 8,1 ) ( 7,1 ) ( 6,0.006 ) ( 5,0.0049 ) ( 4,0.0041 ) ( 3,0.0034 ) ( 2,0.0034 ) ( 1,0.0034 )
};
\addplot[rgb color={50, 0, 172}, very thick,  mark size=2pt, mark=., fill opacity=0.5, draw opacity=0.75] coordinates {
( 1,7e-04 ) ( 2,0.0014 ) ( 3,0.0021 ) ( 4,0.0028 ) ( 5,0.0035 ) ( 6,0.0042 ) ( 7,0.0049 ) ( 8,0.0056 ) ( 9,0.0063 ) ( 10,0.0069 ) ( 11,0.0076 ) ( 12,0.0083 ) ( 13,0.009 ) ( 14,0.0097 ) ( 15,0.0104 ) ( 16,0.0111 ) ( 17,0.0118 ) ( 18,0.0125 ) ( 19,0.0132 ) ( 20,0.0139 ) ( 20,1 ) ( 19,1 ) ( 18,1 ) ( 17,1 ) ( 16,1 ) ( 15,1 ) ( 14,1 ) ( 13,1 ) ( 12,1 ) ( 11,1 ) ( 10,1 ) ( 9,1 ) ( 8,1 ) ( 7,1 ) ( 6,1 ) ( 5,0.0046 ) ( 4,0.0037 ) ( 3,0.0034 ) ( 2,0.0029 ) ( 1,0.0024 )
};
\addplot[rgb color={75, 0, 130}, very thick,  mark size=2pt, mark=., fill opacity=0.5, draw opacity=0.75] coordinates {
( 1,7e-04 ) ( 2,0.0014 ) ( 3,0.0021 ) ( 4,0.0028 ) ( 5,0.0035 ) ( 6,0.0042 ) ( 7,0.0049 ) ( 8,0.0056 ) ( 9,0.0063 ) ( 10,0.0069 ) ( 11,0.0076 ) ( 12,0.0083 ) ( 13,0.009 ) ( 14,0.0097 ) ( 15,0.0104 ) ( 16,0.0111 ) ( 17,0.0118 ) ( 18,0.0125 ) ( 19,0.0132 ) ( 20,0.0139 ) ( 20,1 ) ( 19,1 ) ( 18,1 ) ( 17,1 ) ( 16,1 ) ( 15,1 ) ( 14,1 ) ( 13,1 ) ( 12,1 ) ( 11,1 ) ( 10,1 ) ( 9,1 ) ( 8,1 ) ( 7,1 ) ( 6,1 ) ( 5,1 ) ( 4,0.0034 ) ( 3,0.0026 ) ( 2,0.0025 ) ( 1,0.0023 )
};
\addplot[rgb color={129, 43, 166}, very thick,  mark size=2pt, mark=., fill opacity=0.5, draw opacity=0.75] coordinates {
( 1,7e-04 ) ( 2,0.0014 ) ( 3,0.0021 ) ( 4,0.0028 ) ( 5,0.0035 ) ( 6,0.0042 ) ( 7,0.0049 ) ( 8,0.0056 ) ( 9,0.0063 ) ( 10,0.0069 ) ( 11,0.0076 ) ( 12,0.0083 ) ( 13,0.009 ) ( 14,0.0097 ) ( 15,0.0104 ) ( 16,0.0111 ) ( 17,0.0118 ) ( 18,0.0125 ) ( 19,0.0132 ) ( 20,0.0139 ) ( 20,1 ) ( 19,1 ) ( 18,1 ) ( 17,1 ) ( 16,1 ) ( 15,1 ) ( 14,1 ) ( 13,1 ) ( 12,1 ) ( 11,1 ) ( 10,1 ) ( 9,1 ) ( 8,1 ) ( 7,1 ) ( 6,1 ) ( 5,1 ) ( 4,1 ) ( 3,0.0024 ) ( 2,0.0018 ) ( 1,0.0017 )
};
\addplot[rgb color={184, 87, 202}, very thick,  mark size=2pt, mark=., fill opacity=0.5, draw opacity=0.75] coordinates {
( 1,7e-04 ) ( 2,0.0014 ) ( 3,0.0021 ) ( 4,0.0028 ) ( 5,0.0035 ) ( 6,0.0042 ) ( 7,0.0049 ) ( 8,0.0056 ) ( 9,0.0063 ) ( 10,0.0069 ) ( 11,0.0076 ) ( 12,0.0083 ) ( 13,0.009 ) ( 14,0.0097 ) ( 15,0.0104 ) ( 16,0.0111 ) ( 17,0.0118 ) ( 18,0.0125 ) ( 19,0.0132 ) ( 20,0.0139 ) ( 20,1 ) ( 19,1 ) ( 18,1 ) ( 17,1 ) ( 16,1 ) ( 15,1 ) ( 14,1 ) ( 13,1 ) ( 12,1 ) ( 11,1 ) ( 10,1 ) ( 9,1 ) ( 8,1 ) ( 7,1 ) ( 6,1 ) ( 5,1 ) ( 4,1 ) ( 3,1 ) ( 2,0.0015 ) ( 1,0.001 )
};
\addplot[color=black, very thick,  mark size=2pt, mark=.] coordinates {
( 1,7e-04 ) ( 2,0.0014 ) ( 3,0.0021 ) ( 4,0.0028 ) ( 5,0.0035 ) ( 6,0.0042 ) ( 7,0.0049 ) ( 8,0.0056 ) ( 9,0.0063 ) ( 10,0.0069 ) ( 11,0.0076 ) ( 12,0.0083 ) ( 13,0.009 ) ( 14,0.0097 ) ( 15,0.0104 ) ( 16,0.0111 ) ( 17,0.0118 ) ( 18,0.0125 ) ( 19,0.0132 ) ( 20,0.0139 )
};
\end{axis}
\end{tikzpicture}
 \caption{\revise{Regions of ordered p-value sequences, for $m=20$ and $\alpha=0.05$.
The $x$-axis gives the rank and the $y$-axis gives the ordered p-values $\pvalue_{(1)}\leq \cdots \leq \pvalue_{(m)}$.
The black curve is the BY rejection boundary, i.e., an ordered sequence that lies entirely above the black curve leads BY to reject no hypotheses.
The colored regions show ordered sequences for which $\textcBY$ rejects at least $k=2, 3, \ldots, 13,15,19$ hypotheses while BY rejects none.
These regions were constructed by greedily choosing the largest $\pvalue_{(m)}, \ldots, \pvalue_{(1)}$, in that order, to achieve the minimal desired local e-values.}} \label{fig:BY}
\end{figure}
 \begin{figure}[!ht]
\centering
\begin{tikzpicture}[scale=.7]
\begin{axis}[
	xmin = 1,
	xmax = 20,
	xtick = {5,10,15,20},
	ymin = 0,
	ymax = 0.1,
	ytick = {0,0.02,0.04,0.06,0.08,0.1},
	yticklabels = {0,0.02,0.04,0.06,0.08,0.1},
	 ytick scale label code/.code={},
	ylabel=$p$-value,
	xlabel=rank,
        height=10cm,
	width=10cm,
		legend style={legend columns=1, at={(0.275,0.975)}}
]
 \addplot[color=blue, very thick,  mark size=2pt, mark=*, fill=blue, fill opacity=0.5, draw opacity=0.75] coordinates {
( 1,0 ) ( 2,0 ) ( 3,0 ) ( 4,0 ) ( 5,0 ) ( 6,0 ) ( 7,0.003 ) ( 8,0.0035 ) ( 9,0.0039 ) ( 10,0.0044 ) ( 11,0.0048 ) ( 12,0.0052 ) ( 13,0.0057 ) ( 14,0.0061 ) ( 15,0.0065 ) ( 16,0.007 ) ( 17,0.0074 ) ( 18,0.0078 ) ( 19,0.0083 ) ( 20,0.0087 ) ( 20,1 ) ( 19,1 ) ( 18,1 ) ( 17,1 ) ( 16,1 ) ( 15,1 ) ( 14,1 ) ( 13,1 ) ( 12,1 ) ( 11,1 ) ( 10,1 ) ( 9,1 ) ( 8,1 ) ( 7,0.0044 ) ( 6,0.002 ) ( 5,0.002 ) ( 4,0.002 ) ( 3,0.002 ) ( 2,0.002 ) ( 1,0.002 )
};
 \addlegendentry{$k=7$}
 \addplot[color=red, very thick,  mark size=2pt, mark=*, fill=red, fill opacity=0.5, draw opacity=0.75] coordinates {
( 1,0 ) ( 2,0 ) ( 3,0 ) ( 4,0 ) ( 5,0 ) ( 6,0 ) ( 7,0 ) ( 8,0 ) ( 9,0 ) ( 10,0 ) ( 11,0 ) ( 12,0 ) ( 13,0 ) ( 14,0 ) ( 15,0 ) ( 16,0 ) ( 17,0 ) ( 18,0 ) ( 19,0.0083 ) ( 20,0.0087 ) ( 20,1 ) ( 19,0.0827 ) ( 18,0.0276 ) ( 17,0.0276 ) ( 16,0.0083 ) ( 15,0.0083 ) ( 14,0.0083 ) ( 13,0.0083 ) ( 12,0.0083 ) ( 11,0.0083 ) ( 10,0.0083 ) ( 9,0.0083 ) ( 8,0.0083 ) ( 7,0.0083 ) ( 6,0.0083 ) ( 5,0.0083 ) ( 4,6e-04 ) ( 3,6e-04 ) ( 2,6e-04 ) ( 1,6e-04 )
};
\addlegendentry{$k=19$}
\addplot[black, dashed, very thick, domain=0:20] {0.05*x/20};
\addlegendentry{BH}
\end{axis}
\end{tikzpicture}~
\begin{tikzpicture}[scale=.7, define rgb/.code={\definecolor{mycolor}{RGB}{#1}},
                    rgb color/.style={define rgb={#1},mycolor}]
\begin{axis}[
xmin = 1,
	xmax = 20,
	xtick = {5,10,15,20},
	ymin = 0,
	ymax = 0.1,
	ytick = {0,0.02,0.04,0.06,0.08,0.1},
	yticklabels = {0,0.02,0.04,0.06,0.08,0.1},
	 ytick scale label code/.code={},
	ylabel=,
	xlabel=rank,
        height=10cm,
	width=10cm
]
\addplot[rgb color={129, 43, 166}, very thick,  mark size=2pt, mark=., fill opacity=0.5, draw opacity=0.75] coordinates {
( 1,0 ) ( 2,0 ) ( 3,0.0013 ) ( 4,0.0017 ) ( 5,0.0022 ) ( 6,0.0026 ) ( 7,0.003 ) ( 8,0.0035 ) ( 9,0.0039 ) ( 10,0.0044 ) ( 11,0.0048 ) ( 12,0.0052 ) ( 13,0.0057 ) ( 14,0.0061 ) ( 15,0.0065 ) ( 16,0.007 ) ( 17,0.0074 ) ( 18,0.0078 ) ( 19,0.0083 ) ( 20,0.0087 ) ( 20,1 ) ( 19,1 ) ( 18,1 ) ( 17,1 ) ( 16,1 ) ( 15,1 ) ( 14,1 ) ( 13,1 ) ( 12,1 ) ( 11,1 ) ( 10,1 ) ( 9,1 ) ( 8,1 ) ( 7,1 ) ( 6,1 ) ( 5,1 ) ( 4,1 ) ( 3,0.0015 ) ( 2,7e-04 ) ( 1,7e-04 )
};
\addplot[rgb color={50, 0, 172}, very thick,  mark size=2pt, mark=., fill opacity=0.5, draw opacity=0.75] coordinates {
( 1,0 ) ( 2,0 ) ( 3,0 ) ( 4,0 ) ( 5,0.0022 ) ( 6,0.0026 ) ( 7,0.003 ) ( 8,0.0035 ) ( 9,0.0039 ) ( 10,0.0044 ) ( 11,0.0048 ) ( 12,0.0052 ) ( 13,0.0057 ) ( 14,0.0061 ) ( 15,0.0065 ) ( 16,0.007 ) ( 17,0.0074 ) ( 18,0.0078 ) ( 19,0.0083 ) ( 20,0.0087 ) ( 20,1 ) ( 19,1 ) ( 18,1 ) ( 17,1 ) ( 16,1 ) ( 15,1 ) ( 14,1 ) ( 13,1 ) ( 12,1 ) ( 11,1 ) ( 10,1 ) ( 9,1 ) ( 8,1 ) ( 7,1 ) ( 6,1 ) ( 5,0.0027 ) ( 4,0.0013 ) ( 3,0.0013 ) ( 2,0.0013 ) ( 1,0.0013 )
};
\addplot[rgb color={0, 0, 255}, very thick,  mark size=2pt, mark=., fill opacity=0.5, draw opacity=0.75] coordinates {
( 1,0 ) ( 2,0 ) ( 3,0 ) ( 4,0 ) ( 5,0 ) ( 6,0 ) ( 7,0.003 ) ( 8,0.0035 ) ( 9,0.0039 ) ( 10,0.0044 ) ( 11,0.0048 ) ( 12,0.0052 ) ( 13,0.0057 ) ( 14,0.0061 ) ( 15,0.0065 ) ( 16,0.007 ) ( 17,0.0074 ) ( 18,0.0078 ) ( 19,0.0083 ) ( 20,0.0087 ) ( 20,1 ) ( 19,1 ) ( 18,1 ) ( 17,1 ) ( 16,1 ) ( 15,1 ) ( 14,1 ) ( 13,1 ) ( 12,1 ) ( 11,1 ) ( 10,1 ) ( 9,1 ) ( 8,1 ) ( 7,0.0044 ) ( 6,0.002 ) ( 5,0.002 ) ( 4,0.002 ) ( 3,0.002 ) ( 2,0.002 ) ( 1,0.002 )
};
\addplot[rgb color={0, 85, 85}, very thick,  mark size=2pt, mark=., fill opacity=0.5, draw opacity=0.75] coordinates {
( 1,0 ) ( 2,0 ) ( 3,0 ) ( 4,0 ) ( 5,0 ) ( 6,0 ) ( 7,0 ) ( 8,0 ) ( 9,0.0039 ) ( 10,0.0044 ) ( 11,0.0048 ) ( 12,0.0052 ) ( 13,0.0057 ) ( 14,0.0061 ) ( 15,0.0065 ) ( 16,0.007 ) ( 17,0.0074 ) ( 18,0.0078 ) ( 19,0.0083 ) ( 20,0.0087 ) ( 20,1 ) ( 19,1 ) ( 18,1 ) ( 17,1 ) ( 16,1 ) ( 15,1 ) ( 14,1 ) ( 13,1 ) ( 12,1 ) ( 11,1 ) ( 10,1 ) ( 9,0.0065 ) ( 8,0.003 ) ( 7,0.003 ) ( 6,0.003 ) ( 5,0.003 ) ( 4,0.003 ) ( 3,0.003 ) ( 2,0.003 ) ( 1,0.003 )
};
\addplot[rgb color={85, 170, 0}, very thick,  mark size=2pt, mark=., fill opacity=0.5, draw opacity=0.75] coordinates {
( 1,0 ) ( 2,0 ) ( 3,0 ) ( 4,0 ) ( 5,0 ) ( 6,0 ) ( 7,0 ) ( 8,0 ) ( 9,0 ) ( 10,0 ) ( 11,0.0048 ) ( 12,0.0052 ) ( 13,0.0057 ) ( 14,0.0061 ) ( 15,0.0065 ) ( 16,0.007 ) ( 17,0.0074 ) ( 18,0.0078 ) ( 19,0.0083 ) ( 20,0.0087 ) ( 20,1 ) ( 19,1 ) ( 18,1 ) ( 17,1 ) ( 16,1 ) ( 15,1 ) ( 14,1 ) ( 13,1 ) ( 12,1 ) ( 11,0.0096 ) ( 10,0.0044 ) ( 9,0.0044 ) ( 8,0.0044 ) ( 7,0.0044 ) ( 6,0.0044 ) ( 5,0.0044 ) ( 4,0.0044 ) ( 3,0.0044 ) ( 2,0.0044 ) ( 1,0.0044 )
};
\addplot[rgb color={255, 255, 0}, very thick,  mark size=2pt, mark=., fill opacity=0.5, draw opacity=0.75] coordinates {
( 1,0 ) ( 2,0 ) ( 3,0 ) ( 4,0 ) ( 5,0 ) ( 6,0 ) ( 7,0 ) ( 8,0 ) ( 9,0 ) ( 10,0 ) ( 11,0 ) ( 12,0 ) ( 13,0.0057 ) ( 14,0.0061 ) ( 15,0.0065 ) ( 16,0.007 ) ( 17,0.0074 ) ( 18,0.0078 ) ( 19,0.0083 ) ( 20,0.0087 ) ( 20,1 ) ( 19,1 ) ( 18,1 ) ( 17,1 ) ( 16,1 ) ( 15,1 ) ( 14,1 ) ( 13,0.0141 ) ( 12,0.0063 ) ( 11,0.0063 ) ( 10,0.0063 ) ( 9,0.0063 ) ( 8,0.0063 ) ( 7,0.0063 ) ( 6,0.0063 ) ( 5,0.0063 ) ( 4,7e-04 ) ( 3,7e-04 ) ( 2,7e-04 ) ( 1,7e-04 )
};
\addplot[rgb color={255, 195, 0}, very thick,  mark size=2pt, mark=., fill opacity=0.5, draw opacity=0.75] coordinates {
( 1,0 ) ( 2,0 ) ( 3,0 ) ( 4,0 ) ( 5,0 ) ( 6,0 ) ( 7,0 ) ( 8,0 ) ( 9,0 ) ( 10,0 ) ( 11,0 ) ( 12,0 ) ( 13,0 ) ( 14,0 ) ( 15,0.0065 ) ( 16,0.007 ) ( 17,0.0074 ) ( 18,0.0078 ) ( 19,0.0083 ) ( 20,0.0087 ) ( 20,1 ) ( 19,1 ) ( 18,1 ) ( 17,1 ) ( 16,1 ) ( 15,0.0218 ) ( 14,0.0093 ) ( 13,0.0093 ) ( 12,0.0093 ) ( 11,0.0093 ) ( 10,0.0093 ) ( 9,0.0093 ) ( 8,0.0013 ) ( 7,0.0013 ) ( 6,0.0013 ) ( 5,0.0013 ) ( 4,0.0013 ) ( 3,0.0013 ) ( 2,0.0013 ) ( 1,0.0013 )
};
\addplot[rgb color={255, 110, 0}, very thick,  mark size=2pt, mark=., fill opacity=0.5, draw opacity=0.75] coordinates {
( 1,0 ) ( 2,0 ) ( 3,0 ) ( 4,0 ) ( 5,0 ) ( 6,0 ) ( 7,0 ) ( 8,0 ) ( 9,0 ) ( 10,0 ) ( 11,0 ) ( 12,0 ) ( 13,0 ) ( 14,0 ) ( 15,0 ) ( 16,0 ) ( 17,0 ) ( 18,0 ) ( 19,0.0083 ) ( 20,0.0087 ) ( 20,1 ) ( 19,0.0827 ) ( 18,0.0276 ) ( 17,0.0276 ) ( 16,0.0083 ) ( 15,0.0083 ) ( 14,0.0083 ) ( 13,0.0083 ) ( 12,0.0083 ) ( 11,0.0083 ) ( 10,0.0083 ) ( 9,0.0083 ) ( 8,0.0083 ) ( 7,0.0083 ) ( 6,0.0083 ) ( 5,0.0083 ) ( 4,6e-04 ) ( 3,6e-04 ) ( 2,6e-04 ) ( 1,6e-04 )
};
\addplot[rgb color={255, 0, 0}, very thick,  mark size=2pt, mark=., fill opacity=0.5, draw opacity=0.75] coordinates {
( 1,0 ) ( 2,0 ) ( 3,0 ) ( 4,0 ) ( 5,0 ) ( 6,0 ) ( 7,0 ) ( 8,0 ) ( 9,0 ) ( 10,0 ) ( 11,0 ) ( 12,0 ) ( 13,0 ) ( 14,0 ) ( 15,0 ) ( 16,0 ) ( 17,0 ) ( 18,0 ) ( 19,0.0083 ) ( 20,0.0087 ) ( 20,1 ) ( 19,0.0827 ) ( 18,0.0276 ) ( 17,0.0276 ) ( 16,0.0083 ) ( 15,0.0083 ) ( 14,0.0083 ) ( 13,0.0083 ) ( 12,0.0083 ) ( 11,0.0083 ) ( 10,0.0083 ) ( 9,0.0083 ) ( 8,0.0083 ) ( 7,0.0083 ) ( 6,0.0083 ) ( 5,0.0083 ) ( 4,6e-04 ) ( 3,6e-04 ) ( 2,6e-04 ) ( 1,6e-04 )
};

\end{axis}
\end{tikzpicture}
 \caption{\revise{Regions of ordered p-value sequences, for $m=20$ and $\alpha=0.05$.
The $x$-axis gives the rank and the $y$-axis gives the ordered p-values $p_{(1)}\leq \cdots \leq p_{(m)}$.
The colored regions show ordered sequences for which $\textcSu$ rejects at least $k=3, 5, \ldots,19$ hypotheses while Su rejects none.
These regions were constructed by greedily choosing the largest $p_{(m)}, \ldots, p_{(1)}$, in that order, to achieve the minimal desired local e-values.
The black dashed line is the rejection threshold of the BH procedure, i.e., the BH procedure rejects the $r$ smallest p-values if $r$ is the largest rank such that $p_{(r)}$ is below this line.
The upper bound on the region of p-value sequences where the $\textcSu$ procedure contains p-value sequences that make 19 or more rejections is higher than the BH threshold, which means that the BH procedure makes strictly fewer rejections for those sequences of p-values.}} \label{fig:Su}
\end{figure}
 We also provide an example in which the $\cSu$ procedure rejects all hypotheses while the original Su procedure rejects only one.
\begin{example}\label{ex: closed su example}
Let $m \geq 3$. Then $\rejset^\Su_\alpha = [1]$, but $\rejset^\cSu_\alpha = [m]$ under the following event:
\[
    \pvalue_1 = \frac{ \alpha}{m\ell_\alpha}, \quad \pvalue_2 = \cdots = \pvalue_m = \frac{ \alpha}{\ell_\alpha} + \delta, \quad 0 < \delta \leq \frac{\alpha}{(m-1)\ell_\alpha}.
\]
\end{example}
\begin{proof}
That Su rejects only $[1]$ is immediate from the definition: $\pvalue_i > \alpha i / (m\ell_\alpha)$ for $i = 2, \ldots, m$. The $\textcSu$ procedure rejects $[m]$, since $\mathbf{e}_S = \alpha^{-1}$ if $1 \in S$. If $1 \notin S$, the constraint on $\delta$ ensures that $\alpha \mathbf{e}_S = \alpha / (\alpha + \ell_\alpha \delta) \geq (m-1)/m \geq |S|/m$.\end{proof}
 }
\ifarver{
\section{Example where $\textCeBH$ with calibrated p-values is more powerful than $\textcBY$}
}
{
\subsection{Example where $\textCeBH$ with calibrated p-values is more powerful than $\textcBY$}
}
\label{sec: cebh calibrated vs cby example}
We now show that neither $\cBY$ nor $\CeBH$ applied to calibrated p-values dominates the other in terms of power. It is clear from \Cref{tab:by_real_data_discoveries} that $\cBY$ can dominate $\CeBH$ with calibrated p-values in practice. Here, we give an example where $\CeBH$ with calibrated p-values dominates $\cBY$.
We compare the local e-values used in the $\cBY$ procedure 
\begin{align}
\evalue_S &= \frac{1}{|S|} \sum_{i \in S} \frac{|S| \cdot \mathbf{1}\{h_{|S|}\pvalue_i \leq \alpha\}}{\alpha \lceil |S|h_{|S|}\pvalue_i/\alpha \rceil},
\end{align}
with the local e-values used in the $\CeBH$ procedure with calibrated p-values
\begin{align}
\evalue'_S &= \frac{1}{|S|} \sum_{i \in S} \frac{m \cdot \mathbf{1}\{h_m \pvalue_i \leq \alpha\}}{\alpha \lceil mh_m \pvalue_i/\alpha \rceil}.
\end{align}

Let $m = 5$ and $|S| = m - 1$:
\[
0 < \pvalue_1 < \frac{\alpha}{mh_m}, \quad \frac{2\alpha}{|S|h_{|S|}} < \pvalue_2 < \frac{3\alpha}{mh_m}, \quad \frac{3\alpha}{|S|h_{|S|}} < \pvalue_3 < \frac{\alpha}{h_m}, \quad \alpha < \pvalue_4 < \pvalue_5 < 1.
\]

\revise{Then the largest rejection set for $\CeBH$ with calibrated p-values (using $\evalue'_S$) is $[3]$, while the largest rejection set for $\cBY$ (using $\evalue_S$) is $[2]$.} To see this, take $S = [m] \setminus \{1\}$. We have:
\begin{align}
\alpha \evalue_S &= \frac{1}{\lceil |S|h_{|S|}\pvalue_2/\alpha \rceil} + \frac{1}{\lceil |S|h_{|S|}\pvalue_3/\alpha \rceil} = \frac{1}{3} + \frac{1}{4} = \frac{7}{12} < \frac{|[3] \cap S|}{|[3]|} = \frac{2}{3}, \\
\alpha \evalue'_S &= \frac{m}{|S|} \left( \frac{1}{\lceil mh_m \pvalue_2/\alpha \rceil} + \frac{1}{\lceil mh_m \pvalue_3/\alpha \rceil} \right) = \frac{5}{4} \left( \frac{1}{3} + \frac{1}{5} \right) = \frac{2}{3} \geq \frac{|[3] \cap S|}{|[3]|} = \frac{2}{3}.
\end{align}

It is tedious but straightforward to verify that for all other subsets $S \neq [m] \setminus \{1\}$ we have $\alpha \evalue'_S$ larger or equal than $|[3] \cap S|/|[3]|$.

\ifarver{}{
\section{Benjamini-Hochberg and its adaptive variants}\label{sec: bh and variants}

The seminal FDR control method by \citet{benjamini_controlling_false_1995} remains the most popular by far. It rejects the following discovery set:
$$
\rejset_\alpha^{\BH} \coloneqq \rejset^{(\mathbf{r}_\alpha^\BH)}, \text{ where }\mathbf{r}_\alpha^\BH \coloneqq \max\{r \in [m]: \pvalue_{(r)} \leq \alpha r / m \},$$
\revise{where $\mathbf{r}_\alpha^\BH$ is taken as 0 if the maximum does not exist. However, it has some puzzling aspects. First, BH is not guaranteed under arbitrary dependence. Its classical finite-sample validity result holds under PRDS \eqref{eq: PRDS}, which constrains the joint distribution of the p-values of true and false hypotheses, although other sufficient conditions are known \citep{goeman_uniform_improvement_2026}. Such an assumption is unusual among multiple testing procedures, which generally make assumptions only on the distribution of true hypotheses, and \citet{su2018fdr} made a forceful argument against it. Second, \BH\ is known to be inadmissible under PRDS. The minimally adaptive BH (MABH) procedure constructed by \citet{solari2017minimally} gives a small uniform improvement; more recently, \citet{goeman_uniform_improvement_2026} constructed a closed BH procedure that uniformly improves BH under PRDS and a weaker sufficient condition. MABH rejects}
\begin{align}
    \mathbf{r}_\alpha^{\MABH} \coloneqq \max\left\{i \in [m]: i^{-1}\cdot \pvalue_{(i)} \leq \frac{\alpha}{m - 1}\right\}, \rejset_\alpha^{\MABH} \coloneqq \begin{cases}
        \emptyset & \text{if }\rejset_\alpha^\BH = \emptyset;\\
        \rejset^{(\mathbf{r}_\alpha^{\MABH})} & \text{otherwise.}
    \end{cases}
\end{align}
For the \BH\ procedure, we can define the following local e-value
\begin{equation} \label{eq: BH local e-value}
\evalue_S = \frac{1}{ |S|} \sum_{i \in S} \widetilde{\evalue}_i \text{ where } \widetilde{\evalue}_i =  \frac{m}{\alpha |\rejset_\alpha^\BH|} \mathbf{1}\Big\{\pvalue_i \leq \frac{\alpha |\rejset_\alpha^\BH|}{m} \Big\}.
\end{equation}
\revise{\citet{li_note_e-values_2025} showed that $\widetilde\evalue_i$ is a compound e-value under PRDS and that \eBH\ applied to these compound e-values reconstructs \BH, but they did not directly show it is an e-value as well. The validity of $\widetilde\evalue_i$, and hence of $\evalue_S$, can also be seen directly from the dependency control condition of \citet{blanchard_two_simple_2008}. For each null $i$, take $U=\pvalue_i$, $V=|\rejset_\alpha^\BH|$, and the linear shape function $\beta(v)=v$. Proposition 3.6 of \citet{blanchard_two_simple_2008} gives
\[
\expect_\pdist\left[\frac{\mathbf{1}\{\pvalue_i\leq \alpha |\rejset_\alpha^\BH|/m\}}{|\rejset_\alpha^\BH|}\right]\leq \frac{\alpha}{m}
\]
under PRDS, with the ratio defined as zero when $|\rejset_\alpha^\BH|=0$. Multiplying by $m/\alpha$ proves that $\widetilde\evalue_i$ is an e-value. Averaging over $i\in S$ then proves that $\evalue_S$ is an e-value for $H_S$.}
\revise{We then define $\rejcol_\alpha^{\cBH}$ as the e-Closure procedure resulting from this e-collection.}

The e-collection derived from \eqref{eq: BH local e-value} depends on all $m$ p-values through $|\rejset_\alpha^\BH|$ instead of only on p-values pertaining to hypotheses $S$ (as is the case for the e-collections of $\cBY$ and $\cSu$). Thus, it is clear that an assumption on the joint distribution of p-values of true and false hypotheses is needed.

We can also define the corresponding local e-values for the minimally adaptive BH procedure as follows.
\[
\evalue_{[m]} = 1\{\rejset^\BH_\alpha \neq \emptyset\}/\alpha, \qquad
\evalue_S = \frac{m-1}{ |S|} \sum_{i \in S} \frac{1}{\alpha |\rejset_\alpha^\MABH|} \mathbf{1}\Big\{\pvalue_i \leq \frac{\alpha |\rejset_\alpha^\MABH|}{m-1} \Big\}\text{ for }S \subset [m]\label{eq: mabh local evalue}
\]
\begin{proposition} \label{thm: e for MABH}
    \revise{Under PRDS, $\ecol = (\evalue_S)_{S\in 2^{[m]}}$ with e-values defined in \eqref{eq: mabh local evalue} is a valid e-collection.}
\end{proposition}
We define $\rejcol_\alpha^{\cMABH}$ as the e-Closure procedure resulting from the e-collection comprising these local e-values.

\revise{This approach generalizes to other adaptive versions of the BH procedure. Unlike MABH, the validity claim below assumes that the p-values $(\pvalue_1, \dots, \pvalue_m)$ are independent and that the estimator satisfies the stated monotonicity and moment conditions.} For any $i \in N_\pdist$, define
\begin{align}
\hat{\boldsymbol{\pi}}_0^i \coloneqq \hat\pi(\pvalue_1, \dots, \pvalue_{i - 1}, 0, \pvalue_{i + 1}, \dots, \pvalue_m), \qquad \expect_\pdist\left(\frac{1}{\hat{\boldsymbol{\pi}}_0^i}\right) \leq \frac{m}{|N_\pdist|}, \label{eq: pi0 condition}
\end{align}
where $\hat\pi$ is an estimator of the null proportion and is also required to be coordinatewise nondecreasing.
Let $\hat{\boldsymbol{\pi}}_0 \coloneqq \max_{i \in [m]} \hat{\boldsymbol{\pi}}_0^i$. Now, we define the adaptive BH discovery set as
\begin{align}
    \mathbf{r}_\alpha^{\adaBH} \coloneqq \max \left(\left\{i \in [m]: i^{-1} \cdot \pvalue_{(i)} \leq \frac{\alpha}{\hat{\boldsymbol{\pi}}_0m}\right\} \cup \{0\}\right), \qquad \rejset_\alpha^{\adaBH} \coloneqq \rejset^{(\mathbf{r}_\alpha^\adaBH)}.
\end{align}
Storey's procedure \citep{storey_direct_approach_2002a, storey2004strong, benjamini2006adaptive}, which sets $\hat\pi(p_1, \dots, p_m) = (1 + \sum_{i=1}^m \mathbf{1}\{p_i > \lambda\}) / ((1 - \lambda)m)$ for a fixed tuning parameter $\lambda \in (0, 1)$, is an instance of this adaptive BH procedure. Methods of this type have been well studied because they can estimate the null proportion and increase power when the number of non-null hypotheses is substantial \citep{benjamini2006adaptive,sarkar_methods_controlling_2008,blanchard2009adaptive,dohler_unified_class_2023,gao_adaptive_null_2025}.

We can define the e-collection for improving the adaptive BH procedure. We now formulate the compound e-values and resulting local e-values that correspond to an adaptive procedure.
\begin{align}
    \mathbf{e}_S = \frac{1}{m}\sum_{i \in S}\widetilde\evalue_i \text{ where }\widetilde\evalue_i = \frac{m}{\alpha (|\rejset_\alpha^{\adaBH}| \vee 1)} \mathbf{1}\Big\{\pvalue_i \leq \frac{\alpha (|\rejset_\alpha^{\adaBH}| \vee 1)}{m \hat{\boldsymbol{\pi}}_0} \Big\}, \label{eq: adaBH local e-value}
\end{align}
\revise{The validity of these compound e-values follows directly from the argument used to prove Theorem 11 of \citet{blanchard2009adaptive}. Write $G(\mathbf{p})=1/\hat{\boldsymbol{\pi}}_0(\mathbf{p})$. For each null $i$, condition on the other p-values and apply their Lemma 27 with $U=\pvalue_i$, $g(U)=|\rejset_\alpha^{\adaBH}(\pvalue_1,\ldots,\pvalue_{i-1},U,\pvalue_{i+1},\ldots,\pvalue_m)|\vee 1$, and $c=\alpha G(\mathbf{p}_{0,i})/m$, where $\mathbf{p}_{0,i}$ replaces $\pvalue_i$ by zero. Monotonicity of $G$ in each of its arguments yields
\[
\expect_\pdist[\widetilde\evalue_i]\leq \expect_\pdist[G(\mathbf{p}_{0,i})]\leq \frac{m}{|N_\pdist|}.
\]
Consequently, under $H_S$, $\expect_\pdist[\mathbf{e}_S]\leq |S|/|N_\pdist|\leq 1$, proving that the local e-values are valid. Equivalently, Theorem 11 establishes FDR control for this adaptive BH procedure, after which \eqref{eq: universal compound evalue} gives the same compound e-values.}
The adaptive closed BH procedure, $\cadaBH$, is then defined as $\rejcol^{\cadaBH}_\alpha \coloneqq \Rcal^\FDR_\alpha(\mathbf{E})$ for the e-collection based on \eqref{eq: adaBH local e-value}. \revise{The theorem below shows that this e-collection exactly reconstructs adaptive BH, with no additional nonempty discovery sets.}
\revise{Unfortunately, the closed \BH\ procedure $\cBH$ resulting from \eqref{eq: BH local e-value} never gives a larger rejected set. Moreover, it gives no nontrivial simultaneity unless $\rejset_\alpha^{\BH}= [m]$. On that full-rejection branch, the simultaneous collection contains every singleton; by \Cref{thm: FWER}, the global selector formed by taking the union of all available singletons controls FWER. This is an unconditional guarantee for that selector, not a claim that BH conditionally controls FWER given the event $\rejset_\alpha^{\BH}=[m]$. In contrast, the corresponding closed MABH and adaptive BH procedures have no such exception: they exactly reconstruct their original procedures even when all hypotheses are rejected.} \begin{theorem}\label{thm: BH-no-simultaneity}
\revise{Assume PRDS for the BH and MABH claims. For the adaptive BH claim, assume independent p-values and that $\hat{\boldsymbol\pi}_0$ satisfies \eqref{eq: pi0 condition} and is coordinatewise nondecreasing. For the BH procedure, if $\rejset_\alpha^\BH\neq [m]$, then $\rejcol_\alpha^{\cBH}=\{\emptyset,\rejset_\alpha^\BH\}$. Otherwise, if $\rejset_\alpha^\BH=[m]$, then $\rejcol_\alpha^{\cBH}=2^{[m]}$. For minimally adaptive BH and adaptive BH, we always have $\rejcol_\alpha^{\cMABH}=\{\emptyset,\rejset_\alpha^\MABH\}$ and $\rejcol_\alpha^{\cadaBH}=\{\emptyset,\rejset_\alpha^\adaBH\}$.}
\end{theorem}
\begin{proof}
    First, consider the case where $\rejset = \rejset^\BH_\alpha$ and $\rejcol = \rejcol^\cBH_\alpha$. We can rewrite each local e-value as follows:
\begin{eqnarray*}
\mathbf{e}_S &=&  \frac{m |\rejset \cap S|}{\alpha |S|(|\rejset| \vee 1)}.
\end{eqnarray*}
If
$\rejset =[m]$, then for all $S$, $ \mathbf{e}_S = 1/ \alpha$, which is always greater than or equal to $\textnormal{FDP}_S(R)/\alpha$ for every $R$ and $S$, thus  $\rejcol =2^{[m]}$.

\revise{If $\rejset=\emptyset$, all the local e-values are zero and the conclusion is immediate.} Suppose $\rejset \neq \emptyset$ and $\rejset \neq [m]$.
We show that for every non-empty $R \neq \rejset$, there exists an $S$ such that
\begin{eqnarray*}
\frac{m|\rejset \cap S|}{ |S||\rejset|}&<& \frac{|R \cap S|}{ |R|}.
\end{eqnarray*}
Let $R \subset \rejset$. Then
\begin{align}
    0 < (m - |\rejset|)(|\rejset|- |R|).
\end{align}
This implies that
\begin{align}
    m|\rejset| + |R||\rejset| - |\rejset|^2 = (|R| + m - |\rejset|)|\rejset| > m|R|.
\end{align}
Now, we take $S = R \cup ([m] \setminus \rejset)$.
\begin{align}
    \frac{m|\rejset \cap S|}{|S||\rejset|} = \frac{m |R|}{(|R|+m-|\rejset|) |\rejset|}  < 1 = \frac{|R \cap S|}{|R|}.
\end{align}

Let $R \not\subseteq \rejset$, that is, $R$ contains some  $i \notin \rejset$, and
take $S=[m] \setminus \rejset$. Then,
\begin{eqnarray*}
    \frac{m|\rejset \cap S|}{|S||\rejset|} = 0<   \frac{|\{i\}|}{|R|} \leq  \frac{|R \cap ( [m] \setminus \rejset )|}{|R|}.\\
\end{eqnarray*}
Thus, we have shown the desired result for $\rejset_\alpha^\BH$ and $\rejcol_\alpha^\cBH$.

\revise{Now let $\rejset=\rejset_\alpha^\MABH$ and $\rejcol=\rejcol_\alpha^{\cMABH}$. The case $\rejset=\emptyset$ is immediate, so suppose $\rejset\neq\emptyset$. The original set $\rejset$ belongs to $\rejcol$: for $S=[m]$ this follows from $\evalue_{[m]}=1/\alpha$, while for every proper $S$ it follows from $(m-1)/|S|\geq1$. Consider any nonempty $R\neq\rejset$. If $|R|>|\rejset|$, choose $j\in\rejset$ and set $S=[m]\setminus\{j\}$. Then
\[
\alpha\evalue_S=\frac{|\rejset|-1}{|\rejset|}
<1-\frac{1}{|R|}
\leq\frac{|R\cap S|}{|R|}.
\]
If instead $|R|\leq|\rejset|$, choose $i\in\rejset\setminus R$ and again set $S=[m]\setminus\{i\}$. In this case,
\[
\alpha\evalue_S=\frac{|\rejset|-1}{|\rejset|}<1=\frac{|R\cap S|}{|R|}.
\]
Thus, for every nonempty $R\neq\rejset$ there exists some $S$ s.t. $\alpha \evalue_S < \FDP_S(R)$, including when $\rejset=[m]$, and hence $\rejcol_\alpha^{\cMABH}=\{\emptyset,\rejset_\alpha^\MABH\}$.}

\revise{Finally, let $\rejset=\rejset_\alpha^{\adaBH}$. The indicator in \eqref{eq: adaBH local e-value} equals one precisely for the hypotheses in $\rejset$. 
\[
\alpha\evalue_S=\frac{|\rejset\cap S|}{|\rejset|\vee1}.
\]
Otherwise, consider a nonempty $R\neq\rejset$. If $R\not\subseteq\rejset$, choose $i\in R\setminus\rejset$ and $S=\{i\}$, yielding $\alpha\evalue_S=0<|R\cap S|/|R|$. If $R\subsetneq\rejset$, take $S=R$, yielding $\alpha\evalue_S=|R|/|\rejset|<1=|R\cap S|/|R|$. Therefore $\rejcol_\alpha^{\cadaBH}=\{\emptyset,\rejset_\alpha^\adaBH\}$, including when $\rejset_\alpha^\adaBH=[m]$.}
\end{proof}

\revise{The particular e-collections considered in this section therefore do not substantially improve \BH\ or its adaptive variants. \citet{goeman_uniform_improvement_2026}, however, develops a different e-Closure construction that does uniformly improve over \BH\ in ways that are different from minimally adaptive \BH.}
 }

\section{Deferred proofs}\label{sec: deferred proofs}

\subsection{Proof of \Cref{lemma:simult_posthoc}}
\revise{Order $2^{[m]}$ lexicographically and let $\mathrm{R}^*(x)$ be the lexicographically first maximizer of $\textnormal{f}_{N_{\mathrm{P}}}(R)$ over $R\in\mathcal{R}(x)$. Because the action space is finite and $\mathcal{R}$ and the finitely many losses are measurable, $\mathrm{R}^*:S\to2^{[m]}$ is an $\mathcal A$-measurable selector. It attains the pointwise maximum, proving the ``$\leq$'' direction of \eqref{eq:simult_select}; the reverse direction holds because every admissible selector takes values in $\mathcal R(\mathbf X)$ almost surely.}

\subsection{Proof of \Cref{ex: eBH+}}\label{sec: eBH+ example proof}
It is tedious but straightforward to show that $(2m-2i+1)/m < m/i$ for $i=1,\ldots, m$, so that the example is not void. It follows immediately from (\ref{eq: eBH}) that eBH rejects nothing. Choose $R = [m]$ and any non-empty $S \subseteq [m]$ with $s=|S|$. Then
\[
\sum_{i \in S} \frac{\revise{\mathbf{e}_{(i)}}}s \geq \sum_{i = m-s+1}^m \frac{\revise{\mathbf{e}_{(i)}}}s \geq \sum_{i = m-s+1}^m \frac{2m-2i+1}{ms\alpha} =
\frac{2m-2 \frac{m + (m-s+1)}2 +1}{m\alpha} = \frac{s}{m\alpha} = \frac{|R \cap S|}{|R|\alpha}.
\]
\subsection{Proof of \Cref{thm: compound ebh equal eclosure}}\label{sec: compound ebh equal closure thm proof}
    To show $\widetilde{\rejcol}^\SC_\alpha \subseteq \widetilde{\rejcol}^\CeBH_\alpha$, consider any $R \in \widetilde{\rejcol}^\SC_\alpha$. \revise{The closure inequality is immediate when $S\cap R=\emptyset$; otherwise,}
\begin{align}
    \evalue_S = \frac{1}{m}\sum\limits_{i \in S} \widetilde{\evalue}_i
        \geq
        \frac{|S\cap R|}{m} \min_{i \in S \cap R} \widetilde{\evalue}_i
        \geq
        \frac{|S\cap R|}{m}\min_{i \in R} \widetilde{\evalue}_i
        \geq
        \frac{|S\cap R|}{m} \cdot \frac{m}{\alpha|R|}
        \geq
        \frac{\FDP_S(R)}{\alpha}.
    \label{eq:c-lb-ineq}
\end{align}

Further, for any $R \in \widetilde{\rejcol}^\CeBH_\alpha$ and any $i \in R$, we have the following:
\begin{align}
    \frac{\widetilde{\evalue}_i}{m} = \evalue_{\{i\}} \geq \frac{\FDP_{\{i\}}(R)}{\alpha} = \frac{1}{\alpha |R|}.
\end{align} Hence, we get that $\widetilde{\evalue_i} \geq m / (\alpha|R|)$ for all $i \in R$, which implies $R \in \widetilde{\rejcol}^\SC_\alpha$ and $\widetilde{\rejcol}^\CeBH_\alpha \subseteq \widetilde{\rejcol}^\SC_\alpha$. Thus, we have shown our desired result.

\subsection{Proof of \Cref{prop: compound e-values from e-collections}}\label{sec: compound e-values from e-collections proof}
The validity of the compound e-values follows from the definition of an e-collection.
We note that the above compound e-values in \eqref{eq: universal compound evalue} satisfy \eqref{eq: intersection bound}. This is because, for any $S \in 2^{[m]}$ where $|S \cap \rejset| > 0$, we see that
\begin{align}
    \revise{\sum_{i \in S} \widetilde{\evalue}_i = \frac{m|S \cap \rejset|}{\alpha|\rejset|} \cdot \mathbf{1}\{\alpha \evalue_S \geq \FDP_S(\rejset)\} \leq \frac{m|S \cap \rejset|}{|\rejset|}\cdot \frac{\evalue_S}{\FDP_S(\rejset)} = m \evalue_S.}
\end{align} The inequalities follow from the definition of the e-Closure Principle and by the fact that $\mathbf{1}\{x \geq 1\} \leq x$ for any $x \geq 0$. The case where $|S \cap \rejset| = 0$ is also satisfied trivially.

\subsection{Proof of \Cref{ex: closed by example}}\label{sec: closed by example proof}
It is immediate from the definition that the BY rejection set is empty.
\revise{Now, choose any non-empty $S \subseteq [m]$. We have \( \mathbf{1}\{ h_{|S|} \mathbf{p}_{(i)} \leq \alpha \} = 1 \) for all \(i\in[m]\) when $S \neq [m]$. For $i\leq m-1$ this follows from the displayed bounds in the example; for $i=m$, it follows separately from $\pvalue_{(m)}\leq\alpha/h_{m-1}\leq\alpha/h_{|S|}$. Thus, for \( S \neq[m] \),}
\[
    \revise{\left\lceil \frac{|S| h_{|S|} \mathbf{p}_{(i)}}{\alpha} \right\rceil\leq m \quad \text{for all } i \in [m].}
\]
Hence, for all \( S \neq [m] \), we obtain
\[
    \revise{\alpha \mathbf{e}_S \geq \sum_{i \in S } \left\lceil \frac{|S| h_{|S|} \mathbf{p}_{(i)}}{\alpha} \right\rceil^{-1} \geq \frac{|S|}{m} = \frac{|S \cap [m]|}{|[m]|}.}
\]
In case \( S = [m] \), we have
\[
    \revise{\alpha \mathbf{e}_{[m]} = \sum_{i = 1}^{m-1} \left\lceil \frac{m h_m \mathbf{p}_{(i)}}{\alpha} \right\rceil^{-1} \geq \sum_{i = 2}^{m} \frac{1}{i} \geq 1.}
\]
Thus, this establishes that \(\rejset_\alpha^\cBY = [m]\).

\subsection{Proof of \Cref{lemma: su-calibrator}}\label{sec:su-calibrator-proof}
The Lambert W function has the property that $\exp(w(x)) = x/w(x)$. Therefore,
\[
\log\Big(\frac{e\ell_\alpha}{\alpha}\Big)
= \log\left(-\frac{e}{\alpha}w\left(-\frac{\alpha}{e}\right)\right)
= \log\left(\frac{w\left(-\alpha / e\right)}{-\alpha / e}\right)
=-w\left(-\frac{\alpha}{e}\right) = \ell_\alpha.
\]
\revise{Equivalently, $\ell_\alpha=1+\log(\ell_\alpha/\alpha)$.}
Let $\mathbf{p}$ be a p-value and $\mathbf{e} = e(\mathbf{p})$.
\revise{We can see that the integral of $e$ over $[0, 1]$ is}
\[
    \revise{\int_0^1 e(p)\,dp
    =
    \int_0^{\alpha / \ell_\alpha} \frac1\alpha\, dp + \int_{\alpha /  \ell_\alpha}^1 \frac{1}{p\ell_\alpha}\, dp
    =
    \frac{1+\log(\ell_\alpha/\alpha)}{\ell_\alpha}
    = 1.}
\]
\revise{Because $e$ is decreasing, the usual superuniformity argument gives $\expect[e(\mathbf{p})] \leq \int_0^1 e(p)\,dp = 1$ for every p-value $\mathbf{p}$. Thus, $e$ is a calibrator and $\mathbf{e}$ is an e-value.}

\subsection{Proof of \Cref{thm: su-improvement}}
\label{sec:su-improvement-proof}
Let $R$ be a self-consistent discovery set at level $\alpha / \ell_\alpha$, for which $\rejset^\Su_\alpha$ is the largest such set. \revise{If $S\cap R=\emptyset$, the closure inequality is immediate. Otherwise,}

\revise{we have $\mathbf{p}_i \leq |R|\alpha/(m\ell_\alpha)$ for all $i \in R$. Choose any $S \in 2^{[m]}$ such that $S \cap R \neq \emptyset$. Then}
\[
\mathbf{p}_S \leq \mathbf{p}_{|R \cap S|:S}\frac{|S|}{|R \cap S|}
\leq \frac{|R|\alpha}{m\ell_\alpha} \frac{|S|}{|R \cap S|}
\leq \frac{|R|\alpha}{|R \cap S|\ell_\alpha}.
\]
It follows that $\alpha\mathbf{e}_S \geq |R\cap S|/|R|$ if $S \cap R \neq \emptyset$ and $\mathbf{p}_S \geq \alpha / \ell_\alpha$. The same is trivially true if either $S \cap R = \emptyset$ or $\mathbf{p}_S < \alpha / \ell_\alpha$. Therefore, we have $R \in \rejcol^\cSu_\alpha$.

\revise{To show that the improvement region is nonempty for every $m>1$, write $q=\alpha/\ell_\alpha\in(0,1)$ and take ordered p-values}
\[
\revise{\pvalue_{(1)}=\frac qm,\qquad
\frac{2q}{m}<\pvalue_{(2)}\leq\min\left\{\frac{2q}{m-1},1\right\},
\qquad \pvalue_{(3)}=\cdots=\pvalue_{(m)}=1.
}
\]
\revise{The Su procedure rejects only the first hypothesis. For $R=\{1,2\}$, the closure inequality is trivial when $S\cap R=\emptyset$; if $1\in S$, the Simes p-value is at most $q|S|/m$, giving $\evalue_S\geq\alpha^{-1}$; and if $S\cap R=\{2\}$, the bound on $\pvalue_{(2)}$ gives $\evalue_S\geq(2\alpha)^{-1}$. Hence $\{1,2\}\in\Rcal_\alpha^\FDR(\mathbf E)$.}

\subsection{Proof of \Cref{thm: simultaneity improvement}}
\label{sec: simultaneity improvement proof}

    Let $\mathbf{e}_S$ be given as in \eqref{eq: knockoff local e-value} and $R\neq \emptyset$ be any set given by \eqref{eq: knockoff_char}. Then,
    \begin{align}
        \mathbf{e}_S\geq \frac{|S\cap R|}{1 + \sum_{j \in S} \mathbf{1}\{\mathbf{w}_j \leq -\mathbf{c}_\alpha^\Kn\}} \geq \frac{|S\cap R|}{|R|\alpha},
    \end{align}
    where the first inequality follows from the first property in \eqref{eq: knockoff_char} and the second inequality follows from the second property.
     It remains to show that for any $R\subseteq [m]$ that does not satisfy the properties specified in \eqref{eq: knockoff_char} it holds that $R\notin \rejcol_\alpha^{\cKn}$. Let such an $R$ be given. We have $\mathbf{w}_i< \mathbf{c}_\alpha^\Kn$ for some $i\in R$ or $\frac{|R|}{1 + \sum_{j \in [m]} \mathbf{1}\{\mathbf{w}_j \leq -\mathbf{c}_\alpha^\Kn\}}< 1/\alpha$. In the former case, we have \revise{$\mathbf{e}_{\{i\}}=0<1/(\alpha|R|)$.} In the latter case, consider $S=\{i\in [m]:\mathbf{w}_i \leq -\mathbf{c}_\alpha^\Kn \text{ or } i\in R\}$. Then, we have that $\mathbf{e}_{S}\leq \frac{|R|}{1 + \sum_{j \in [m]} \mathbf{1}\{\mathbf{w}_j \leq -\mathbf{c}_\alpha^\Kn\}}< 1/\alpha$.

\subsection{Proof of \Cref{thm: FWER}}\label{sec: FWER proof}
Let $\boldsymbol{\mathcal{S}} = \{S \in \boldsymbol{\mathcal{R}}\colon |S|=1\}$. Then, for every $\mathrm{P} \in M$,
\[
\mathrm{P}(\mathbf{R} \cap N_\mathrm{P} \neq \emptyset) =
\expect_\mathrm{P}\big( \max_{i \in \mathbf{R}} 1\{i \in N_\mathrm{P}\}\big) =
\expect_\mathrm{P}\bigg( \max_{R \in \boldsymbol{\mathcal{S}}} \frac{|R \cap N_\mathrm{P}|}{|R| \vee 1} \bigg)  \leq
\expect_\mathrm{P}\bigg( \max_{R \in \boldsymbol{\mathcal{R}}} \frac{|R \cap N_\mathrm{P}|}{|R| \vee 1} \bigg)  \leq \alpha.
\]
This gets us our desired result.

\subsection{Proof of \Cref{thm: multiloss}}
\label{sec: multiloss proof}
The validity of $(\Rcal^{\ER_\F}_\alpha(\mathbf{E}))_{\F \in \mathcal{F}}$ is due to
\begin{align}
\expect_{\pdist}\left[\sup_{\F\in \mathcal{F}} \max_{R \in \Rcal_\alpha^{\ER_\F}(\mathbf{E})}\  \F_{N_\pdist}(R)\right]
\leq  \expect_{\pdist}\left[\alpha \evalue_{N_\pdist}\right] \leq \alpha.
\end{align}

Given a collection of discovery set collections $(\rejcol^{\ER_\F})_{\F \in \mathcal{F}}$, we can define
\begin{align}
    \evalue_S  = \frac{\sup_{\F \in \mathcal{F}}\max_{R \in \rejcol^{\ER_\F}} \F_S(R)}{\alpha}.
\end{align} As a result, $(\evalue_S)_{S \in 2^{[m]}}$ is an e-collection by virtue of $(\rejcol^{\ER_\F})_{\F \in \mathcal{F}}$ satisfying \eqref{eq:sup-fcond}.

\subsection{Proof of \Cref{thm:simult_fdp}}
\label{sec: simult fdp proof}
The simultaneous true discovery guarantee of $\mathrm{d}_\mathbf{E}$ follows by the simultaneous control of the e-Closure Principle over multiple error functions and rejection sets (\Cref{thm: multiloss}), since
$$\expect_{\pdist}\left[\sup_{d\in [m]_0} \max_{R \in \rejcol_{\F^d}}\  \F_{N_\pdist}^d(R)\right] = \pdist\left(\exists d\in [m]_0, R \in \rejcol_{\F^d}: d>|R\setminus N_{\pdist}|\right).$$ Furthermore, if $\mathbf{e}_S = \boldsymbol{\phi}_S/\alpha$ for $S\in 2^{[m]}$ and some family of local tests $\boldsymbol{\Phi}$, then, for an arbitrary $R\in 2^{[m]}$,
\begin{align}
    \mathrm{d}_\mathbf{E}(R)&=\max\left\{d\in [m]_0: \boldsymbol{\phi}_S \geq 1\{|R\setminus S|<d\} \quad \forall S\in 2^{[m]}\right\} \\
    &= \max\left\{d\in [m]_0: |R\setminus S|\geq d \text{ for all } S\in 2^{[m]} \text{ with } \boldsymbol{\phi}_S=0\right\}\\
    &= \min\{|R\setminus S|:  S \in 2^{[m]}, \boldsymbol{\phi}_S=0\} \\
    &= \mathrm{d}_{\boldsymbol{\Phi}}(R).
\end{align} Thus, we have our desired result.

\subsection{Proof of \Cref{thm: e for MABH}}
\label{sec: e for MABH proof}

If $\rejset^\BH_\alpha = \emptyset$, then also $\rejset^\MABH_\alpha = \emptyset \in \rejcol_\alpha^\FDR(\ecol)$, since the empty set is always part of any $\rejcol_\alpha^\FDR(\ecol)$.

Suppose $\rejset^\BH_\alpha \neq \emptyset$. Then $\evalue_{[m]} = 1/\alpha \geq \FDP_{[m]}(\rejset^\MABH_\alpha)/\alpha$, and for any non-empty $S \neq [m]$, we have
\[
\evalue_S = \frac{(m-1) |\rejset_\alpha^{\MABH} \cap S|}{\alpha |S|(|\rejset_\alpha^{\MABH}| \vee 1)} \geq \frac{\FDP_{S}(\rejset^\MABH_\alpha)}{\alpha}.
\]
Thus, we have shown that $\rejset^\MABH_\alpha \in \rejcol_\alpha^\FDR(\ecol)$.

$\evalue_{[m]}$ is a valid e-value for $H_{[m]}$ since $\rejset_\alpha^\BH$ is a FDR controlling procedure at level $\alpha$. The other $\evalue_S$ are valid by the superuniformity lemma of \citet{ramdas_unified_treatment_2019}.

Thus, we have shown our desired result.

\end{document}